\documentclass[12pt]{article}
\usepackage{amsmath}
\usepackage{amsthm}
\usepackage{amssymb}
\usepackage{amsfonts}
\usepackage{epic}
\usepackage{eepic}
 \usepackage{times}
\usepackage[matrix,arrow]{xy}
\usepackage{macros}
\usepackage{epsfig}


\textheight 230mm
\textwidth 160mm
\oddsidemargin 0mm
\evensidemargin 0mm
\topmargin 0mm
\headheight 12pt \headsep 4mm
\sloppy
\flushbottom
\parindent0em
\parskip1ex
\leftmargini 2em
\leftmarginv .5em
\leftmarginvi .5em
\textfloatsep 3mm


\theoremstyle{plain}

\newtheorem{thm}{Theorem}
\newtheorem{propn}{Proposition}
\newtheorem{lem}{Lemma}

\newtheorem{defn}{Definition}

\theoremstyle{remark}


\newcommand{\sst}{\scriptscriptstyle}


\renewcommand{\1}{\one}
\renewcommand{\2}{\two}

\newcommand{\beq}{\begin{equation}}
\newcommand{\eeq}{\end{equation}}

\newcommand{\pa}{\partial}
\newcommand{\ot}{\otimes}
\newcommand{\ra}{\to}

\newcommand{\SRN}{{\rm N}}


\newcommand{\de}{\delta}

\newcommand{\la}{\lambda}

\newcommand{\bz}{\bar{z}}

\newcommand{\CA}{{\mathcal A}}
\newcommand{\CB}{{\mathcal B}}

\newcommand{\CC}{{\mathcal C}}
\newcommand{\CD}{{\mathcal D}}

\newcommand{\CF}{{\mathcal F}}

\newcommand{\CL}{{\mathcal L}}
\newcommand{\CM}{{\mathcal M}}

\newcommand{\CO}{{\mathcal O}}

\newcommand{\CR}{{\mathcal R}}

\newcommand{\CW}{{\mathcal W}}

\newcommand{\SA}{{\mathsf A}}
\newcommand{\SB}{{\mathsf B}}
\newcommand{\SC}{{\mathsf C}}
\newcommand{\SD}{{\mathsf D}}

\newcommand{\SM}{{\mathsf M}}

\newcommand{\SO}{{\mathsf O}}

\newcommand{\SQ}{{\mathsf Q}}

\newcommand{\ST}{{\mathsf T}}

\newcommand{\su}{{\mathsf u}}
\newcommand{\sv}{{\mathsf v}}

\newcommand{\one}{{\mathfrak 1}}
\newcommand{\two}{{\mathfrak 2}}

\newcommand{\BC}{{\mathbb C}}

\newcommand{\BS}{{\mathbb S}}

\newcommand{\BZ}{{\mathbb Z}}



\newcommand{\srn}{{\sst\rm N}}
\newcommand{\SRM}{{\rm M}}
\newcommand{\srm}{{\sst\rm M}}



\newcommand{\rf}[1]{(\ref{#1})}
\renewcommand{\bz}{{\mathbf z}}

\newcommand{\aufz}
{\begin{list}{$\bullet$}{\topsep0cm \itemsep0cm \parsep0cm}}
\newcommand{\eaufz}{\end{list}}

\setcounter{tocdepth}{2}

\begin{document}
\title{\begin{center}Completeness of Bethe Ansatz by Sklyanin SOV for Cyclic Representations of Integrable Quantum Models \end{center}}

\vspace{2cm}

\author{\hspace{0.2cm} G. Niccoli$^{(1)}$}

\address{\hspace{0.2cm} $^{(1)}$ YITP, Stony Brook University, New York 11794-3840, USA\\ \hspace{0.8cm} email: niccoli@max2.physics.sunysb.edu }

\maketitle

{\vspace{-8cm} \tt {YITP-SB-11-1}}

\vspace{8cm}

{\bf Abstract}
\\
\begin{quotation}
\hspace{-0.78cm} In \cite{NT} an integrable quantum model was introduced and a class of its cyclic representations was proven to define lattice regularizations of the Sine-Gordon model. Here, we analyze general cyclic representations of this integrable quantum model by extending the spectrum construction introduced in \cite{GN10} in the framework of the Separation of Variables (SOV) of Sklyanin. We show that as in \cite{NT} also for general representations, the transfer matrix spectrum (eigenvalues and eigenstates) is completely characterized in terms of polynomial solutions of an associated functional Baxter equation. Moreover, we prove that the method here developed has two fundamental built-in features: i) the completeness of the set of the transfer matrix eigenstates constructed from the solutions of the associated Bethe ansatz equations, ii) the existence and complete characterization of the Baxter $\SQ$-operator.
\end{quotation}

\newpage

\tableofcontents

\newpage

\section{Introduction}
In \cite{NT} an integrable quantum model was introduced\footnote{From here on we will refer to it as SG model.} and a class of its cyclic representations was proven to define lattice regularizations of the Sine-Gordon model\footnote{This lattice regularization goes back to \cite{FST,TTF83,IK} and it is related to formulations which have more recently been studied in  \cite{FV94,BBR,Ba08}.} in a given sector of the quantum theory. Main results of that paper are
the simplicity and the completeness of the characterization of the transfer matrix spectrum. 

Here, some comments are probably in order to better point out  the relevance of these results.

{\bf Integrability}: It is of common use to call integrable a quantum model which admits a one parameter family of
commuting conserved charges. This is the case in the framework of the quantum inverse scattering method  \cite{FT79, KS79, FST} where the
transfer matrix defines the family of commuting conserved charges. Anyhow, it is worth remarking that
a more precise definition of quantum integrability requires that this family of commuting charges is a complete set of observables; i.e. its spectrum has to be simple (non-degenerate). This last statement has to be independently proven as it was done in \cite{NT} for the class of cyclic representations there analyzed.

The first fundamental task to solve for a given integrable quantum model is the exact solution of its spectral
problem, i.e. the determination of the eigenvalues and the simultaneous eigenstates of the complete set of its
commuting conserved charges.

{\bf Completeness of Bethe Ansatz}: In the framework of the Bethe ansatz several methods has been
introduced to analyze this spectral problem which lead to the characterization of the eigenvalues (and for some
of those to the eigenstates) in terms of solutions to an associated system of {\it Bethe ansatz equations}.
Hereafter, a rigorous proof of the completeness of the spectrum characterization was a fundamental and complicate long
standing goal\footnote{It was addressed also by numerical methods like the one introduced in \cite{ADMcCoy92,FMcCoy01}, see also \cite{NR03} for an application of it.} in the literature. In \cite{NT} this result has been reached showing in particular that the eigenstates of the
transfer matrix constructed from the solutions of an associated system of Bethe ansatz equations form a basis of the representation. Note that early there were only a few examples of integrable quantum models where the
completeness has been proven, including the XXX Heisenberg model; see \cite{MTV} and references therein.

There is a double motivation for the present article. On the one hand, we are interested in showing that the strong results of
simplicity and completeness of the transfer matrix spectrum can be proven for general cyclic representations
of the SG model. On the other hand, we want to use these representations to define a method to characterize
the spectrum which is based only on the Separation of Variables (SOV) of Sklyanin  \cite{Sk1,Sk2,Sk3}. Then, these
general representations are of special interest as they define concrete examples of cyclic representations of quantum integrable
models to which the {\it standard} construction\footnote{Here, we refer to the Baxter's construction by {\it gauge} transformations which leave
unchanged the transfer matrix while make triangular the action of the Lax matrices on the $\SQ$-operator. Let us recall that there are also others constructions of the $\SQ$-operator; interesting examples are presented in \cite{BLZ-I}-\cite{RW02}.} \cite{Ba73}-\cite{DKM} of the Baxter $\SQ$-operator by using cyclic dilogarithm functions does not apply in general. 
\subsection{Methodological aim}
In the framework of quantum integrability, there are several methods to analyze the spectral problem as the coordinate Bethe ansatz  \cite{Be31, Bax82, ABBQ87}, the $\ST\SQ$ method   \cite{Bax82}, the
algebraic Bethe ansatz (ABA) \cite{FT79, KS79, FST}, the analytic Bethe ansatz  \cite{R83}. However, they suffer in general from one or more of the
following problems: i)  {\it Reduced applicability}; i.e. there exist important examples of quantum integrable models to which these
methods do not apply. ii) {\it Analysis reduced only to the set of eigenvalues}; i.e. they do not allow for the construction of the
eigenstates. iii)  {\it Lack of completeness proof}; i.e. the completeness of the spectrum description is not assured by
the method while there are known cases for which these methods lead to incomplete spectrum description \cite{FK10}.

The separation of variables (SOV) method of Sklyanin is a more promising approach. Indeed, the SOV method resolves problems
like the reduced applicability of other methods; in particular, it works for integrable quantum models to which ABA does not apply.
Moreover, it leads to both the eigenvalues and the eigenstates of the transfer matrix with a construction which has as built-in
feature the completeness of the spectrum description.

In the case of cyclic representations \cite{Ta} of integrable quantum models, however, it is worth pointing out that the SOV
method  leads to the characterization of the transfer matrix spectrum in terms of solutions of an associated finite
system\footnote{The number of equations in the system is finite and related to the dimension of the cyclic representation.} of
Baxter-like equations. Thus, SOV method does not lead directly to the standard spectrum characterization expected in the
framework of Bethe ansatz approaches. Moreover, such SOV characterization
of the spectrum is not the most efficient in view of the analysis of the continuum limit. In order to solve these problems it is
important to prove that the SOV characterization can be reformulated in terms of an associated functional Baxter equation. One
possibility is to add to the SOV characterization the $\SQ$-operator approach. In particular, this can be
done when the corresponding Baxter equation is proven to be compatible with the finite system of Baxter-like equations of
the SOV characterization. In fact, this was the strategy followed in \cite{NT} for the special cyclic representations there analyzed.

The main methodological aim of the present article is to present an approach which allows to reduce the solution of the spectral
problem to the classification of the solutions of an associated Baxter equation  in a fixed class of functions. An approach where
the SOV characterization of spectrum is the starting point and  the {\it standard} construction by {\it gauge} transformations\footnote{Note that this method can be applied only when the
existence of some model dependent {\it quantum dilogarithm} functions
\cite{FK2}-\cite{BT06} is proven.} of the $\SQ$-operator is completely
bypassed. Let us recall that
such an approach has been first developed in \cite{GN10} for the special class of cyclic representations studied also in \cite{NT}. However, in \cite{GN10}
we have used some shortcuts to define the coefficients of the associated functional Baxter equations which hold only for such a
special class of cyclic representations. Here, we present
this approach in the framework of the general cyclic representations also to infer some
universal features of the method for applications to others integrable quantum
models.
\subsection{Organization of the paper}
In Section 2, it is recalled the definition of the SG quantum model in the framework of the quantum inverse scattering method. In
Section 3, the construction of the SOV representations corresponding to general cyclic representations of the SG model is
implemented. There, the result derived in \cite{NT} is extended and completed with the characterization of the coefficients of the SOV
representations. Section 4 is the
core of the article; there it is explained the method which allows to reformulate the SOV characterization of the spectrum in terms
of an associated Baxter functional equation. Moreover, there it is proven the existence of all the polynomial solutions of this
equation required to completely characterize the transfer matrix spectrum (eigenvalues and eigenstates) of the SG model. Appendix A and B contains some important properties. In Appendix C,
the comparison between our SG model and the $\tau^{(2)}$-model is made, pointing out the difference in the spectrum of the two models. In particular, it is shown as for even chains the transfer matrices are not similar for general representations.
\vspace*{6mm}
{\par {\small
{\em Acknowledgments.} I would like to thank J. Teschner for stimulating discussions on related subjects and for the opportunity to collaborate with him at DESY during the last three years under support of his grant Marie Curie Excellence Grant MEXTCT-2006-042695. I would like to thank also B. McCoy and J.-M. Maillet for the interest shown in this work. I gratefully acknowledge my current support from National Science Foundation, grants PHY-0969739.}}
\newpage

\section{Lattice SG quantum model}\label{SOV}
\subsection{Definitions and first properties}
The Lax operator of the SG model, as introduced in \cite{NT}, reads:
\begin{equation}\label{Lax}
\begin{aligned}
 L^{\rm\sst SG}_n(\la)
  &= \frac{\kappa_n}{i} \left( \begin{array}{cc}i\,\su_n^{}(q^{-\frac{1}{2}}\kappa_n^{}\sv_n^{}+q^{+\frac{1}{2}}\kappa^{-1}_n\sv_n^{-1}) &
\la_n^{} \sv_n^{} - \la^{-1}_n \sv_n^{-1}  \\
 \la_n^{} \sv_n^{-1} - \la^{-1}_n \sv_n^{} &
i\,\su_n^{-1}(q^{+\frac{1}{2}}\kappa^{-1}_n\sv_n^{}+q^{-\frac{1}{2}}\kappa_n^{}\sv_n^{-1})
 \end{array} \right) ,
\end{aligned}\end{equation}
where $\la_n\equiv\la/\xi_n$ for any $n\in \{1,...,\SRN\}$ with $\xi_n$ and  $\kappa_n$ parameters of the representation.
For any $n \in \{1,...,\SRN\}$ the couple of operators ($\su_n$,$\sv_n$) define a Weyl algebra ${\cal W}_{n}$:
\begin{equation}\label{Weyl}
\su_n\sv_m=q^{\de_{nm}}\sv_m\su_n\,,\qquad{\rm where}\;\;
q=e^{-\pi i \beta^2}\,.
\end{equation}
We will restrict our attention to the case in which $q$ is a $p$-root of unity:
\begin{equation}\label{beta}
\beta^2\,=\,\frac{p'}{p}\,,\,\,\,\,\,\, p \equiv 2l+1, p' \equiv 2l'\,\, \text{ and  }\,\, l,l'\in\BZ^{>0}\,\,\, \rightarrow \,\,\,\,q^{p}=1. 
\end{equation}
In this case each Weyl algebra  ${\cal W}_{n}$ admits a finite-dimensional representation of dimension $p$. Let us denote with:
\begin{equation}\label{u-basis}
|\, \bz\,\rangle\equiv|\, z_1,\dots,z_\SRN\,\rangle\,\,\, \text{with}\,\,\, z_i\in\BS_p\equiv\{q^{2n};n=0,\dots,2l\}\,\,\, \text{and}\,\, i\in\{1,...,\SRN\},
\end{equation}
the generic state of the basis constructed by the tensor product of the eigenstates $|\,z_n\,\rangle$ of each operator $\su_n$, then in
this basis the  Weyl algebra representation reads:
\begin{equation}\label{reprdef}
\begin{aligned}
&\sv_n\,|\, z_1,\dots,z_\SRN\rangle=\,v_n |\, z_1,\dots,q z_n,\dots,z_\SRN\rangle\,,\\
&\su_n\,|\, z_1,\dots,z_\SRN\rangle=\,u_n\,z_n |\, z_1,\dots,z_\SRN\rangle\,.
\end{aligned}
\end{equation}
where $u_n$ and $v_n$ are other parameters of the representation.

The monodromy matrix of the model is defined in terms of the Lax operators by:
\begin{equation}\label{Mdef}
\SM(\la)\,\equiv\,\left(\begin{matrix}\SA(\la) & \SB(\la)\\
\SC(\la) & \SD(\la)\end{matrix}\right)\,\equiv\,
L_\SRN^{}(\la)\dots L_1^{}(\la)\,,
\end{equation}
and it satisfies the quadratic relations:
\begin{equation}\label{YBA}
R(\la/\mu)\,(\SM(\la)\ot 1)\,(1\ot\SM(\mu))\,=\,(1\ot\SM(\mu))\,(\SM(\la)\ot 1)R(\la/\mu)\,,
\end{equation}
w.r.t. the six-vertex $R$-matrix:
\begin{equation}\label{Rlsg}
 R(\la) =
 \left( \begin{array}{cccc}
 q^{}\la-q^{-1}\la^{-1} & & & \\ [-1mm]
 & \la-\la^{-1} & q-q^{-1} & \\ [-1mm]
 & q-q^{-1} & \la-\la^{-1} & \\ [-1mm]
 & & &  q\la-q^{-1}\la^{-1}
 \end{array} \right) \,.
\end{equation}
Then the elements of $\SM(\la)$ generate a representation $\CR_\SRN$ of the so-called Yang-Baxter algebra characterized by
the $4\SRN$ parameters $\kappa=(\kappa_1,\dots,\kappa_\SRN)$, $\xi=(\xi_1,\dots,\xi_\SRN)$, $u=(u_1,\dots,u_\SRN)$ and
$v=(v_1,\dots,v_\SRN)$. In particular, the commutation relations \rf{YBA} lead to the
mutual commutativity of the elements of the one parameter family of operators:
\begin{equation}\label{Tdef}
\ST^{}(\la)\,=\,{\rm tr}_{\BC^2}^{}\SM(\la)\,,
\end{equation}
known as  transfer matrix.

In \cite{NT} the spectral problem of this transfer matrix has been solved for {\it untwisted} representations:
\begin{equation}\label{S-adj-C-untwisted}
\kappa_n^2\,\in\,\mathbb R,\,\,\,\,\xi_n^2\,\in\,\mathbb R,\,\,\,\,\,u_n^{2p}=1,\,\,v_n^{2p}=1,\,\,\,\forall \,n\,\in\,\{1,...,\SRN\}.
\end{equation}
In the present paper we will extend the analysis considering general representations:
\begin{equation}\label{S-adj-C+general}
\kappa_n^2\,\in\,\mathbb R,\,\,\,\,\xi_n^2\,\in\,\mathbb R,\,\,\,\,\,(u_n)^{*}\,u_n=1,\,(v_n)^{*}\,v_n=1,\,\,\,\forall \,n\,\in\,\{1,...,\SRN\},
\end{equation}
which we call  {\it twisted} for $u_n^{2p}\neq1$ and $v_n^{2p}\neq1$.

Let us comment that the above restrictions on the parameters of the representation are imposed to get the self-adjointness of the transfer matrix:
\begin{lem}
If the parameters of the representation satisfy the constrains \rf{S-adj-C+general} and
\begin{equation}\label{cond-T-Normality}
\varepsilon\equiv-(\kappa _{n}\xi _{n})/\left(\kappa _{n}^{\ast}\xi_{n}^{\ast}\right)\,\,\,\text{ is uniform along the chain,}
\end{equation}
then the generators of the Yang-Baxter algebra satisfy the following transformations under
Hermitian conjugation: 
\begin{equation}\label{Hermit-Monodromy}
\SM(\la)^\dagger\equiv\left( 
\begin{array}{cc}
\SA^{\dagger }(\lambda ) & \SB^{\dagger }(\lambda ) \\ 
\SC^{\dagger }(\lambda ) & \SD^{\dagger }(\lambda )%
\end{array}%
\right) =\left( 
\begin{array}{cc}
\SD(\lambda ^{\ast }) & \SC(\varepsilon\lambda ^{\ast }) \\ 
\SB(\varepsilon\lambda ^{\ast }) & \SA(\lambda ^{\ast })%
\end{array}
\right) ,
\end{equation}
which, in particular, imply the self-adjointness of the transfer matrix $\ST(\la)$ for real $\lambda $.
\end{lem}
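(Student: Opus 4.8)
The plan is to prove the conjugation rule \rf{Hermit-Monodromy} first for a single Lax matrix and then to push it through the ordered product $\SM(\la)=L_\SRN(\la)\cdots L_1(\la)$, after which the self-adjointness of $\ST(\la)$ is immediate. Throughout, $\dagger$ on a matrix of operators is meant entrywise (the adjoint on the quantum factors $\CW_n$, no transpose in the auxiliary $\BC^2$), in accordance with the first equality in the statement.

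\emph{Local conjugation.} First I would fix the basis $\{|\,\bz\,\rangle\}$ of \rf{u-basis} to be orthonormal. Since $\beta^2=p'/p\in\BR$ we have $|q|=1$, hence $q^{*}=q^{-1}$ and $(q^{\pm1/2})^{*}=q^{\mp1/2}$; as the labels $z_n\in\BS_p$ are roots of unity and $u_n^{*}u_n=v_n^{*}v_n=1$ by \rf{S-adj-C+general}, the representation \rf{reprdef} gives on $\CW_n$
\[
\su_n^{\dagger}=\su_n^{-1},\qquad \sv_n^{\dagger}=\sv_n^{-1}
\]
($\su_n$ diagonal with unimodular spectrum, $\sv_n$ a weighted cyclic shift whose adjoint is the inverse weighted shift). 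Then I would conjugate $L^{\rm\sst SG}_n(\la)$ entry by entry, using these relations, the Weyl rule $\su_n\sv_n=q\sv_n\su_n$, and $\kappa_n^2\in\BR$. A short reordering shows the $\la$-independent diagonal entries get exchanged, $\SA_n^{\dagger}=\SD_n$, $\SD_n^{\dagger}=\SA_n$, while comparing the coefficients of $\sv_n$ and of $\sv_n^{-1}$ shows $\SB_n(\la)^{\dagger}=\SC_n(\varepsilon\la^{*})$ and $\SC_n(\la)^{\dagger}=\SB_n(\varepsilon\la^{*})$ \emph{exactly} when $\varepsilon=-\kappa_n\xi_n/(\kappa_n^{*}\xi_n^{*})$ as in \rf{cond-T-Normality}; the mutual consistency of the two coefficients is precisely where $\kappa_n^2\in\BR$ is used. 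Equivalently, with $P=\left(\begin{matrix}0&1\\1&0\end{matrix}\right)$ the permutation matrix on the auxiliary space,
\[
L^{\rm\sst SG}_n(\la)^{\dagger}=P\,L^{\rm\sst SG}_n(\varepsilon\la^{*})\,P .
\]

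\emph{Propagation and diagonal parity.} Because Lax operators at distinct sites have mutually commuting entries, the entrywise adjoint of the ordered product keeps the same auxiliary-space order, $\SM(\la)^{\dagger}=L_\SRN(\la)^{\dagger}\cdots L_1(\la)^{\dagger}$; here the uniformity hypothesis \rf{cond-T-Normality} is essential, so that one and the same $\varepsilon$ occurs at every site. Substituting the local identity and using $P^{2}=1$,
\[
\SM(\la)^{\dagger}=P\,\SM(\varepsilon\la^{*})\,P=\left(\begin{matrix}\SD(\varepsilon\la^{*}) & \SC(\varepsilon\la^{*})\\ \SB(\varepsilon\la^{*}) & \SA(\varepsilon\la^{*})\end{matrix}\right).
\]
To reach the precise form \rf{Hermit-Monodromy} I still need $\SA(\varepsilon\la^{*})=\SA(\la^{*})$ and $\SD(\varepsilon\la^{*})=\SD(\la^{*})$. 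This is the one structural input: expanding $L_\SRN(\la)\cdots L_1(\la)$, each monomial of $\SA(\la)$ and of $\SD(\la)$ contains an even number of off-diagonal Lax entries, and each such entry is a sum of a term $\propto\la$ and a term $\propto\la^{-1}$; hence $\SA(\la),\SD(\la)$ are Laurent polynomials in $\la^{2}$. Since $\varepsilon^{2}=\kappa_n^2\xi_n^2/\bigl((\kappa_n^2)^{*}(\xi_n^2)^{*}\bigr)=1$ by \rf{S-adj-C+general} (in fact $\varepsilon=\pm1$), substituting $\la\mapsto\varepsilon\la^{*}$ leaves them invariant, which gives \rf{Hermit-Monodromy}. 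Finally, reading off the $(1,1)$ and $(2,2)$ entries, $\SA(\la)^{\dagger}=\SD(\la^{*})$ and $\SD(\la)^{\dagger}=\SA(\la^{*})$, whence $\ST(\la)^{\dagger}=(\SA(\la)+\SD(\la))^{\dagger}=\SA(\la^{*})+\SD(\la^{*})=\ST(\la^{*})$, equal to $\ST(\la)$ for real $\la$.

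\emph{Main obstacle.} No single computation is deep; the two points that require care are keeping the entrywise-versus-transpose adjoint bookkeeping consistent when passing from the $L_n$'s to $\SM(\la)$, and recognising that the diagonal entries of $\SM(\la)$ are even in $\la$ — this parity, together with $\varepsilon^{2}=1$, is exactly what strips the factor $\varepsilon$ off the diagonal and makes the result come out in the clean form \rf{Hermit-Monodromy}.
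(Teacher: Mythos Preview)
Your proof is correct and follows essentially the same route as the paper's: establish the local identity $\bigl(L^{\rm\sst SG}_n(\la)\bigr)^{\dagger}=\sigma_1\,L^{\rm\sst SG}_n(\varepsilon\la^{*})\,\sigma_1$ (your $P$ is the paper's $\sigma_1$), propagate it to $\SM(\la)$ using the uniformity of $\varepsilon$, and then invoke the evenness of $\SA(\la)$ and $\SD(\la)$ in $\la$ together with $\varepsilon^{2}=1$ to remove $\varepsilon$ from the diagonal. Your write-up is simply more explicit than the paper's, which compresses all of this into two lines.
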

\begin{proof}
It is simple to observe that the Lax operator of the SG model satisfies the equation \rf{Hermit-Monodromy}, which can be also
written as:
\begin{equation}\label{Hermit-L}
\left(L^{\rm\sst SG}_n(\la)\right)^{\dagger}=\sigma_1\,L^{\rm\sst SG}_n(\varepsilon\la^{*})\,\sigma_1\,\,\,\,\,\,\underset{\rf{cond-T-Normality}}{\longrightarrow}\,\,\,\,\,\,\left(\SM(\la)\right)^{\dagger}=\sigma_1\,\SM(\varepsilon\la^{*})\,\sigma_1,
\end{equation}
that is \rf{Hermit-Monodromy} holds, when one takes into account that $\SA(\la)$ and $\SD(\la)$ are even in $\la$.
\end{proof}
In the case of a lattice with $\SRN$ even quantum sites, we can introduce the operator:
\begin{equation}\label{topological-charge}
\Theta =\prod_{n=1}^{\SRN}\sv_{n}^{(-1)^{1+n}},
\end{equation}
which plays the role of a {\it grading operator} in the Yang-Baxter algebra:

\begin{lem}{\bf (Proposition 6 of \cite{NT})} \ $\Theta $ commutes with the transfer matrix and satisfies
the following commutation relations with the entries of the monodromy matrix:
\begin{eqnarray}
\Theta \SC(\lambda ) &=&q\SC(\lambda )\Theta \text{, \ \ \ }[\SA(\lambda ),\Theta
]=0, \\
\SB(\lambda )\Theta &=&q\Theta \SB(\lambda ),\text{ \ \ }[\SD(\lambda ),\Theta ]=0.
\end{eqnarray}
\end{lem}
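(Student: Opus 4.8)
The plan is to verify the commutation relations directly from the definition $\Theta = \prod_{n=1}^{\SRN} \sv_n^{(-1)^{1+n}}$ and the explicit form of the Lax matrices in \rf{Lax}, then assemble the result for the full monodromy matrix by induction on the length of the chain. First I would record the elementary local relations: since $\su_n \sv_m = q^{\de_{nm}} \sv_m \su_n$, for a fixed site $n$ the operator $\sv_n^{\pm 1}$ commutes with everything in the Lax matrix $L_n^{\rm\sst SG}(\la)$ except $\su_n$ and $\su_n^{-1}$, and one has $\sv_n \su_n = q^{-1}\su_n \sv_n$, $\sv_n \su_n^{-1} = q\,\su_n^{-1}\sv_n$. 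Inspecting \rf{Lax}, the diagonal entry $\SA$-type term carries $\su_n$ (hence gets a factor $q^{\mp 1}$ on conjugation by $\sv_n^{\pm1}$), the diagonal $\SD$-type term carries $\su_n^{-1}$ (factor $q^{\pm1}$), and the off-diagonal entries carry no $\su_n$ at all. So conjugating $L_n^{\rm\sst SG}(\la)$ by $\sv_n^{\varepsilon_n}$ with $\varepsilon_n = (-1)^{1+n}$ multiplies the $(1,1)$ entry by $q^{-\varepsilon_n}$, the $(2,2)$ entry by $q^{+\varepsilon_n}$, and leaves the off-diagonal entries fixed; this can be written compactly as $\Theta_n L_n^{\rm\sst SG}(\la) \Theta_n^{-1} = \si_3^{-\varepsilon_n/2} L_n^{\rm\sst SG}(\la)\, \si_3^{\varepsilon_n/2}$ in the auxiliary space, where $\Theta_n \df \sv_n^{\varepsilon_n}$ — but I would more likely just state the effect entrywise rather than introducing fractional powers of $\si_3$.

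Next I would conjugate the whole monodromy matrix. Since $\Theta = \prod_n \Theta_n$ and $\Theta_n$ only interacts nontrivially with the $n$-th factor in $\SM(\la) = L_\SRN(\la)\cdots L_1(\la)$, we get $\Theta\, \SM(\la)\, \Theta^{-1} = \prod_{n=\SRN}^{1} \bigl(\Theta_n L_n^{\rm\sst SG}(\la) \Theta_n^{-1}\bigr)$ (ordered product), and the key point is that the alternating signs $\varepsilon_n = (-1)^{1+n}$ together with $\SRN$ even make the string of auxiliary-space insertions telescope: between $L_{n+1}$ and $L_n$ one accumulates $\si_3^{\varepsilon_{n+1}/2}\cdot\si_3^{-\varepsilon_n/2}$, and because consecutive signs alternate, $\varepsilon_{n+1} = -\varepsilon_n$, these never cancel but instead build up. Carrying this out, I would compute $\Theta\,\SA(\la)\,\Theta^{-1}$, $\Theta\,\SB(\la)\,\Theta^{-1}$, etc. by reading off the appropriate matrix element of the ordered product; an equivalent and cleaner route is induction on $\SRN$, splitting $\SM(\la) = L_\SRN(\la) M'(\la)$ with $M'(\la) = L_{\SRN-1}(\la)\cdots L_1(\la)$ and using the inductive hypothesis for $\prod_{n=1}^{\SRN-1}\sv_n^{(-1)^{1+n}}$ acting on $M'$, then conjugating the product of two $2\times 2$ matrices and tracking how the $q$-factors distribute among $\SA',\SB',\SC',\SD'$. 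For the inductive step I would need both the even and odd chain versions simultaneously (the role of $\SB$ and $\SC$ swaps, and the eigenvalue of $\Theta$ on the relevant diagonal blocks changes), so stating the two-case lemma and proving both together is the natural bookkeeping device.

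Once $\Theta\,\SM(\la)\,\Theta^{-1}$ is known, the four stated relations $\Theta\SC(\la) = q\SC(\la)\Theta$, $[\SA(\la),\Theta]=0$, $\SB(\la)\Theta = q\Theta\SB(\la)$, $[\SD(\la),\Theta]=0$ fall out by matching entries, and the commutation with the transfer matrix $\ST(\la) = \SA(\la) + \SD(\la)$ is then immediate since both $\SA$ and $\SD$ commute with $\Theta$. I expect the main obstacle to be purely combinatorial: getting the alternating-sign telescoping exactly right so that the accumulated $q$-powers land as $q^{+1}$ on $\SC$ (and on $\SB$ with the opposite ordering), $q^0$ on $\SA$ and $\SD$, and in particular confirming that the $\SRN$-even hypothesis is precisely what makes the leftover boundary factor at site $\SRN$ versus site $1$ consistent — an odd chain would produce a mismatch. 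This is exactly the content of Proposition 6 of \cite{NT}, so I would also feel free to cite that computation and merely indicate the mechanism rather than reproduce every index.
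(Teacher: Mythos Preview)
The paper does not prove this lemma; it is simply stated with a citation to Proposition~6 of \cite{NT}. Your direct approach --- conjugate each Lax operator by the local factor $\sv_n^{\varepsilon_n}$, express the result as a diagonal auxiliary-space sandwich, and multiply the sandwiches along the chain --- is the natural one and is presumably what the computation in \cite{NT} does.

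There is, however, one concrete error in your sketch. Since conjugation by $\sv_n^{\varepsilon_n}$ rescales the \emph{diagonal} entries of $L_n^{\rm\sst SG}(\la)$ (by $q^{\mp\varepsilon_n}$) while fixing the off-diagonal ones, the correct auxiliary-space identity is
\[
\Theta_n\, L_n^{\rm\sst SG}(\la)\, \Theta_n^{-1} \,=\, D_n\, L_n^{\rm\sst SG}(\la)\, D_n, \qquad D_n = \mathrm{diag}\bigl(q^{-\varepsilon_n/2},\,q^{\varepsilon_n/2}\bigr),
\]
with the \emph{same} matrix on both sides, not $D_n L_n D_n^{-1}$. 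Your form $\si_3^{-\varepsilon_n/2}L_n\,\si_3^{\varepsilon_n/2}$ is of the latter type and would leave diagonal entries unchanged, which is wrong. With the correct form, the interior factors between $L_{n+1}$ and $L_n$ are $D_{n+1}D_n$, and since $\varepsilon_{n+1}=-\varepsilon_n$ these \emph{cancel} to the identity --- contrary to your ``never cancel but instead build up''. Only the boundary factors $D_\SRN$ on the far left and $D_1$ on the far right survive, and for $\SRN$ even reading off the four entries of $D_\SRN\,\SM(\la)\,D_1$ gives exactly the stated relations. Your instinct that the bookkeeping is the only obstacle is right, and once this sign is fixed the argument goes through cleanly; the inductive alternative you mention would of course also work.
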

Moreover, the $\Theta $-charge allows to express the asymptotics of the transfer matrix as:
\begin{equation}
\lim_{\log\lambda \rightarrow \mp\infty}\lambda ^{\pm\SRN}\ST (\lambda )=\left(
\prod_{a=1}^{\SRN}\frac{\kappa _{a}\xi _{a}^{\pm 1}}{i}\right) \left(
\Theta +\Theta^{-1}\right).  \label{asymptotics-t}
\end{equation}
Let us denote with $\Sigma _{\ST}$ the set of the eigenvalue functions $t(\lambda )$ of the transfer matrix $\ST(\lambda )$. By the definitions \rf{Lax} and \rf{Tdef}, $\Sigma_{\ST}$ is a subset of $\mathbb{R}[\lambda ^{2},\lambda^{-2}]_{\bar\SRN/2}$, where we are using the notations:
\begin{equation}
\bar\SRN\equiv\SRN+{\rm e}_{\SRN}-1,\,\,\,\,\,\, \text{and}\,\,\rm e_\SRN=0\,\, \text{for}\,\, \SRN\,\, \text{odd and}\,\, 1\,\, \text{for}\,\, \SRN\,\, \text{even}, 
\end{equation}
and $\mathbb{R}[x,x^{-1}]_{\SRM}$ denotes the linear space in the field $\mathbb{R}$ of the {\it real}:
\begin{equation}
f(x)\in\mathbb{R}[x,x^{-1}]_{\SRM}\,\,\, \rightarrow\,\,\, (f(x))^*=f(x^*)\,\,\, \forall x\in \mathbb{C},
\end{equation}
Laurent polynomials of degree $\SRM$ in the variable $x$.

Note that in the case of $\SRN$ even, the $\Theta $-charge naturally induces the grading $\Sigma_{\ST}=\bigcup_{k=0}^{2l}\Sigma _{\ST}^{\theta,k}$, where:
\begin{equation}
\Sigma
_{\ST}^{\theta,k}\equiv \left\{ t(\lambda )\in \Sigma _{\ST}:\lim_{\log \lambda
\rightarrow \mp \infty }\lambda ^{\pm \SRN}t(\lambda )=\left( \prod_{a=1}^{\SRN}\frac{\kappa _{a}\xi _{a}^{\pm 1}}{i}\right) (q^{k}\theta +(q^{k}\theta )^{-1})\right\} .
\end{equation}
This simply follows by the asymptotics of $\ST(\lambda )$ and by its commutativity with $\Theta $. In particular,
any $t(\lambda )\in \Sigma_{\ST}^{\theta,k}$ is a $\ST$-eigenvalue corresponding to simultaneous eigenstates of
$\ST(\lambda )$ and $\Theta $\ with $\Theta $-eigenvalue $q^{k}\theta$.

\section{Cyclic SOV representations}

\setcounter{equation}{0}

The separation of variables (SOV) of Sklyanin \cite{Sk1}-\cite{Sk3} is a method to solve the spectral problem for $\ST(\la)$ which is based on the observation that such a problem simplifies considerably if one works in a SOV representation: A representation where the commutative family of operators $\SB(\la)$ is diagonal. This is due to the simple form assumed by the operator families $\SA(\la)$ and $\SD(\la)$ in such a representation.
\subsection{Cyclic SOV representations: Generality}
In subsection \ref{SOV construction-0} we will show that the commuting family of operators\footnote{From here, we will use the index $\SRN$ when it will be need to point out that we are referring to the chain with $\SRN$ sites and we will omit it otherwise.} $\SB_\SRN(\la)$ is diagonalizable
with simple spectrum for almost all the values of the $4\SRN$ parameters $\kappa$, $\xi$, $u$ and $v$ of our model. The
corresponding SOV representations of the SG model are to a large extend determined by the Yang-Baxter
algebra \rf{YBA}; here we recall the general form of these representations.

Let $\langle\,\eta\,|$ be the generic element of a basis of eigenvectors of $\SB_\SRN(\la)$:
\begin{equation}\label{Bdef}
\langle\,\eta\,|\SB_\SRN(\la)\,=\,\eta_\SRN^{{\rm e}_\SRN}\,b_\eta(\la)\,\langle\,\eta\,|\,,\qquad b_\eta(\la)\,\equiv\,
\prod_{n=1}^{\SRN}\frac{\kappa _{n}}{i}\prod_{a=1}^{[\SRN]}\left( \la/\eta_a-\eta_a/\la\right)\,,
\end{equation}
where $[\SRN]\equiv\SRN-{\rm e}_\SRN$ and
\begin{equation}
\eta \in{\mathbb B_\SRN}\,\equiv\,\big\{\,(q^{k_1}\zeta_1,\dots,q^{k_\SRN}\zeta_\SRN)\,;\,(k_1,\dots,k_\SRN)\in\BZ_p^\SRN\,\big\}\,.
\end{equation}
Here, the simplicity of the spectrum of $\SB_\SRN(\la)$ is equivalent to the requirement $\zeta_a^p\neq\zeta_b^p$ for any $a\neq b \in \{1,\dots,[\SRN]\}$.
The remaining generators of the Yang-Baxter algebra read:
\begin{align}\label{SAdef}
\SA_\SRN(\la)\,=\,&\,{\rm e}_{\SRN}^{}\,b_\eta(\la)\left[
\frac{\la}{\eta_\SA^{}}\ST^+_\SRN-\frac{\eta_\SA^{}}{\la}\ST^-_\SRN\right]
+\sum_{a=1}^{[\SRN]}\prod_{b\neq a}\frac{\la/\eta_b-\eta_b/\la}{\eta_a/\eta_b-\eta_b/\eta_a} \,{\tt a}_\SRN^{}(\eta_a)\,\ST_a^-\,,\\
\SD_\SRN(\la)\,=\,&\,{\rm e}_{\SRN}^{}\,b_\eta(\la)\left[\frac{\la}{\eta_\SD^{}}\ST^-_\SRN-\frac{\eta_\SD^{}}{\la}\ST^+_\SRN\right]
+\sum_{a=1}^{[\SRN]}\prod_{b\neq a}\frac{\la/\eta_b-\eta_b/\la}{\eta_a/\eta_b-\eta_b/\eta_a} \,{\tt d}_\SRN^{}(\eta_a)\,\ST_a^+\,,
\label{SDdef}\end{align}
where $\ST_a^{\pm}$ are
the operators defined by
\[
\langle\,\eta_1,\dots,\eta_\SRN\,|\ST_a^{\pm}=\langle\,\eta_1,\dots,q^{\pm 1}\eta_a,\dots,\eta_\SRN\,|\,.
\]
While, $\SC_\SRN(\la)$ is uniquely\footnote{Note that the operator $\SB_\SRN(\la)$ is invertible except for $\la$ which coincides with a zero of $\SB_\SRN$, so in general $\SC_\SRN(\la)$ is defined by (4.5) just inverting $\SB_\SRN(\la)$. This is enough to fix in an unique way the operator $\SC_\SRN$ being it a Laurent polynomial of degree [$\SRN$] in $\la$.} defined by the quantum determinant relation:
\begin{equation}\label{qdetdef}
{\rm det_q}\SM(\la)\,\equiv\,
\SA(\la)\SD(q^{-1}\la)-\SB(\la)\SC(q^{-1}\la),
\end{equation}
where ${\rm det_q}\SM(\la)$ is a central element\footnote{The centrality of the quantum determinant in the Yang-Baxter algebra was first discovered in \cite{IK81}; see also \cite{IK09} for an historical note.} of the Yang-Baxter algebra \rf{YBA} which reads: \begin{equation}\label{q-det-f}
{\rm det_q}\SM(\la)\equiv\prod_{n=1}^{N}\kappa _{n}^{2}(\lambda /\mu
_{n,+}-\mu _{n,+}/\lambda )(\lambda /\mu _{n,-}-\mu _{n,-}/\lambda ),
\end{equation}
where $\mu _{n,\pm }\equiv\pm i\kappa _{n}^{\pm 1}q^{1/2}\xi _{n}$. Remark that the quantum determinant does not depend
from the parameters $u _n$ and $v _n$ of the representation and so it is the same for twisted and untwisted representations.

The expressions  \rf{SAdef} and \rf{SDdef} contain complex-valued
coefficients $\eta_\SA^{}$, $\eta_\SD^{}$, ${\tt a}_\SRN(\eta_r)$ and
${\tt d}_\SRN(\eta_r)$. Note that $Z_\SA\equiv\eta_\SA^{p}$ and $Z_\SD\equiv\eta_\SD^{p}$ are fixed by the asymptotics relations \rf{ZAD} while the coefficients ${\tt a}_\SRN(\eta_r)$ and
${\tt d}_\SRN(\eta_r)$ are restricted  by the quantum determinant condition:
\begin{equation}\label{addet}
{\rm det_q}\SM(\eta_r)\,=\,
{\tt a}_\SRN(\eta_r){\tt d}_\SRN(q^{-1}\eta_r)\,, \quad\forall r=1,\dots,[\SRN]\,.
\end{equation}
In a SOV representation, some freedom is left in the choice of ${\tt a}_\SRN(\eta_r)$ and
${\tt d}_\SRN(\eta_r)$ which can be parametrized by what we call a gauge transformation:
\begin{equation}\label{gauge}
{\tt a}_\SRN'(\eta_r)\,=\,{\tt a}_\SRN(\eta_r)\frac{f(\eta_rq^{-1})}{f(\eta_r)}\,,
\qquad
{\tt d}_\SRN'(\eta_r)\,=\,{\tt d}_\SRN(\eta_r)\frac{f(\eta_rq)}{f(\eta_r)}\,;
\end{equation}
which just amounts in a renormalization in the states of the $\SB$-eigenbasis:
\begin{equation}
\langle\,\eta\,|\,\rightarrow\,\prod_{r=1}^{\SRN}f^{-1}(\eta_r)\langle\,\eta\,|\,.
\end{equation}
\subsection{Central elements: Average values}\label{Avvalapp}
Following \cite{Ta}, let us define the average value $\CO$ of the elements of the monodromy matrix $\SM(\la)$ as
\begin{equation}\label{avdef}
\CO(\Lambda)\,=\,\prod_{k=1}^{p}\SO(q^k\la)\,,\qquad \Lambda\,=\,\la^p,
\end{equation}
where $\SO$ can be
$\SA_\SRN$, $\SB_\SRN$,
$\SC_\SRN$ or $\SD_\SRN$ and we have to remark that the commutativity of each family of operators $\SA_\SRN(\lambda )$, $\SB_\SRN(\lambda )$, $\SC_\SRN(\lambda )$ and $\SD_\SRN(\lambda )$ implies
that the corresponding average values are functions of $\Lambda$. So that $\CA_\SRN(\Lambda)$, $\CD_\SRN(\Lambda)$ are even Laurent polynomials of degree $\bar\SRN$ while $\CB_\SRN(\Lambda)$, $\CC_\SRN(\Lambda)$ are odd Laurent polynomials of degree $[\SRN]$  in $\Lambda$.

\begin{propn} \text{}   
\begin{itemize}
\item[a)] The average values $\CA_\SRN(\Lambda)$, $\CB_\SRN(\Lambda)$, $\CC_\SRN(\Lambda)$, $\CD_\SRN(\Lambda)$ of the monodromy matrix elements are central elements which satisfy the following relations:
\begin{equation}\label{H-cj-A-D}
(\CA_\SRN(\Lambda))^{*}\equiv\CD_\SRN(\Lambda^*), \ \ \ \ \ (\CB_\SRN(\Lambda))^*\equiv\CC_\SRN(\varepsilon\Lambda^*),
\end{equation}
under complex conjugation. 
\item[b)] Let $\CM(\Lambda)$ be the 2$\times$2 matrix with elements the average values  of the elements of the monodromy matrix $\SM(\la)$ , then it holds:
\begin{align}\label{RRel1a}
\CM_{\SRN}^{}(\Lambda)\,=\,
\CL_{\SRN}^{}(\Lambda)\,\CL_{\SRN-1}^{}(\Lambda)\,\dots\,\CL_1^{}(\Lambda)\,.
\end{align}
where $\CL_n(\Lambda)$ is the 2$\times$2 matrix with elements the average values of the elements of the Lax matrix $L_n^{SG}(\la)$.
\end{itemize}
\end{propn}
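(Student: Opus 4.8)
The plan is to treat parts a) and b) separately, using throughout the cyclic structure $q^{p}=1$: every product $\prod_{k=1}^{p}g(q^{k}\la)$ of a rational function of $q^{k}\la$ is a rational function of $\Lambda=\la^{p}$, and I expect to use repeatedly the elementary identities $\prod_{k=1}^{p}(q^{k}x-q^{-k}x^{-1})=x^{p}-x^{-p}$ (valid since $p$ is odd) and $\sum_{k=1}^{p}q^{mk}=0$ for $m\not\equiv0\pmod p$.

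For part a), the centrality of $\CA_\SRN,\CB_\SRN,\CC_\SRN,\CD_\SRN$ is a purely algebraic consequence of the Yang--Baxter relations \rf{YBA}, and I would establish it as in \cite{Ta}: starting from the six-vertex commutation relation between $\SO(\la)$ and $\SO'(\mu)$ (for $\SO,\SO'$ among $\SA_\SRN,\SB_\SRN,\SC_\SRN,\SD_\SRN$) and commuting $\prod_{k}\SO(q^{k}\la)$ past $\SO'(\mu)$, the accumulated structure functions become products over $q^{k}\la$ of the factors $(\la/\mu-\mu/\la)$, $(q\la/\mu-q^{-1}\mu/\la)$, $(q-q^{-1})$; by the first root-of-unity identity the ``diagonal'' structure function reduces to $1$ while the inhomogeneous contributions cancel, so $\CO(\Lambda)$ commutes with every generator of the Yang--Baxter algebra and is central. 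Since $\su_{n}^{p}$ and $\sv_{n}^{p}$ act as scalars in $\CR_\SRN$, $\CA_\SRN(\Lambda)$ is then a scalar Laurent polynomial in $\Lambda$, so complex conjugation coincides with Hermitian conjugation; using $|q|=1$ (whence $(q^{k}\la)^{*}=q^{-k}\la^{*}$), \rf{Hermit-Monodromy}, and that $k\mapsto-k$ permutes $\BZ_{p}$,
\[
\textstyle (\CA_\SRN(\Lambda))^{*}=\bigl(\prod_{k}\SA_\SRN(q^{k}\la)\bigr)^{\dagger}=\prod_{k}\SD_\SRN\bigl((q^{k}\la)^{*}\bigr)=\prod_{k}\SD_\SRN(q^{-k}\la^{*})=\CD_\SRN\bigl((\la^{*})^{p}\bigr)=\CD_\SRN(\Lambda^{*}).
\]
The same computation for $\CB_\SRN$ gives $(\CB_\SRN(\Lambda))^{*}=\CC_\SRN\bigl((\varepsilon\la^{*})^{p}\bigr)=\CC_\SRN(\varepsilon^{p}\Lambda^{*})$, and since $\kappa_{n}^{2},\xi_{n}^{2}\in\BR$ force $\varepsilon=\pm1$ while $p=2l+1$ is odd, $\varepsilon^{p}=\varepsilon$ and \rf{H-cj-A-D} follows.

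For part b), I would induct on the number of sites, the inductive step being $\CM^{(m)}(\Lambda)=\CL_{m}(\Lambda)\,\CM^{(m-1)}(\Lambda)$ for the truncated monodromies $\SM^{(m)}(\la)=L_{m}(\la)\cdots L_{1}(\la)$ and their average-value matrices $\CM^{(m)}$. Expanding $(\SM^{(m)}(q^{k}\la))_{ij}=\sum_{c}(L_{m}(q^{k}\la))_{ic}(\SM^{(m-1)}(q^{k}\la))_{cj}$ and using that $L_{m}$ lives on $\CW_{m}$ while $\SM^{(m-1)}$ lives on $\CW_{1},\dots,\CW_{m-1}$, so that every site-$m$ factor commutes with every other factor, the average splits as
\[
\textstyle \bigl(\CM^{(m)}(\Lambda)\bigr)_{ij}=\sum_{c_{1},\dots,c_{p}}\Bigl(\prod_{k}(L_{m}(q^{k}\la))_{ic_{k}}\Bigr)\Bigl(\prod_{k}(\SM^{(m-1)}(q^{k}\la))_{c_{k}j}\Bigr),
\]
so the assertion becomes the vanishing of every term with a non-constant index string $(c_{1},\dots,c_{p})$. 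This cancellation is the crux of the proof: it is Tarasov's averaging (fusion) lemma, which I would establish by inserting the explicit $\fsl_{2}$ Lax entries \rf{Lax}, tracking the weights of its matrix elements under the Weyl algebra (the entries $(1,1)$ and $(2,2)$ carry $\su_{m}^{\pm1}$, the entries $(1,2)$ and $(2,1)$ carry no $\su_{m}$), and invoking the root-of-unity identities above to show that the Weyl-algebra operators produced by non-constant strings organize into vanishing combinations; only the constant strings survive, and for those the site-$m$ factor is $(\CL_{m}(\Lambda))_{ic}$ by definition, giving $\CM^{(m)}=\CL_{m}\CM^{(m-1)}$. The evenness and degrees in $\Lambda$ recorded before the statement then match on both sides of \rf{RRel1a}, completing the proof.
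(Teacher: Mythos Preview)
Your route is genuinely different from the paper's. The paper argues entirely through the SOV representation: $\CB_\SRN$ is central because $\SB_\SRN(\la)$ is diagonal there; $\CA_\SRN$ and $\CD_\SRN$ are central because they are Laurent polynomials of degree $\bar\SRN$ in $\Lambda^{2}$ whose values at the $[\SRN]$ points $Z_r$ (and, for even $\SRN$, whose leading asymptotics) are manifestly central by \rf{ADaver}; $\CC_\SRN$ then follows by Hermitian conjugation. Part b) is obtained from the recursion of Lemma~\ref{average value} (Proposition~3 of \cite{NT}), again under the standing SOV hypothesis that $\SB$ has simple spectrum on every subchain. You instead follow Tarasov's direct algebraic method \cite{Ta}: centrality from the Yang--Baxter commutation relations, and the product formula by explicit expansion over intermediate index strings. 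Your approach has the virtue of not presupposing the SOV construction (which in the paper comes only in Section~\ref{SOV construction-0}), so it is logically more self-contained; the paper's approach, by contrast, feeds naturally into the SOV machinery being developed and sidesteps the combinatorics of the averaging lemma entirely. Your complex-conjugation argument is essentially the same as the paper's.

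Two points deserve more care. First, in a) you pass from ``central in the Yang--Baxter algebra'' to ``scalar in $\CR_\SRN$'' by invoking that $\su_n^{p},\sv_n^{p}$ act as scalars; but commuting with all $\SA,\SB,\SC,\SD$ does not by itself give commuting with each $\su_n,\sv_n$. The clean fix is either to invoke irreducibility of $\CR_\SRN$, or---simpler---to prove b) first, since once $\CM_\SRN=\CL_\SRN\cdots\CL_1$ with each $\CL_n$ explicitly scalar as in \rf{L_n}, scalar-ness is automatic. Second, in b) the vanishing of the non-constant-string contributions, which you rightly flag as the crux, is more delicate than your weight-tracking sketch indicates: the factors $(L_m(q^{k}\la))_{ic_k}$ for different $k$ all live on the same Weyl site and do not commute across matrix entries (e.g.\ $(L_m)_{11}$ carries $\su_m$ while $(L_m)_{12}$ carries only $\sv_m$), so the product $\prod_k(L_m(q^{k}\la))_{ic_k}$ needs a fixed ordering and the cancellation does not reduce to a single application of $\sum_k q^{mk}=0$. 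The argument is carried out in \cite{Ta}, but you should expect genuine work at this step rather than a one-line weight count.
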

A similar statement was first proven in \cite{Ta}.

\textit{Proof of a).} Centrality of $\CB_\SRN(\Lambda)$ trivially follows from the fact that $\SB_\SRN(\la)$ is diagonal in the SOV representation and from \rf{Bdef} it is easily found:
\begin{equation}\label{CB}
\CB_\SRN(\Lambda)\,=\,Z_{{\SRN}}^{{\rm e}_{\SRN}}
\prod_{n=1}^{\SRN}\frac{K_n}{i^p}\prod_{a=1}^{{[}\SRN{]}}(\Lambda/Z_a-Z_a/\Lambda)\,,\qquad
\begin{aligned}
& Z_a\equiv \eta_a^p\,,\\
& K_a\equiv\kappa_a^p\,.
\end{aligned}
\end{equation}
The requirement of cyclicity of the SOV representation reads:
\begin{equation}
 ({\ST_{a}^{^-}})^p=({\ST_{a}^{^+})^p=1}\qquad \forall a\in\{1,\dots,\SRN\},
\end{equation}
so that $\CA_\SRN^{}(Z_r)$ and $\CD_\SRN^{}(Z_r)$ are centrals and related to the coefficients $a_\SRN(q^k\eta_r)$ and $d_\SRN(q^k\eta_r)$ by
\begin{equation}\label{ADaver}
\CA_\SRN^{}(Z_r)\,\equiv\,\prod_{k=1}^{p}a_\SRN(q^k\eta_r)\,,\qquad
\CD_\SRN^{}(Z_r)\,\equiv\,\prod_{k=1}^{p}d_\SRN(q^k\eta_r)\,,\qquad \forall r\in\{1,\dots,{[}\SRN{]}\}.
\end{equation}
Note that $\mathcal{A}_{\SRN}(\Lambda )\Lambda ^{\bar\SRN}$ and $\mathcal{D}_{\SRN}(\Lambda)\Lambda^{\bar\SRN}$ are polynomials in $\Lambda^{2} $ of degree $\bar\SRN$. So the
centrality of $\mathcal{A}_{\SRN}(\Lambda )$ and $\mathcal{D}_{\SRN}(\Lambda )$ follows from the simplicity of the
$\SB_\SRN$-spectrum and the centrality in the special values \rf{ADaver} to which we have to add for even $\SRN$ the centrality of the leading asymptotic terms of $\mathcal{A}_{\SRN}(\lambda )$ and $\mathcal{D}_{\SRN}(\lambda )$ as discussed in appendix \ref{Asymp-A-D}. Finally, the Hermitian conjugation properties of the elements of the monodromy matrix:
\begin{equation}
(\SA_\SRN(\lambda))^{\dag}\equiv\SD_\SRN(\lambda^*), \ \ \ \ \ (\SB_\SRN(\lambda))^\dag\equiv\SC_\SRN(\varepsilon\lambda^*),
\end{equation}
imply \rf{H-cj-A-D} and in particular the centrality of $\mathcal{C}_{\SRN}(\Lambda )$.\hspace{6.5cm}$\square$

\textit{Proof of b).} Under the assumption that $\SB(\la)$ is diagonalizable and with simple spectrum in the entire chain as well as in each subchain, the point \textit{b)} follows inductively by using the next Lemma.  \hspace{14.1cm}$\square$

\begin{lem}\label{average value} {\bf (Proposition 3 of \cite{NT})}\ 
The following recursive equations on the average values hold:
\begin{eqnarray}
\mathcal{B}_{\SRN}(\Lambda ) &=&\mathcal{A}_{\SRM}(\Lambda )\mathcal{B}%
_{\SRN-\SRM}(\Lambda )+\mathcal{B}_{\SRM}(\Lambda )\mathcal{D}_{\SRN-\SRM}(\Lambda ),
\label{average value-B} \\
\mathcal{C}_{\SRN}(\Lambda ) &=&\mathcal{D}_{\SRM}(\Lambda )\mathcal{C}%
_{\SRN-\SRM}(\Lambda )+\mathcal{C}_{\SRM}(\Lambda )\mathcal{A}_{\SRN-\SRM}(\Lambda ),
\label{average value-C} \\
\mathcal{A}_{\SRN}(\Lambda ) &=&\mathcal{A}_{\SRM}(\Lambda )\mathcal{A}%
_{\SRN-\SRM}(\Lambda )+\mathcal{B}_{\SRM}(\Lambda )\mathcal{C}_{\SRN-\SRM}(\Lambda ),
\label{average value-A} \\
\mathcal{D}_{\SRN}(\Lambda ) &=&\mathcal{D}_{\SRM}(\Lambda )\mathcal{D}%
_{\SRN-\SRM}(\Lambda )+\mathcal{C}_{\SRM}(\Lambda )\mathcal{B}_{\SRN-\SRM}(\Lambda ),
\label{average value-D}
\end{eqnarray}
where on the l.h.s. there are average values of the monodromy matrix elements on the complete chain with $\SRN$-sites while on the r.h.s. there are those of the monodromy matrices on the subchains $\1$ and $\2$ with $(\SRN-\SRM)$-sites and $\SRM$-sites, respectively.
\end{lem}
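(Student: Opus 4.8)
\textit{Proof plan.} The plan is to read the four identities as the four matrix entries of the single relation
\begin{equation}
\CM_\SRN(\Lambda)\,=\,\CM_\SRM(\Lambda)\,\CM_{\SRN-\SRM}(\Lambda)\,,
\end{equation}
where $\CM_\SRM$ and $\CM_{\SRN-\SRM}$ are the matrices of average values of the monodromy matrices $\SM_\SRM(\la)$ of the subchain $\2$ (with $\SRM$ sites) and $\SM_{\SRN-\SRM}(\la)$ of the subchain $\1$ (with $\SRN-\SRM$ sites). The input is the corresponding factorization of the monodromy matrices themselves, $\SM_\SRN(\la)=\SM_\SRM(\la)\,\SM_{\SRN-\SRM}(\la)$, whose entries read $\SA_\SRN=\SA_\SRM\SA_{\SRN-\SRM}+\SB_\SRM\SC_{\SRN-\SRM}$, and so on. The structural fact I would lean on is that $\SM_\SRM(\la)$ and $\SM_{\SRN-\SRM}(\mu)$ are built from the Weyl generators $(\su_n,\sv_n)$ attached to disjoint sets of quantum sites, so that every entry of $\SM_\SRM(\la)$ commutes with every entry of $\SM_{\SRN-\SRM}(\mu)$, for all $\la,\mu$. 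The Hermitian conjugation relations of point a), together with \rf{Hermit-Monodromy}, give an internal consistency link between the pairs of identities for $(\CA,\CD)$ and for $(\CB,\CC)$, but the argument below handles all four entries in the same way.

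Next I would expand the $p$-fold products defining the average values; for the $(1,1)$ entry,
\begin{equation}
\CA_\SRN(\Lambda)\,=\,\prod_{k=1}^{p}\SA_\SRN(q^k\la)\,=\,\prod_{k=1}^{p}\Bigl(\SA_\SRM(q^k\la)\,\SA_{\SRN-\SRM}(q^k\la)+\SB_\SRM(q^k\la)\,\SC_{\SRN-\SRM}(q^k\la)\Bigr)\,,
\end{equation}
and similarly for the other three entries. Because the entries of $\SM_\SRM$ commute with those of $\SM_{\SRN-\SRM}$, the two ``pure'' contributions to this product --- keeping the first, respectively the second, summand at all $p$ arguments $q^k\la$ --- collapse to
\begin{equation}
\Bigl(\prod_{k=1}^{p}\SA_\SRM(q^k\la)\Bigr)\Bigl(\prod_{k=1}^{p}\SA_{\SRN-\SRM}(q^k\la)\Bigr)+\Bigl(\prod_{k=1}^{p}\SB_\SRM(q^k\la)\Bigr)\Bigl(\prod_{k=1}^{p}\SC_{\SRN-\SRM}(q^k\la)\Bigr)\,,
\end{equation}
which is exactly $\CA_\SRM(\Lambda)\CA_{\SRN-\SRM}(\Lambda)+\CB_\SRM(\Lambda)\CC_{\SRN-\SRM}(\Lambda)$, the right-hand side of the claimed recursion for $\CA_\SRN$. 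Hence the whole Lemma reduces to proving that all the remaining ``mixed'' ordered products --- those obtained by choosing the first summand for the arguments $q^k\la$ with $k$ in a proper nonempty subset of $\{1,\dots,p\}$ and the second summand for the others --- add up to zero, and likewise for the other three entries.

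The cancellation of these mixed terms is the heart of the matter, and the place where $q$ being a primitive $p$-th root of unity \rf{beta} is used. I see two routes. The first uses the commutation relations of the Yang--Baxter algebra \rf{YBA} between entries of a fixed subchain evaluated at spectral parameters differing by a power of $q$: at $\la/\mu=q^{\pm1}$ the $R$-matrix \rf{Rlsg} degenerates, its inner $2\times2$ block dropping to rank one, and the ensuing relations let one bring any mixed ordered product to a normal form carrying an overall factor $q^{m}$ with $m\not\equiv0\pmod{p}$ attached to a residual index running over $\BZ_p$; that residual sum vanishes because $\sum_{j=0}^{p-1}q^{mj}=0$. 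The second route, which I would prefer since it is self-contained, is to reduce to the case in which one of the two subchains is a single site $n$: there the subchain monodromy is simply $L^{\rm\sst SG}_n(\la)$ from \rf{Lax}, whose off-diagonal entries are monomials in $\la$ times $\sv_n^{\pm1}$ and whose diagonal entries do not depend on $\la$; using only $\su_n\sv_n=q\sv_n\su_n$, the centrality of $\su_n^{p}$ and $\sv_n^{p}$ at the $p$-th root of unity, and $q^{p}=1$, the mixed contributions reduce to elementary geometric sums in $q$ that vanish, which proves $\prod_{k=1}^{p}L^{\rm\sst SG}_n(q^k\la)=\CL_n(\Lambda)$. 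The general split then follows by iterating this single-site peeling and using associativity of the $2\times2$ matrix product; the same iteration yields point b) of the Proposition.

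The main obstacle is precisely this vanishing of the mixed terms: keeping track of the non-commuting entries at the $p$ shifted arguments $q^k\la$ and organising the sums so that the primitive-root-of-unity condition can be applied. A statement of this type for cyclic monodromy matrices goes back to \cite{Ta}, and the present Lemma is Proposition 3 of \cite{NT}, so one may also simply invoke those references. All the rest --- that both sides are even, respectively odd, Laurent polynomials of the asserted degrees, and that the recursion is compatible with the asymptotics and with the centrality established in point a) --- is routine bookkeeping.
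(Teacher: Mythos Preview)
The paper does not prove this Lemma independently: it is imported from \cite{NT} (hence the heading), and the same four relations reappear later in Section~\ref{gauge-invariant-dates} as the cyclicity constraints \rf{CB'}, \rf{Exrecrel+1}--\rf{Exrecrel+3} of the recursive SOV construction. So the paper's own route to these identities runs through the SOV representation --- one builds $\SB$-eigenbases on subchains, imposes $(\ST_a^{\pm})^p=1$, and reads off \rf{average value-B}--\rf{average value-D} --- rather than through any direct expansion of the $p$-fold products. Your plan, by contrast, is the Tarasov route \cite{Ta}: recast the four identities as $\CM_\SRN=\CM_\SRM\CM_{\SRN-\SRM}$, expand $\prod_k\SO_\SRN(q^k\la)$ using the comultiplication $\SM_\SRN=\SM_\SRM\SM_{\SRN-\SRM}$, isolate the two pure terms, and kill the mixed ones. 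That is a legitimate and genuinely different approach, and your Route~1 (exploiting the rank-one degeneration of $R(\la)$ at $\la^2=q^{\pm2}$) is the correct mechanism behind it.

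Route~2, however, does not work as you have stated it. First, the off-diagonal entries of $L_n^{\rm\sst SG}$ in \rf{Lax} are not monomials in $\la$ times $\sv_n^{\pm1}$: each is a two-term Laurent polynomial $\la_n\sv_n^{\pm1}-\la_n^{-1}\sv_n^{\mp1}$, so the ``elementary geometric sums'' do not materialise in the form you suggest. Second, and more seriously, after peeling off a single site the mixed terms in, say, $\prod_k\SA_\SRN(q^k\la)$ factor as a site-$n$ piece times an \emph{ordered} product of $\SA_{\SRN-1}$'s and $\SC_{\SRN-1}$'s at the shifted arguments; the subchain-$\1$ factor depends nontrivially on which subset $S\subsetneq\{1,\dots,p\}$ labels the term, so the single-site Weyl algebra alone cannot force the sum over $S$ to vanish --- you also need the Yang--Baxter commutation relations \emph{within} subchain $\1$, which brings you back to Route~1. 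Finally, the displayed identity $\prod_{k=1}^{p}L_n^{\rm\sst SG}(q^k\la)=\CL_n(\Lambda)$, read as a $2\times2$ matrix-product equation, is neither the statement you need nor generally true: for a single site the monodromy matrix \emph{is} $L_n^{\rm\sst SG}$, so $\CM_1=\CL_1$ holds by definition of the entrywise average and is the trivial base case of point~b), not a product formula to be established.
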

{\bf Remark 1.} \ Being the average values central elements of the representation they are unchanged by
similarity transformations and they therefore represent parameters of the representation. Moreover, the gauge
transformations clearly leave $Z_r$, $Z_\SA$, $Z_\SD$, $\CA_\SRN^{}(Z_r)$ and $\CD_\SRN^{}(Z_r)$
unchanged. Therefore these last numbers characterize gauge-invariant dates of the SOV representations.

\subsubsection{\bf{Calculation of average values}}
It is a simple exercise to show that the average values of the elements of the Lax matrices $L_n^{\rm\sst SG}(\la)$ are explicitly given by
\begin{align}\label{L_n}
\CL_n(\Lambda)&
\,=\,\frac{1}{i^p}
\left(
\begin{matrix} i^p U_n^{}(K_n^{2}V_n^{}+V_n^{-1}) &
K_n(\Lambda V_n/X_n-X_n/V_n\Lambda) \\
K_n(\Lambda/ X_nV_n-X_nV_n/\Lambda) &
i^p U_n^{-1}(K_n^{2}V_n^{-1}+V_n^{})
\end{matrix}
\right),
\end{align}
where we have used the notations
$K_n=\kappa_n^p$,  $X_n=\xi_n^p$,
$U_n=u_n^p$ and $V_n=v_n^p$. This formula together with equality \rf{RRel1a} allows to uniquely define the average values of the monodromy matrix elements and to state:
\begin{lem}\label{Fun-dep}
$\CO_\SRN^{}(\Lambda)/\prod_{n=1}^{\SRN}K_n$ for $\CO=\CA,\CB,\CC,\CD$ are Laurent polynomials of maximal degree 1 in each of the parameters $K_n,X_n,U_n,V_n$.
\end{lem}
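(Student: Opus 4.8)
The plan is to exploit the factorization \rf{RRel1a} together with the explicit one‑site formula \rf{L_n}: once we know that each of the site‑$n$ parameters $K_n,X_n,U_n,V_n$ occurs to a bounded power inside the single factor $\CL_n(\Lambda)$, the claim propagates through the ordered matrix product for free, because parameters carrying distinct site labels sit in disjoint factors and hence their individual degrees cannot add up.

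First I would record the one‑site estimate. Dividing the explicit matrix \rf{L_n} entry‑by‑entry by $K_n$ produces a $2\times2$ matrix $\CL_n(\Lambda)/K_n$ whose entries are Laurent polynomials in $\Lambda$ in which each of $K_n,X_n,U_n,V_n$ appears only with exponent $+1$, $0$ or $-1$: the factor $K_n^{2}$ in the two diagonal entries of \rf{L_n} is reduced to $K_n^{\pm1}$ after the division, while the two off‑diagonal entries lose the explicit $K_n$ altogether and carry only $X_n^{\pm1}$ and $V_n^{\pm1}$. In particular $\CL_n(\Lambda)/K_n$ is of maximal degree $1$ in each of $K_n,X_n,U_n,V_n$, and it does not involve any parameter labelled by a site $m\neq n$.

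Next, by part \emph{b)} of the Proposition above one has $\CM_\SRN(\Lambda)=\CL_\SRN(\Lambda)\CL_{\SRN-1}(\Lambda)\cdots\CL_1(\Lambda)$, so that $\CO_\SRN(\Lambda)/\prod_{n=1}^{\SRN}K_n$, for $\CO=\CA,\CB,\CC,\CD$, is the corresponding entry of the matrix product
\[
\frac{\CL_\SRN(\Lambda)}{K_\SRN}\;\frac{\CL_{\SRN-1}(\Lambda)}{K_{\SRN-1}}\;\cdots\;\frac{\CL_1(\Lambda)}{K_1}\,,
\]
the scalar $\prod_{n}K_n$ factoring cleanly out of the product. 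An entry of such an ordered product is a finite sum of monomials, each of which is a product of exactly one entry drawn from each of the $\SRN$ factors. In any such monomial the dependence on $K_n,X_n,U_n,V_n$ comes solely from the single factor $\CL_n(\Lambda)/K_n$, whose entries are of maximal degree $1$ in these four parameters by the previous step; taking the sum over monomials cannot raise this maximal degree. Hence $\CO_\SRN(\Lambda)/\prod_{n=1}^{\SRN}K_n$ is a Laurent polynomial of maximal degree $1$ in each of $K_n,X_n,U_n,V_n$. (As a consistency check, for $\CO=\CB$ the statement can also be read off directly from \rf{CB}.)

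I do not expect a genuine obstacle here; the only ingredient that must already be in place is the multiplicativity \rf{RRel1a} of the average values along the chain, which is supplied by the preceding Proposition (and ultimately relies on the simplicity of the $\SB$‑spectrum on the whole chain and on all subchains). Beyond that, the argument is the elementary observation that parameters attached to different quantum sites appear in non‑overlapping factors of the product, so their degrees are controlled site by site.
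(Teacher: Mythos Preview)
Your argument is correct and is exactly the one the paper has in mind: the lemma is stated immediately after \rf{L_n} and \rf{RRel1a} with the remark that these two ingredients suffice, and you have simply written out that one‑line observation in full, checking the one‑site degree bound on $\CL_n(\Lambda)/K_n$ and propagating it through the product via the site‑by‑site disjointness of the parameters.
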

Note that the previous lemma also implies that the gauge-invariant dates
of the SOV representations, $Z_r$, $Z_\SA$, $Z_\SD$, $\CA_\SRN^{}(Z_r)$ and $\CD_\SRN^{}(Z_r)$ (up to permutations of $r=1,\dots,[\SRN]$), are uniquely defined in terms of the parameters of the representation $K_n,X_n,U_n,V_n$.

\subsubsection{\bf{Choice of the gauge in the SOV representation}}\label{Choice-gauge}
Let us recall that the SOV coefficients ${\tt a}(\eta _{r})$ and ${\tt d}(\eta _{r})$ are specified only by the average value relations \rf{ADaver} required by the cyclicity of the SOV representation. Then we can fix the gauge in the SOV representation fixing a couple ${\tt a}(\lambda)$ and ${\tt d}(\lambda)$ of Laurent polynomial solutions of the following average relations:
\begin{equation}\label{a-d-functions}
\CA(\Lambda)+\gamma\CB(\Lambda)\,=\,\prod_{k=1}^{p}{\tt a}(q^k\lambda)\,,\qquad
\CD(\Lambda)+\delta\CB(\Lambda)\,=\,\prod_{k=1}^{p}{\tt d}(q^k\lambda)\,,
\end{equation}
where $\gamma$ and $\delta$ are constants to be fixed. As it will be clear in the next sections, it is important to require that these solutions satisfy the further conditions:
\begin{equation}\label{R1}
\left( {\tt a}(\lambda )\right) ^{\ast }={\tt d}(\lambda ^{\ast }),\hspace{6.1cm}
\end{equation}
\begin{equation}\label{R2}
\text{if \ \ }a_0\in\ \text{\large{ Z}}_{{\tt a}(\la)}\,\,\rightarrow\,\, q^k a_0\notin \text{\large{ Z}}_{{\tt a}(\la)}\,\,\,\forall k\in\{1,...,2l\},
\end{equation}
\begin{equation}\label{R3}
\text{\large{ Z}}_{{\tt a}(\la)}\cap\ \text{\large{ Z}}_{\prod_{h=0}^{2l-1}{\tt d}(\lambda q^{h})}=\emptyset,\hspace{4.3cm}
\end{equation}
where $\text{\large{ Z}}_{f(\la)}$ denotes the set of the zeros of the function $f(\la)$. Let us denote with:
\begin{equation}\label{string}
\text{\large s}_{p,\la_0}\equiv(\la_0,q\la_0,...,q^{2l}\la_0)\in\mathbb{C}^p,\,\,\,\,\,\,\,\,\, \text{a $p$-string of center}\,\, \la_0, 
\end{equation}
then condition \rf{R2}, in particular, implies that ${\tt a}(\la)$ and ${\tt d}(\la)$ are free from $p$-strings\footnote{Here and in the following we say that a function $f(\la)$ is free from $p$-strings meaning
that the set of its zeros does not contain $p$-strings, i.e. $\text{\large s}_{p,
\la_0}\not\subset\text{\large{Z}}_{f(\la)}\,
\, \forall \la_0\in\mathbb{C}.$}. Note that in the untwisted representations
($U_n=V_n=1$), we have defined these Laurent polynomials as solutions of \rf{a-d-functions} with $\gamma=-\delta=1$ which also satisfy \rf{R1}-\rf{R3}. This was possible as for untwisted representations it holds:
\begin{equation}
\CA(\Lambda)=\CD(\Lambda), \ \ \ \ \ \CB(\Lambda)=\CC(\Lambda).
\end{equation}
It is worth to note that this properties together with \rf{H-cj-A-D} imply that for untwisted representations the
average values of the elements of
the monodromy matrix are {\it real} Laurent polynomial in $\Lambda$. However, this reality condition is lost for the
twisted representations here considered; indeed, we have the following:
\begin{lem}\label{Average A-not-D} Almost for all the values of the parameters  $K_n,X_n,U_n,V_n$ the Laurent polynomials $\CA(\Lambda)$ and $\CD(\Lambda)$ are not identical as  well as $\CB(\Lambda)$ and $\CC(\Lambda)$.
\end{lem}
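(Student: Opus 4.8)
\textit{Proof strategy.} The plan is to show that the two non-identities are \emph{generic} in the strong sense that each fails only on a proper algebraic subset of parameter space, and then to certify this by producing one explicit admissible parameter point at which both hold. For the first part I would invoke Lemma~\ref{Fun-dep}: it makes every coefficient of the Laurent polynomials $\CA_\SRN(\Lambda)-\CD_\SRN(\Lambda)$ and $\CB_\SRN(\Lambda)-\CC_\SRN(\Lambda)$ a polynomial in $K_n,X_n,U_n,V_n$, so the ``bad'' locus $\{\CA_\SRN\equiv\CD_\SRN\}$ (and likewise $\{\CB_\SRN\equiv\CC_\SRN\}$) is the common zero set of finitely many polynomials; exhibiting a single point outside it already forces it to be a proper subvariety, whence the complement is Zariski-open and dense and the statement ``almost for all values'' follows (and the same conclusion holds inside the real/unitarity slice cut out by \rf{S-adj-C+general}--\rf{cond-T-Normality}, as long as the witness point respects those constraints). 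I would also record the equivalent reading, via \rf{H-cj-A-D}: $\CA_\SRN\equiv\CD_\SRN$ is the same as saying that $\CA_\SRN$ is a \emph{real} Laurent polynomial, so the claim is precisely that twisting destroys the reality present in the untwisted case.

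To build the witness I would restrict to the homogeneous chain $K_n\equiv K$, $X_n\equiv X$, $U_n\equiv U$, $V_n\equiv V$. By \rf{RRel1a} the matrix of average values is then $\CM_\SRN(\Lambda)=\CL(\Lambda)^{\SRN}$, where $\CL(\Lambda)$ is the single-site average Lax matrix \rf{L_n}; since $\CL(\Lambda)$ is $2\times2$, the Cayley--Hamilton theorem gives
\[
\CL(\Lambda)^{\SRN}=p_\SRN(\Lambda)\,\CL(\Lambda)+q_\SRN(\Lambda)\,I ,
\]
with $p_\SRN,q_\SRN$ the Chebyshev-type polynomials in $\Tr\CL(\Lambda)$ and $\det\CL(\Lambda)$ normalized by $p_1\equiv1$. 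Comparing the $(1,1)$ vs.\ $(2,2)$ and the $(1,2),(2,1)$ entries of this identity yields
\[
\CA_\SRN(\Lambda)-\CD_\SRN(\Lambda)=p_\SRN(\Lambda)\,\bigl(\CA_1(\Lambda)-\CD_1(\Lambda)\bigr),\qquad
\CB_\SRN(\Lambda)-\CC_\SRN(\Lambda)=p_\SRN(\Lambda)\,\bigl(\CB_1(\Lambda)-\CC_1(\Lambda)\bigr).
\]
From \rf{L_n} the single-site differences are $\CA_1-\CD_1=U(K^2V+V^{-1})-U^{-1}(K^2V^{-1}+V)$ and $\CB_1-\CC_1$ a non-zero multiple of $(V-V^{-1})(\Lambda/X+X/\Lambda)$. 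Thus I only need $U,V,K,X$ --- realizable with $|u_n|=|v_n|=1$, $\kappa_n,\xi_n\in\mathbb{R}$ and $\kappa_n\xi_n/(\kappa_n^{\ast}\xi_n^{\ast})$ uniform --- for which $\CA_1-\CD_1$, $\CB_1-\CC_1$, and $\Tr\CL=\CA_1+\CD_1$ are simultaneously non-trivial; each is visibly a non-zero function of $(U,V,K,X)$, so admissible such choices form a non-empty open set, e.g.\ $u_n=e^{\pi i/(3p)}$, $v_n=e^{\pi i/(2p)}$, $\kappa_n=2$, $\xi_n=1$.

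The one point that needs a real, if short, argument is $p_\SRN(\Lambda)\not\equiv0$ at such a point, uniformly in $\SRN$. I would write $p_\SRN(\Lambda)=\bigl(\mu_+(\Lambda)^{\SRN}-\mu_-(\Lambda)^{\SRN}\bigr)\big/\bigl(\mu_+(\Lambda)-\mu_-(\Lambda)\bigr)$ in terms of the eigenvalues $\mu_\pm(\Lambda)$ of $\CL(\Lambda)$: if $p_\SRN\equiv0$ then $(\mu_+/\mu_-)(\Lambda)$ would be a constant $\SRN$-th root of unity, yet $\mu_+/\mu_-+\mu_-/\mu_+=(\Tr\CL)^2/\det\CL-2$ is a non-constant function of $\Lambda$ because $\Tr\CL$ is a non-zero constant while $\det\CL(\Lambda)$ is non-constant (its $\CB_1\CC_1$ part already contains a $\Lambda^{2}/X^{2}$ term; cf.\ \rf{q-det-f}) --- a contradiction. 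Granting this, the two displayed factorizations immediately give $\CA_\SRN\neq\CD_\SRN$ and $\CB_\SRN\neq\CC_\SRN$ at the chosen admissible point, completing the argument. The main obstacle is therefore not structural but purely arithmetic: guaranteeing that none of the three scalar factors $p_\SRN$, $\CA_1-\CD_1$, $\CB_1-\CC_1$ vanishes identically at the witness point, since any one of them doing so would make the reduction vacuous.
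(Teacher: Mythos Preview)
Your argument is correct but takes a different route from the paper. The paper's proof is a two-line induction: the base case $\SRN=1$ is read off directly from the single-site average formula \rf{L_n}, and the step from $\SRN-1$ to $\SRN$ is delegated to the recursion relations of Lemma~\ref{average value}. You instead reduce, once and for all, to exhibiting a single witness point via the Zariski-density observation drawn from Lemma~\ref{Fun-dep}, and then manufacture that witness on the homogeneous chain, where Cayley--Hamilton gives the closed factorizations $\CA_\SRN-\CD_\SRN=p_\SRN\,(\CA_1-\CD_1)$ and $\CB_\SRN-\CC_\SRN=p_\SRN\,(\CB_1-\CC_1)$; the only non-trivial work is the verification $p_\SRN\not\equiv0$ via the eigenvalue-ratio argument. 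The paper's route is shorter because it recycles recursion machinery already in place for the SOV construction and never needs an explicit witness; yours is logically self-contained, makes the ``almost all'' quantifier transparent as a Zariski-open condition, and delivers an explicit closed formula for the differences in the homogeneous case---at the price of the extra (though routine) $p_\SRN\not\equiv0$ check.
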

\begin{proof}
The Lemma is true for $\SRN=1$, as it is clear from the one-site average formula \rf{L_n}. Then, for $\SRN>1$, the proof follows by induction using Lemma $\ref{average value}$. 
\end{proof}
The previous Lemma and the Hermitian conjugation properties \rf{H-cj-A-D} imply that for generic twisted representations the only way to simultaneously satisfy \rf{a-d-functions} and \rf{R1} is by imposing $\gamma=\delta=0$. Then representing:
\begin{equation}\label{a-def}
\CA_\SRN(\Lambda)\,\equiv\,\CA_\SRN\prod_{n=1}^{\bar\SRN}(\Lambda/Z_n^{\CA}-Z_n^{\CA}/\Lambda)\,,
\end{equation}
 we can chose to define ${\tt d}(\la)$ by \rf{R1} and ${\tt a}(\la)$ as any even Laurent polynomial of the form:
\begin{equation}\label{a-def}
{\tt a}(\la)\,\equiv\,{\tt a}_\SRN\prod_{n=1}^{\bar\SRN}(\la/z_n^{\tt a}-z_n^{\tt a}/\la)\, \,\, \,\, \, \text{with}\,\, {\tt a}_\SRN^p\equiv\CA_\SRN \,\,\text{and}\,\, (z_n^{\tt a})^p\equiv Z_n^{\CA},
\end{equation}
with asymptotics for $\SRN$ even:
\begin{equation}\label{a-asymp}
\lim_{\log \lambda\rightarrow \mp \infty }\lambda ^{\pm \SRN}{\tt a}(\lambda )=\left( \prod_{a=1}^{\SRN}i\kappa _{a}\xi _{a}^{\pm 1}\right)\theta^{\pm 1}q^{(1\mp 1)\SRN},
\end{equation}
where $\theta$ is a fixed $p$-root of the average value of the $\Theta$-charge. Moreover, we impose the following prescriptions:
\begin{itemize}
\item[a)] In the case of real or imaginary zeros of $\CA(\Lambda)$:
\begin{equation}
\text{If \ }\exists\, n\in \{1,...,\bar\SRN\}:Z_{n}^{\CA}\in \text{\large{ Z}}_{\CA(\Lambda)},\,(Z_{n}^{\CA})^2\in\mathbb{R}\,\, \rightarrow\,\, z_{n}^{\tt a}\in\text{\large{ Z}}_{\tt a(\la)},\,(z_{n}^{\tt a})^2/q\in\mathbb{R}.\,\,\,\,\,\,\,\,\,\,\,\,\,\,\,
\end{equation}
\item[b)] In the case of multiple zeros of $\CA(\Lambda)$ with multiplicity $r$:
\begin{equation}
\text{If \ }\exists\, n_{1},...,n_{r}\in \{1,...,\bar\SRN\}:Z_{n_{1}}^{\CA}=...=Z_{n_{r}}^{\CA}\in \text{\large{ Z}}_{\CA(\Lambda)}\,\, \rightarrow\,\, z_{n_{1}}^{\tt a}=...=z_{n_{r}}^{\tt a}\in \text{\large{ Z}}_{\tt a(\la)}.
\end{equation}
\item[c)] In the case of couples of complex conjugate zeros of $\CA(\Lambda)$:
\begin{equation}
\text{If \ }\exists\, n\neq m\in \{1,...,\bar\SRN\}:(Z_n^{\CA})^{\ast }=Z_m^{\CA}\in \text{\large{ Z}}_{\CA(\Lambda)}\,\,\, \rightarrow\,\,\,(z_n^{\tt a}/q)^{\ast }= z_m^{\tt a}\in \text{\large{ Z}}_{\tt a(\la)}.\,\,\,\,\,\,\,\,\,\,\,\,\,\,
\end{equation}
\end{itemize}
Then under these conditions ${\tt a}(\la)$ and ${\tt d}(\la)$ satisfy the requirements \rf{R1}-\rf{R3}.  
\subsection{Constructive proof of the existence of cyclic SOV representations}\label{SOV construction-0}
In the following subsections we will show by recursive construction the

\begin{thm} \label{SOVthm}
Almost for all the values of the parameters  $K_n,X_n,U_n,V_n$ there exists a SOV representation for the SG model, in fact, we can prove that the one-parameter operator family $ \SB(\la)$ is diagonalizable with simple spectrum.
\end{thm}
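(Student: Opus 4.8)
The plan is to prove the second, stronger assertion by induction on $\SRN$; the first then follows, since once $\SB_\SRN(\la)$ is diagonalizable with simple joint spectrum the SOV representation is simply the one in its eigenbasis, and the forms \rf{SAdef}--\rf{SDdef} of $\SA_\SRN,\SD_\SRN$ (and of $\SC_\SRN$ through the quantum determinant \rf{qdetdef}--\rf{addet}) are forced by the Yang--Baxter algebra \rf{YBA}, as recalled above. Concretely I would show, for almost all $K_n,X_n,U_n,V_n$, that $\SB_\SRN(\la)$ is diagonalizable and that the map sending a joint eigenvector to its eigenvalue function $\eta_\SRN^{{\rm e}_\SRN}b_\eta(\la)$ in the notation of \rf{Bdef} is injective on $\BB_\SRN$. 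The base case $\SRN=1$ is immediate: by \rf{Lax} and \rf{Mdef}, $\SB_1(\la)=\frac{\kappa_1}{i}\bigl(\frac{\la}{\xi_1}\sv_1-\frac{\xi_1}{\la}\sv_1^{-1}\bigr)$ is, for every $\la$, a Laurent polynomial in the single cyclic shift $\sv_1$, which has $p$ distinct eigenvalues; hence $\SB_1(\la)$ is diagonal in the $\su_1$-basis with the $p$ pairwise distinct eigenvalue functions $\frac{\kappa_1}{i}(\la/\eta_1-\eta_1/\la)$, and the simplicity requirement $\zeta_a^p\neq\zeta_b^p$ is vacuous since $[\SRN]=1$.

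For the inductive step I would peel off the last site via $\SM_\SRN(\la)=L^{\rm\sst SG}_\SRN(\la)\,\SM_{\SRN-1}(\la)$, which gives
\begin{equation*}
\SB_\SRN(\la)\,=\,\SA^{L_\SRN}\,\SB_{\SRN-1}(\la)\,+\,\SB^{L_\SRN}(\la)\,\SD_{\SRN-1}(\la),
\end{equation*}
where $\SA^{L_\SRN},\SB^{L_\SRN}(\la)$ are the first-row entries of the Lax operator $L^{\rm\sst SG}_\SRN$. The crucial point, visible in \rf{Lax}, is that $\SA^{L_\SRN}=\kappa_\SRN\su_\SRN(q^{-1/2}\kappa_\SRN\sv_\SRN+q^{1/2}\kappa_\SRN^{-1}\sv_\SRN^{-1})$ and the $(2,2)$-entry $\SD^{L_\SRN}$ are \emph{independent of $\la$}, while $\SB^{L_\SRN}(\la)=\frac{\kappa_\SRN}{i}(\frac{\la}{\xi_\SRN}\sv_\SRN-\frac{\xi_\SRN}{\la}\sv_\SRN^{-1})$ is affine in $\la^{\pm1}$, so the whole $\la$-dependence of $\SB_\SRN$ is carried by $\SB_{\SRN-1}(\la)$ and $\SD_{\SRN-1}(\la)$. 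I would realize the $\SRN$-site model on the SOV space of the $(\SRN-1)$-chain tensored with $\BC^p$, keeping $\SB_{\SRN-1}$ diagonal in the inherited basis $\langle\eta^{(\SRN-1)}|$ (inductive hypothesis) and letting $(\su_\SRN,\sv_\SRN)$ act by \rf{reprdef}, and then substitute the SOV forms \rf{SAdef}--\rf{SDdef} (for the $(\SRN-1)$-chain) for $\SA_{\SRN-1},\SD_{\SRN-1}$. Evaluating $\SB_\SRN(\la)$ on $\langle\eta^{(\SRN-1)}|$ at the point $\la=\eta_a$ where $\SB_{\SRN-1}$ vanishes kills the first term and, since there $\SD_{\SRN-1}(\eta_a)={\tt d}_{\SRN-1}(\eta_a)\ST_a^{+}$ by \rf{SDdef}, leaves the coupled shift $\frac{\kappa_\SRN}{i}\bigl(\frac{\eta_a}{\xi_\SRN}\sv_\SRN-\frac{\xi_\SRN}{\eta_a}\sv_\SRN^{-1}\bigr){\tt d}_{\SRN-1}(\eta_a)\ST_a^{+}$. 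These $[\SRN-1]$ interpolation values, together with the leading $\la^{\pm[\SRN]}$-asymptotics of $\SB_\SRN(\la)$ read off from \rf{Lax} and \rf{Mdef}, determine $\SB_\SRN(\la)$ as a degree-$[\SRN]$ Laurent polynomial and, by the standard SOV recursion, let one write down an explicit candidate eigenbasis $\langle\eta^{(\SRN)}|$: the $[\SRN-1]$ old separated variables are inherited, and diagonalizing the residual action at site $\SRN$ --- a $p\times p$ problem whose transition matrix is, once the other variables are frozen, of Vandermonde type --- produces the remaining separated variable together with the $\eta_\SRN$-index of \rf{Bdef}, the parity bookkeeping being governed by $[\SRN]=\SRN-{\rm e}_\SRN$ and $\bar\SRN$. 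One then checks that the resulting $p^\SRN$ covectors are linearly independent, hence a basis, and that $\SB_\SRN(\la)$ is diagonal in it with eigenvalues $\eta_\SRN^{{\rm e}_\SRN}b_\eta(\la)$.

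To see that this works for almost all $K_n,X_n,U_n,V_n$ rather than all: each inductive step requires the non-vanishing of finitely many explicit polynomials in the parameters --- the condition $\zeta_a^p\neq\zeta_b^p$ ensuring simplicity of the $\SB_\SRN$-spectrum (equivalently, that the scalar Laurent polynomial $\CB_\SRN(\Lambda)$ of \rf{CB} has the expected distinct-root structure), the invertibility of $\SB^{L_\SRN}(\eta_a)$ on $\BC^p$, and the invertibility of the Vandermonde-type transition matrices above. By Lemma~\ref{Fun-dep} these are genuine polynomials in $K_n,X_n,U_n,V_n$, and none of them is identically zero, as one checks by specializing to any admissible untwisted representation, for which the whole SOV construction was carried out in \cite{NT} (alternatively, one can induct on the chain length using the recursive identities of Lemma~\ref{average value} to exhibit generic parameters with distinct $Z_a$). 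Hence the ``bad'' parameters lie in a proper algebraic subset; moreover the ``good'' set can be arranged inductively so that every initial sub-tuple of a good tuple is again good, which is what makes the induction self-consistent.

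The step I expect to be the main obstacle is the inductive diagonalization itself: showing that the covectors obtained by freezing the old separated variables and diagonalizing the site-$\SRN$ block are genuinely linearly independent and span the whole space --- so that $\SB_\SRN(\la)$ is diagonalizable rather than merely triangularizable --- and carrying this out uniformly in the parity of $\SRN$, where the role of the extra $\eta_\SRN$-label versus an additional genuine separated variable changes. Assembling the various polynomial non-degeneracy conditions into a single ``almost all'' statement is then routine.
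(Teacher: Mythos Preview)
Your inductive strategy---split off one site, use $\SB_\SRN=\SA^{L_\SRN}\SB_{\SRN-1}+\SB^{L_\SRN}\SD_{\SRN-1}$, evaluate at the subchain zeros, and control the exceptional parameter set algebraically---is indeed the paper's approach (with $\SRM=1$). There is, however, a genuine gap in your description of the inductive step. You write that ``the $[\SRN-1]$ old separated variables are inherited, and diagonalizing the residual action at site $\SRN$\dots produces the remaining separated variable.'' This is not how the recursion works: the zeros $\eta_a^{(\SRN)}$ of $\SB_\SRN$ are \emph{not} the zeros of $\SB_{\SRN-1}$ plus one new one. Rather, the $p$-th powers $Z_a=\eta_a^p$ are the zeros of the Laurent polynomial $\CA_{\SRM}(\Lambda)\CB_{\SRN-\SRM}(\Lambda)+\CB_{\SRM}(\Lambda)\CD_{\SRN-\SRM}(\Lambda)$, a completely new set of $[\SRN]$ numbers determined by the average-value recursion \rf{average value-B}. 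What the subchain zeros $\chi_{\1 a},\chi_{\2 a}$ do is give interpolation points at which the kernel $K_\SRN(\eta\,|\,\chi_\2;\chi_\1)$ satisfies explicit one-step shift relations in the $\chi$-variables (your evaluation at $\la=\eta_a$ is this step), and a separate family of recursions, obtained from $\SD_\SRN(\eta_i)$ and the quantum determinant, fixes the dependence on the \emph{new} labels $\eta$. Your ``Vandermonde-type $p\times p$ block'' does not appear; there is no single-site diagonalization that produces the last variable.

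Two further points. First, a small slip in the base case: $\SB_1(\la)$ is a Laurent polynomial in $\sv_1$, so it is diagonal in the $\sv_1$-eigenbasis, not the $\su_1$-basis as you wrote. Second, your justification of simplicity (``specialize to an untwisted representation as in \cite{NT}'') is softer than what the paper does: one must show the $Z_a$ are pairwise distinct for generic parameters. The paper proves this (Proposition~\ref{B-simplicity}) by computing the Jacobian $\det(\partial Z_r/\partial X_s)$ at the special point $K_n/V_n=\pm i^p$, where the average Lax matrices $\CL_n(\Lambda)$ become purely off-diagonal, $\CB_\SRN(\Lambda)$ factorizes explicitly as in \rf{CBsimp}, and the Jacobian is manifestly nonzero. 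This is the part of the argument that genuinely establishes ``simple spectrum for almost all parameters,'' and your proposal does not supply a substitute for it.
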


\subsubsection{{\bf Recursive construction of $\SB$-eigenstates}}\label{SOV construction}
We will construct the eigenstates $\langle\,\eta\,|$ of $\SB(\la)\equiv \SB_\srn(\la)$
recursively by induction on $\SRN$.

In the case $\SRN=1$ we may
simply take $\langle\,\eta_1\,|\,=\,\langle\, v_1\,|, $ where  $\langle\, v_1\,|$ is an eigenstate of
the operator $\sv_1$ with eigenvalue $v$. It is useful to note that the inhomogeneity parameter
determines the subset of $\BC$ on which the variable $\eta_1$ lives, $\eta_1\in(\xi_1/v_1)\BS_p$.

Let us assume we have constructed the $\SB_{\srm}$-eigenbasis for any $\SRM<\SRN$, then the eigenstates
$\langle\,\eta\,|$ of $\SB_{\srn}(\la)$ may be constructed in the following form
\begin{equation}
\langle\,\eta\,|\,=\,\sum_{\chi_{\1}^{}}\sum_{\chi_\2^{}}\,K_\srn^{}(\,\eta\,|\,\chi_{\2}^{};\chi_\1^{}\,)\,
\langle \,\chi_{\2}^{}\,|\ot\langle\,\chi_\1^{}\,|\;,
\end{equation}
where $\langle \,\chi_{\2}^{}\,|$ and $\langle \,\chi_{\1}^{}\,|$ are eigenstates of $\SB_\srm(\la)$ and $\SB_{\srn-\srm}(\la)$
with eigenvalues parameterized as in \rf{Bdef}  by the tuples $\chi_\2^{}=(\chi_{\2 a}^{})_{a=1,\dots ,\srm}^{}$
and $\chi_\1^{}=(\chi_{\1 a}^{})_{a=1,\dots,\srn-\srm}^{}$, respectively. It suffices to consider the cases where $\SRN-\SRM$ is
odd.

\begin{itemize}
\item[]\hspace{-0.8cm}{\it\ref{SOV construction}.a} \  \ {\it Dependence of the kernel $K_\srn(\,\eta\,|\,{\chi}_{\2}^{};{\chi}_\1^{}\,)$ w.r.t. $\chi_{\2}$ and $\chi_{\1}$.}
\end{itemize}
From the formula
\begin{equation}
\SB_\srn(\la)=\SA_{\2 \ \srm}(\la)\SB_{\1 \ \srn-\srm}(\la)+{\SB}_{\2 \ \srm}(\la)\SD_{\1 \ \srn-\srm}(\la)
\end{equation}
it follows that the matrix elements of the kernel $K_\srn(\,\eta\,|\,{\chi}_{\2}^{};{\chi}_\1^{}\,)$ have to satisfy the relations
\begin{equation}\begin{aligned}\label{recrel}
\big(\SA_{\2 \ \srm}(\la)\SB_{\1 \ \srn-\srm}(\la) +{\SB}_{\2 \ \srm}(\la)& \SD_{\1 \ \srn-\srm}(\la)\big)^t\,K_\srn^{}(\,\eta\,|\,\chi_{\2}^{};\chi_\1^{}\,)
\\
&\,=\,\eta_\srn^{{\rm e}_\srn}\prod_{n=1}^{\SRN}\frac{\kappa _{n}}{i}\prod_{a=1}^{[\SRN]}\left( \la/\eta_a-\eta_a/\la\right)\,K_\srn^{}(\,\eta\,|\,\chi_{\2}^{};\chi_{\1}^{}\,)\,,
\end{aligned}\end{equation}
where we used the notation $\SO^t$ for the transpose of an operator $\SO$. Let us assume that
\begin{equation}
\chi_{\1a}q^{h_{1}}\notin\text{\large{Z}}_{{\rm det_q}\SM_{\1, \SRN-\SRM}(\la)}, \ \ \chi _{\2b}q^{h_{2}}\notin\text{\large{Z}}_{{\rm det_q}\SM_{\2, \SRM}(\la)}\ \ \text{and} \ \ \chi _{\1a}q^{h_{1}}\neq \chi _{\2b}q^{h_{2}},
\end{equation}
where $h_{i}\in \{1,...,p\}$, $a\in \{1,...,\SRN-\SRM\}$ and $b\in \{1,...,\SRM\}$. Under these assumptions the previous equations yield recursion relations for
the dependence of the kernel in the variables $\chi_{\1 a}$ and $\chi_{\2 b}$ simply by setting $\la=\chi_{\1 a}$
and $\la=\chi_{\2 b}$. Indeed for $\la=\chi_{\1 a}$
the first term on the left of \rf{recrel} vanishes leading to
\begin{equation}\label{recrel1}\begin{aligned}
 {\ST_{\1 a}^{^-}
 K_\srn^{}(\,\eta\,|\,\chi_{\2}^{};\chi_{\1}^{}\,)}\;&{d_\1^{}(q^{-1}\chi_{\1a}^{})}\; \,\chi_\srm^{{\rm e}_\srm}\prod_{n=1}^{\SRN-\SRM}
\frac{i}{\kappa _{n}}\prod_{a=1}^{[\rm M]}(\chi_{\1 a}^{}/\chi_{\2 b}^{}-\chi_{\2 b}^{}/\chi_{\1 a}^{})\, \\[-1ex]
& \,=\, {K_\srn^{}(\,\eta\,|\,\chi_{\2}^{};\chi_{\1}^{}\,)}\; \,\eta_\srn^{{\rm e}_\srn}\prod_{b=1}^{[\SRN]}(\chi_{\1 a}^{}/\eta_b^{}-\eta_b^{}/\chi_{\1 a}^{})
\,,
\end{aligned}\end{equation}
while for $\la=\chi_{\2a}$ one finds similarly
\begin{equation}\label{recrel2}\begin{aligned}
{\ST_{\2 a}^{^+}
 K_\srn^{}(\,\eta\,|\,\chi_{\2}^{};\chi_{\1}^{}\,)}\; &{a_\2^{}(q^{+1}\chi_{\2a}^{})}\;
 \prod_{n=1}^{\rm M}\frac{i}{\kappa _{n}} \prod_{b=1}^{\SRN-\rm M}(\chi_{\2 a}^{}/\chi_{\1 b}^{}-\chi_{\1 b}^{}/\chi_{\2 a}^{} )\,
\\[-1ex] &\,=\,{K_\srn^{}(\,\eta\,|\,\chi_{\2}^{};\eta_{\1}^{}\,)}\; \,\eta_\srn^{{\rm e}_\srn}\prod_{b=1}^{[\SRN]}
(\chi_{\2 a}^{}/\eta_b^{}-\eta_b^{}/\chi_{\2 a}^{})
\,,
\end{aligned}\end{equation}
where ${d_\1^{}(\chi_{\1 a}^{})}$ and ${a_\2^{}(\chi_{\2 a}^{})}$ are the known coefficients of the SOV representations in the subchains $\1$ and $\2$.  
If ${\rm M}$ is even we find the recursion relation determining the dependence on $\chi_{\2\srm}$ by sending $\la\ra\infty$
in \rf{recrel}, leading to
\begin{equation}\label{recrelzero}
{\ST_{\2 \srm}^{^+}
 K_\srn^{}(\,\eta\,|\,\chi_{\2}^{};\chi_{\1}^{}\,)}\; \frac{1}{\chi_{\2 \SA}^{}}\,\prod_{a=1}^{\rm M-1}\frac{1}{\chi_{\2 a}^{}}\,\prod_{b=1}^{\SRN-\rm M}\frac{1}{\chi_{\1 b}^{}}\,
\,=\,{K_\srn^{}(\,\eta\,|\,\chi_{\2}^{};\eta_{\1}^{}\,)}\;
\prod_{b=1}^{\SRN}\frac{1}{\eta_b^{}}
\,.
\end{equation}
\newpage 

\begin{itemize}
\item[]\hspace{-0.8cm}{\it\ref{SOV construction}.b} \  \  {\it Dependence of the kernel $K_\srn(\,\eta\,|\,{\chi}_{\2}^{};{\chi}_\1^{}\,)$ w.r.t. $\eta$.}
\end{itemize}
The action
\begin{equation}
\langle \eta |\SD_\srn(\eta _{i})=-{\rm det_q}\SM_{\2 \ \srm}(q\eta _{i})\langle \eta
|{\SB}_{\2 \ \srm}^{-1}(q\eta _{i}){\SB}_{\1 \ \srn-\srm}(\eta _{i}),
\end{equation}
follows from the formula
\begin{equation}\label{D-exp}
\SD_\srn(\la)=\SD_{\2 \ \srm}(\la)\SD_{\1 \ \srn-\srm}(\la)+{\SC}_{\2 \ \srm}(\la)\SB_{\1 \ \srn-\srm}(\la)
\end{equation}
when we express ${\SC}_{\2 \ \srm}(\la)$ by the quantum determinant ${\rm det_q}\SM_{\2 \ \srm}(\la)$ in the subchain $\2$ and we use that
\begin{equation}
\langle \eta |{\SB}_{\1 \ \srn-\srm}(\eta _{i})=-\langle \eta |{\SD}_{\1 \ \srn-\srm}(\eta _{i}){\SA}_{\2 \ \srm}^{-1}(q\eta _{i}){\SB}_{\2 \ \srm}(q\eta _{i}).
\end{equation}
So the kernel matrix elements $K_\srn(\,\eta\,|\,{\chi}_{\2}^{};{\chi}_\1^{}\,)$ have to satisfy the following recursion relations for
the dependence w.r.t. $\eta$
\begin{equation}\label{recrel+1}\begin{aligned}
 {d_\srn(\eta_{i}^{})}{\ST_{i}^{^+}
 K_\srn^{}(\,\eta\,|\,\chi_{\2}^{};\chi_{\1}^{}\,)}=-\frac{{K_\srn^{}(\,\eta\,|\,\chi_{\2}^{};\chi_{\1}^{}\,)}{\rm det_q}\SM_{\2 \ \srm}(q\eta _{i})}{\chi_{\2 \srm}^{{\rm e}_\srm}\prod_{a=\SRN-\SRM+1}^{\SRN}\kappa_{a}/i}\frac{\prod_{a=1}^{\SRN-\SRM}(\kappa_{a}/i)\left( \eta _{i}/\chi_{\1 a}-\chi_{\1 a}/\eta _{i}\right) }{\prod_{a=1}^{[\SRM]}\left(q\eta _{i}/\chi _{\2 a}-\chi _{\2 a}/q\eta _{i}\right) }
\,.
\end{aligned}\end{equation}
If ${\rm N}$ is even we have to consider further the asymptotic of the equation \rf{D-exp} which by the asymptotic behavior, analyzed in appendix \ref{Asymp-A-D}, leads to the identity:
\begin{equation}
\lim_{\log \lambda \rightarrow \mp \infty }\lambda ^{\pm \SRN}\SD_\srn(\la)=-\lim_{\log \lambda \rightarrow \mp \infty }\lambda ^{\pm \SRN}{\rm det_q}\SM_{\2 \ \srm}(\la q){\SB}_{\2 \ \srm}^{-1}(\la q){\SB}_{\1 \ \srn-\srm}(\la).
\end{equation}
From which we get the following recursion relation determining the kernel dependence on $\eta_{\srn}$:
\begin{equation}\label{recrel+zero}
{\ST_{\srn}^{^+} K_\srn^{}(\,\eta\,|\,\chi_{\2}^{};\chi_{\1}^{}\,)\; \eta_{\SD}^{}}\,\prod_{a=1}^{\rm N-1}\eta_{a}^{}\,=\,{K_\srn^{}(\,\eta\,|\,\chi_{\2}^{};\eta_{\1}^{}\,)}\;
\prod_{b=1}^{\SRM}\frac{\xi_{\srn-\srm+b}^{2}}{\chi_{\2 \ b}^{}}\,\prod_{b=1}^{\SRN-\SRM}\chi_{\1 \ b}^{}\,.
\end{equation}

\subsubsection{\bf{Determination of gauge-invariant SOV dates: $Z_r,\,Z_{\SA},\,Z_{\SD},\,\CA_\SRN^{}(Z_r),\,\CD_\SRN^{}(Z_r)$}}\label{gauge-invariant-dates}
The condition of compatibility among the recursion relations \rf{recrel1}-\rf{recrel2} and \rf{recrelzero} and the requirement of cyclicity,  $ ({\ST_{\1
a}^{^-}})^p=({\ST_{\2 a}^{^+})^p=1}$, provide a system of $\SRN$ algebraic equations in the $\SRN$ unknown
$Z_a\equiv\eta_a^p$:
\begin{equation}\label{Exrecrel1}
\CD_{\1 \,\SRN-\SRM}^{}(\chi_{\1a}^{p})\; (\chi_{\2\srm}^{{\rm e}_\srm})^{p}\prod_{n=1}^{\SRN-\SRM}
\frac{i^p}{\kappa _{n}^p}\prod_{b=1}^{[{\rm M}]}
 (\chi_{\1 a}^{p}/\chi_{\2 b}^{p}-\chi_{\2 b}^{p}/\chi_{\1 a}^{p})\,
\,=\, (\eta_\srn^{{\rm e}_\srn})^{p}\prod_{b=1}^{[\SRN]}(\chi_{\1 a}^{p}/\eta_b^{p}-\eta_b^{p}/\chi_{\1 a}^{p})\,,
\end{equation}
\begin{equation}\label{Exrecrel2}
\CA_{\2 \,\SRM}^{}(\chi_{\2 a}^p)\; \prod_{n=1}^{\SRM}
\frac{i^p}{\kappa _{n}^p}
 \prod_{b=1}^{\SRN-\rm M}(\chi_{\2 a}^{p}/\chi_{\1 b}^{p}-\chi_{\1 b}^{p}/\chi_{\2 a}^{p} )\,=\,
 (\eta_\srn^{{\rm e}_\srn})^{p}\prod_{b=1}^{[\SRN]}(\chi_{\2 a}^{p}/\eta_b^{p}-\eta_b^{p}/\chi_{\2 a}^{p})\,,\,\,\,\,\,\,\,\,\,\,\,\,\,\,\,\,
\end{equation}
with for $\SRM$ even 
\begin{equation}\label{Exrecrel3}
\frac{1}{\chi_{\2 \SA}^{p}}\,\prod_{a=1}^{\rm M-1}\frac{1}{\chi_{\2 a}^{p}}\,
\prod_{b=1}^{\SRN-\rm M}\frac{1}{\chi_{\1 b}^{p}}\,
\,=\,
\prod_{b=1}^{\SRN}\frac{1}{\eta_b^{p}}
\,.
\end{equation}
The simplicity of the spectrum of $\SB(\la)$ in the SOV representations of both the subchains $\1$ and $\2$ implies that the above system of equations completely determines the unknown $Z_a$ in terms of $\chi_{\2 a}^p$, $\chi_{\1 a}^p$. Indeed, we can reformulate this system of equations in terms of the following Laurent polynomial equation:
\begin{equation}\label{CB'}
\mathcal{A}_{\SRM}(\Lambda )\mathcal{B}%
_{\SRN-\SRM}(\Lambda )+\mathcal{B}_{\SRM}(\Lambda )\mathcal{D}_{\SRN-\SRM}(\Lambda )=
Z_{{\SRN}}^{{\rm e}_{\SRN}}
\prod_{n=1}^{\SRN}\frac{K_n}{i^p}
\prod_{a=1}^{{[}\SRN{]}}(\Lambda/Z_a-Z_a/\Lambda),
\end{equation}
where the l.h.s. is formed out of the known average values of the monodromy matrix elements in the subchains $\1$ and $\2$. Therefore the problem to determine the unknown $Z_a$ for $a\in \{1,...,[\SRN]\}$ is reduced to the problem to determine the zeros of the known Laurent polynomial at the l.h.s. of \rf{CB'}.

The requirement of cyclicity  $({\ST_{i}^{^+})^p=1}$ determines the remaining gauge-invariant SOV dates:
\begin{equation}\label{Exrecrel+1}
\CD_{\SRN}^{}(Z_{i})=\, -{\rm det_q}\CM_{\SRM}(Z_{i})\frac{\CB_{\SRN-\SRM}(Z_{i})}{\CB_{\SRM}(Z_{i})} , \ \ 
\CA_{\SRN}^{}(Z_{i})=\, -{\rm det_q}\CM_{\SRN-\SRM}(Z_{i})\frac{\CB_{\SRM}(Z_{i})}{\CB_{\SRN-\SRM}(Z_{i})},
\end{equation}
for any $i\in\{1,...,[\SRN]\}$ where we have defined:
\begin{equation}
{\rm det_q}\CM_{X}(\Lambda)\equiv\prod_{a=1}^{p}{\rm det_q}\SM_{X}(q^{a}\la), \ \ X={\SRM,\SRN-\SRM, \SRN},
\end{equation}
while for $\SRN$ even the condition $({\ST_{\SRN}^{^+})^p=1}$ reads:
\begin{equation}\label{Exrecrel+3}
Z_{\SD} = \prod_{a=1}^{\rm N-1}\frac{1}{Z_{a}}\,\prod_{b=1}^{\SRN-\rm M}\chi_{\1 b}^{p}\,\prod_{b=1}^{\SRN}\frac{X_{\SRN-\SRM+b}^2}{\chi_{\2 b}^{p}}\,.
\end{equation}
Note that from the identity:
\begin{equation}
{\rm det_q}\CM_{X}(\Lambda)=\CA_{X}(\Lambda)\CD_{X}(\Lambda)-\CB_{X}(\Lambda)\CC_{X}(\Lambda), \ \ X={\SRM,\SRN-\SRM},
\end{equation}
the equations \rf{Exrecrel+1} and \rf{Exrecrel+3} are equivalent to the equations \rf{average value-A} and \rf{average value-D} as well as the equation \rf{CB'} coincides with \rf{average value-B} of Lemma \ref{average value}.

\subsubsection{{\bf Completeness of $\SB$-eigenstates and simplicity of $\SB$-spectrum}}\label{nondegapp}
Here we show that the set of $\SB$-eigenstates $\langle\,\eta\,|$ constructed in the previous subsection is complete, i.e. it
defines a basis of the representation; this is done showing that there are $p^\SRN$ distinct corresponding $\SB$-eigenvalues. In particular, we can prove:
\begin{propn}
\label{B-simplicity}The SOV dates $Z_{r}$ with $r\in \{1,...,[\SRN]\}$ are all distinct for almost all the values of the parameters
$K_n,X_n,U_n,V_n$ of the SG model.
\end{propn}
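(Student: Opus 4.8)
The plan is to confine the parameter values for which two of the $Z_r$ coincide to the zero set of a single polynomial in $K_n,X_n,U_n,V_n$, and then, by induction on $\SRN$, to produce one parameter value off that set.

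\emph{Reduction to a discriminant.} First I would set $\Xi\equiv\Lambda^{2}$ and let $p_\SRN(\Xi)$ be the polynomial obtained by rewriting $\Lambda^{[\SRN]}\CB_\SRN(\Lambda)/\prod_{n=1}^{\SRN}K_n$ in the variable $\Xi$. By \rf{CB} it has degree $[\SRN]$ and zeros exactly $Z_1^{2},\dots,Z_{[\SRN]}^{2}$, and by Lemma~\ref{Fun-dep} (through \rf{RRel1a} and \rf{L_n}) its coefficients are polynomials in $K_n,X_n,U_n,V_n$. Since distinctness of the $Z_r^{2}$ forces distinctness of the $Z_r$, it is enough that $p_\SRN$ be squarefree, i.e. that $D\equiv\mathrm{disc}_\Xi(p_\SRN)$ --- itself a polynomial in $K_n,X_n,U_n,V_n$ --- not vanish; together with the non-vanishing of $\prod_nK_n$ and of the leading coefficient of $p_\SRN$, this confines the ``bad'' parameters to a proper algebraic subvariety as soon as $D\not\equiv0$. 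Moreover a nonzero polynomial cannot vanish identically on the locus \rf{S-adj-C+general} --- the product of the lines $\kappa_n^{2},\xi_n^{2}\in\BR$ with the circles $|u_n|=|v_n|=1$, which is Zariski-dense in $(\BC^{\times})^{4\SRN}$ --- so the bad set has measure zero there. Hence it would suffice to exhibit \emph{complex} values of $K_n,X_n,U_n,V_n$ at which $p_\SRN$ has $[\SRN]$ distinct zeros.

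\emph{Induction on $\SRN$.} I would prove, by induction on $\SRN$, the stronger statement $(\star_\SRN)$: on a dense Zariski-open set of complex parameters, both $\CB_\SRN$ and $\CD_\SRN$ --- reduced by $\prod_nK_n$ and read off as polynomials in $\Xi$ --- attain their maximal degrees $[\SRN]$ and $\bar\SRN$ and are squarefree. For $\SRN=1$ this is immediate from \rf{L_n}. For the inductive step I would pick any site, factor the monodromy matrix there, and apply Lemma~\ref{average value} with $\SRM=1$: writing $\CA_1,\CB_1,\CC_1,\CD_1$ for the ($\prod_nK_n$-reduced) average values of the chosen site's Lax matrix and $\CA_{\SRN-1},\CB_{\SRN-1},\CC_{\SRN-1},\CD_{\SRN-1}$ for those of the remaining $(\SRN-1)$-site subchain, these satisfy
\[
\CB_\SRN=\CA_1\CB_{\SRN-1}+\CB_1\CD_{\SRN-1}\,,\qquad
\CD_\SRN=\CD_1\CD_{\SRN-1}+\CC_1\CB_{\SRN-1}\,,
\]
so the pair $\{\CB,\CD\}$ closes under the recursion. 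By \rf{L_n} the reduced $\CA_1,\CD_1$ are scalars of the form $u^{p}\cdot(\text{const})$ and $u^{-p}\cdot(\text{const})$ with nonzero constants, $u$ being the Weyl parameter of the chosen site, so they can be sent to $0$ or $\infty$ through $u$ --- this is where the complex latitude, the dropping of $|u|=1$, is used --- while $\Lambda\CB_1$ and $\Lambda\CC_1$ are affine in $\Xi$ with a single zero freely placeable through $\xi$. I would then fix the subchain parameters at a point where $(\star_{\SRN-1})$ holds and let $u^{p}\to\infty$ when $\SRN$ is even, $u^{p}\to0$ when $\SRN$ is odd. A count of $\Lambda$-degrees (using $[\SRN]=\SRN-{\rm e}_\SRN$, $\bar\SRN=\SRN+{\rm e}_\SRN-1$) shows that in the limit $\CB_\SRN/\CA_1\to\CB_{\SRN-1}$ and $\CD_\SRN\to\CC_1\CB_{\SRN-1}$ for $\SRN$ even, and $\CB_\SRN\to\CB_1\CD_{\SRN-1}$, $\CD_\SRN/\CD_1\to\CD_{\SRN-1}$ for $\SRN$ odd; in each case the limiting $\Xi$-polynomial is the (squarefree, maximal-degree) $\Xi$-polynomial of $\CB_{\SRN-1}$ or of $\CD_{\SRN-1}$, times at most one further linear factor coming from the chosen site whose zero is placed (through $\xi$) off the finitely many zeros of that polynomial --- hence squarefree of maximal degree. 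Since ``maximal degree and squarefree'' is an open condition it would survive for $u^{p}$ large but finite (resp. small but nonzero), which establishes $(\star_\SRN)$; in particular $D\not\equiv0$, and the Proposition follows by the reduction above.

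\emph{Where the difficulty lies.} The one delicate point is the $\Lambda$-degree bookkeeping in the two recursions: one must check, case by case on the parity of $\SRN$ and on whether the two summands have equal or strictly unequal $\Lambda$-degree, that after multiplication by the right power of $\Lambda$ the surviving summand still carries the full degree $[\SRN]$ (resp. $\bar\SRN$) --- so that no zero runs off to $\infty$ in the limit --- which is exactly where the definitions of $[\SRN]$ and $\bar\SRN$ must be handled carefully. The other thing not to overlook is that $\CD_\SRN$ must be carried through the induction together with $\CB_\SRN$: for odd $\SRN$ it is $\CD_{\SRN-1}$, not $\CB_{\SRN-1}$, that controls the top-degree part of $\CB_\SRN$.
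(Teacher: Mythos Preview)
Your argument is correct, but it takes a genuinely different route from the paper's. The paper does not induct on $\SRN$ at all: it fixes a single clever specialization $K_n/V_n=\pm i^{p}$ for $n=1,\dots,[\SRN]$, which makes the averaged Lax matrices $\CL_n(\Lambda)$ purely off-diagonal; the product $\CM_\SRN$ then factorizes so explicitly that $\CB_\SRN$ is read off in closed form \rf{CBsimp}, the $Z_a$ become essentially $X_aV_a^{(-1)^a}$, and the Jacobian $\det(\partial Z_r/\partial X_s)$ is visibly nonzero. From that single good point the conclusion follows by the same polynomial-dependence remark you use.

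Your approach trades this one-shot specialization for a recursive one: you carry the pair $(\CB_\SRN,\CD_\SRN)$ through the recursion of Lemma~\ref{average value} with $\SRM=1$, and kill one of the two summands by sending $U_\SRN^{\pm1}\to\infty$ according to parity, reducing to $(\star_{\SRN-1})$ plus a single linear factor whose root you place by hand via $X_\SRN$. What your route buys is robustness --- it needs no insight about anti-diagonal Lax matrices and would port to other models where such a specialization is not available --- at the cost of the degree bookkeeping you flag (which does go through: for $\SRN$ even, $[\SRN]=[\SRN-1]$ and $\bar\SRN=1+[\SRN-1]$; for $\SRN$ odd, $[\SRN]=1+\overline{\SRN-1}$ and $\bar\SRN=\overline{\SRN-1}$, exactly matching your limiting expressions). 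What the paper's route buys is brevity and an explicit formula for the $Z_a$ at the special point.
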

\begin{proof}Note that we can apply Lemma \ref{Fun-dep} to the average values on the l.h.s. of \rf{CB'} and so we have that
the r.h.s. of \rf{CB'} times $\prod_{n=1}^{\SRN}K_n^{-1}$ is a Laurent polynomial of maximal degree 1 in each of the
parameters $K_n,X_n,U_n,V_n$. The consequent functional dependence\footnote{\label{zeros-func-dependence}Let $\sigma_n^{{[\SRN]}}(Z)$ be the degree
$n$ elementary symmetric polynomial in the variables $Z_r$, then $\sigma_n^{{[\SRN]}}(Z)/\sigma_{[\SRN]}^{[\SRN]}(Z)$ are
Laurent polynomials of maximal degree 1 in each one of the parameters
$K_n,X_n,U_n,V_n$.} of the $Z_1^{},\dots,Z_{[\SRN]}^{}$ w.r.t. these parameters implies
that it is sufficient to show the non-vanishing of the Jacobian:
\begin{equation}\label{J-nonzero}
J(X;K,U,V)\,\equiv\,
{\rm det}\left(\frac{\pa Z_r}{\pa X_s}\right)_{r,s=1,\dots,[\SRN]}\neq\,0\,
\end{equation}
for some special values of the parameters $K_n,U_n,V_n$ in order to prove that $J(X;K,U,V)\neq 0$ for almost all the values.
Whenever $J(X;K,U,V)\neq 0$, we have invertibility of the map $Z=Z(X_1,\dots,X_{[\SRN]})$ from which the claim of the Proposition follows.

To show that \rf{J-nonzero} is indeed satisfied, let us choose $K_{n}/V_{n}=\pm i^p$ for     $n=1,...,[\SRN]$, then the average values \rf{L_n} of the Lax operators simplify to
\begin{equation}
\mathcal{L}_{n}(\Lambda )=\pm\left(
\begin{array}{cc}
0 & V_{n}(\Lambda V_{n}/X_{n}-X_{n}/V_{n}\Lambda) \\
V_{n}(\Lambda/ V_{n}X_{n}-X_{n}V_{n}/\Lambda)  & 0%
\end{array}
\right) \,.
\end{equation}
Using now \rf{RRel1a} to compute the l.h.s. of \rf{CB'} we get:
\begin{equation}\label{CBsimp}
[U_\SRN^{}(K_\SRN^{2}V_\SRN^{}+V_\SRN^{-1})]^{{\rm e}_\SRN}
\prod_{n=1}^{[\SRN]}(\Lambda V_n^{(-1)^{n}}/X_{n}-X_{n}/\Lambda V_n^{(-1)^{n}} )=[Z_{{\SRN}}K_{{\SRN}}/i^p]^{{\rm e}_{\SRN}}
\prod_{a=1}^{{[}\SRN{]}}(\Lambda/Z_a-Z_a/\Lambda),
\end{equation}
from which the fact that $J(X;K,U,V)\neq 0$ trivially follows.
\end{proof}

We conclude that our construction
of $\SB(\la)$-eigenstates will work if the representations
$\CR_{\SRN}$, $\CR_{\SRM}$ and $\CR_{\SRN-\SRM}$
are all non-degenerate. Theorem 1 follows by induction.

\section{Characterization of $\ST$-spectrum}\label{Compatib}

\setcounter{equation}{0}

\subsection{SOV characterization of $\ST$-spectrum}

In the SOV representations the spectral problem for $\ST(\la)$ is reduced to the following discrete system of Baxter-like equations in the wave-function $\Psi_t(\eta)\equiv\langle\,\eta\,|\,t\,\rangle$ of a $\ST$-eigenstate $|\,t\,\rangle$:
\begin{equation}\label{SOVBax1}
t(\eta _{r})\Psi(\eta)\,=\,{\tt a}(\eta _{r})\Psi(\ST_r^-(\eta))+{\tt d}(\eta _{r})\Psi(\ST_r^+(\eta))\, \qquad \text{ \ }\forall r\in \{1,...,[\SRN]\},
\end{equation}
where $(\eta _{1},...,\eta _{\SRN})\in \mathbb{B}_{\SRN}$ with $\mathbb{B}_{\SRN}$ the set of zeros of the $\SB$-operator in the SOV representation. Here we have denoted with $\ST_r^{\pm}(\eta)\equiv(\eta_1,\dots,q^{\pm 1}\eta_r,\dots,\eta_\SRN)$ and ${\tt a}(\eta _{r})$ and ${\tt d}(\eta _{r})$ the coefficients of the SOV representation as
defined in subsection \ref{Choice-gauge}. In the case of
$\SRN$ even, we have to add to the system \rf{SOVBax1} the following equation in the variable $\eta_\SRN$:
\begin{equation}\label{SOVBax2}
\ST_\SRN^+\Psi_{k}(\eta)\,=\,q^{-k}\Psi_{k}(\eta),
\end{equation}
for $t(\lambda )\in \Sigma _{\ST}^{\theta,k}\ \ $with $k\in \{0,...,2l\}$ and $\theta\equiv\eta _{A}\prod_{a=1}^{\SRN-1}\eta
_{a}/\prod_{a=1}^{\SRN}\xi_{a}$, obtained from the asymptotics of $\ST(\la)$ given in
appendix \ref{Asymp-A-D}.
\subsection{Characterization of $\ST$-eigenvalues as solutions of a functional equation}\label{Eigenvalues-T}
Let us introduce the one parameter family $D(\la)$ of $p\times p$ matrix:
\begin{equation}\label{D-matrix}
D(\la) \equiv
\begin{pmatrix}
t(\la)   &-{\tt d}(\la)&   0        &\cdots & 0 & -{\tt a}(\la)\\
-{\tt a}(q\la)& t(q\la)&-{\tt d}(q\la)& 0     &\cdots & 0 \\
      0       & {\quad} \ddots      & &     &     &         \vdots   \\
  \vdots           &     &  \cdots    &  &       &     \vdots   \\
     \vdots         &     &   & \cdots &       &   \vdots     \\
     \vdots   &            &    &  &  \ddots{\qquad}     &   0 \\
 0&\ldots&0& -{\tt a}(q^{2l-1}\la)& t(q^{2l-1}\la) &
-{\tt d}(q^{2l-1}\la)\\
-{\tt d}(q^{2l}\la)   & 0      &\ldots      &     0  & -{\tt a}(q^{2l}\la)& t(q^{2l}\la)
\end{pmatrix}
\end{equation}
where for now $t(\la )$ is just a real even Laurent polynomial of degree $\bar\SRN$ in $\la$.
\begin{lem}
\label{detD}The determinant of the matrix $D(\la)$ is an even Laurent
polynomial of maximal degree $\bar\SRN$ in $\Lambda \equiv \lambda ^{p}$.
\end{lem}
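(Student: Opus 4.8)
The plan is to compute $\det D(\la)$ by expanding along the first row (or equivalently using the structure of a cyclic tridiagonal matrix) and then controlling the degree in $\la$ of each resulting term. Expanding the $p\times p$ determinant of the cyclic tridiagonal matrix $D(\la)$ along the first row produces three kinds of contributions: the ``main'' term $t(\la)\cdot M_{11}$ where $M_{11}$ is the determinant of the genuine tridiagonal $(p-1)\times(p-1)$ minor obtained by deleting the first row and column; a term proportional to ${\tt d}(\la)$ coming from the $(1,2)$ entry; and a term proportional to ${\tt a}(\la)$ coming from the $(1,p)$ entry. The two corner terms, when their cofactors are expanded in turn, telescope down the sub- and super-diagonal and ultimately contribute the two ``winding'' monomials $\pm\prod_{k=0}^{2l}{\tt a}(q^k\la)$ and $\pm\prod_{k=0}^{2l}{\tt d}(q^k\la)$; these are exactly the average values (up to sign) and in particular are functions of $\Lambda=\la^p$ only, by \rf{a-d-functions} and the definition \rf{avdef}.

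The key observation that makes everything collapse to a function of $\Lambda$ is the following: each ${\tt a}$, ${\tt d}$ and $t$ appearing in $D(\la)$ is evaluated at $q^k\la$ for a distinct $k\in\{0,\dots,2l\}$, and in any term of the determinant expansion each index $k$ is ``used'' exactly once (since a determinant is a sum over permutations, picking one entry from each row and each column, and row/column $j$ carries the argument $q^{j-1}\la$). Therefore every monomial in the expansion of $\det D(\la)$, regarded as a Laurent polynomial in $\la$, is a product over $k=0,\dots,2l$ of a factor that is a Laurent polynomial in $q^k\la$. Writing such a factor as $\sum_m c_m (q^k\la)^m$ and multiplying over $k$, the cross terms conspire: I would argue (this is the technical heart) that the only surviving powers of $\la$ are multiples of $p$, using $q^p=1$ together with the fact that the full sum over all permutations restores the cyclic ($\BZ_p$) symmetry $\la\mapsto q\la$ of $D(\la)$ — indeed $D(q\la)$ is obtained from $D(\la)$ by a cyclic permutation of both rows and columns, hence $\det D(q\la)=\det D(\la)$, so $\det D$ is automatically a function of $\Lambda=\la^p$. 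This symmetry argument is cleaner than a brute-force monomial count and is the route I would actually write down.

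For the degree bound: $t(\la)$ is even of degree $\bar\SRN$, ${\tt a}(\la)$ and ${\tt d}(\la)$ are even of degree $\bar\SRN$ as well (by \rf{a-def} and the construction in subsection \ref{Choice-gauge}, recalling $\bar\SRN=\SRN+{\rm e}_\SRN-1$), so along the main diagonal the product of the $p$ entries $t(q^k\la)$ has degree $p\bar\SRN$ in $\la$, i.e. degree $\bar\SRN$ in $\Lambda$; every off-diagonal permutation replaces some diagonal factors $t$ by the same number of factors ${\tt a}$ or ${\tt d}$, which have the same degree, so no term exceeds $\la$-degree $p\bar\SRN$, giving $\Lambda$-degree at most $\bar\SRN$. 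Evenness in $\Lambda$ then follows from evenness of $t,{\tt a},{\tt d}$ in $\la$ (each is invariant under $\la\mapsto-\la$, hence so is $D(\la)$ entrywise up to a permutation, hence so is its determinant), combined with the already-established fact that $\det D$ depends only on $\Lambda$. The main obstacle is making the ``only powers divisible by $p$ survive'' step fully rigorous; I expect the cyclic-permutation symmetry $\det D(q\la)=\det D(\la)$ to dispose of it in one line, so the real content is simply assembling the degree count and the evenness, both of which are routine once that symmetry is in hand.
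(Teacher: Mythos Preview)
Your proposal is correct and, once you discard the exploratory first-row expansion in favor of the cyclic-symmetry argument you yourself identify as ``the route I would actually write down,'' it coincides with the paper's proof: the paper simply observes that $D(q\lambda)$ is a row/column permutation of $D(\lambda)$, hence $\det D(q\lambda)=\det D(\lambda)$, and then reads off evenness and the degree bound from the fact that $t,{\tt a},{\tt d}$ are even Laurent polynomials of degree $\bar\SRN$ in $\lambda$. Your degree and parity arguments are stated somewhat more explicitly than in the paper, but the substance is the same.
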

\begin{proof}
Let us start observing that $D$$(\lambda q)$ is obtained by $D(\lambda )$ exchanging the first and\ $p$-th column and after the first and
\ $p$-th row, so that 
\begin{equation}
\det_{p}\text{$D$}(\lambda q)=\det_{p}\text{$D$}(\lambda )%
\text{ \ \ }\forall \lambda \in \mathbb{C},
\end{equation}
Being ${\tt a}(\lambda )$, ${\tt d}(\lambda )$ and $t(\lambda )$ even Laurent polynomial of degree
$\bar\SRN$ in $\lambda$ then the parity of $\det_{p}\text{$D$}(\Lambda)$ and the fact that it is a
Laurent polynomial of maximal degree $\bar\SRN$ trivially follow.
\end{proof}
The interest toward the function $\det_{p}D(\Lambda )$ is due to the following: 
\begin{lem}
\label{Charact-Sigma1}Let $t(\la)\in\Sigma _{\ST}$, then $t(\la)$ is a solution of the functional equation:
\begin{equation}
\det_{p}\text{$D$}(\Lambda)\equiv0.
\end{equation}
\end{lem}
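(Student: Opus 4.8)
\emph{The plan is to} read the SOV Baxter equations \rf{SOVBax1} as the statement that the $p\times p$ cyclic matrix $D(\la)$ of \rf{D-matrix} degenerates at every point of the $\SB$-spectrum, and then to close the argument by a degree count based on Lemma \rf{detD} and Proposition \rf{B-simplicity}. If $t(\la)\in\Sigma_\ST$ there is a $\ST$-eigenstate $|\,t\,\rangle$ whose wave-function $\Psi_t(\eta)\equiv\langle\,\eta\,|\,t\,\rangle$ on $\mathbb B_\SRN$ is not identically zero and solves \rf{SOVBax1} for every $r\in\{1,\dots,[\SRN]\}$. Fix $\eta^\ast\in\mathbb B_\SRN$ with $\Psi_t(\eta^\ast)\neq0$. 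For a given $r$, freeze all coordinates of $\eta$ at those of $\eta^\ast$ except the $r$-th and let $\eta_r$ run over its $p$-string $q^0\zeta_r,\dots,q^{2l}\zeta_r$ (so $\zeta_r^p=Z_r$); the $p$ equations \rf{SOVBax1} obtained this way say exactly that the vector with components $v_k=\Psi_t(\dots,q^k\zeta_r,\dots)$, $k=0,\dots,2l$, lies in $\ker D(\zeta_r)$, the identification of the index $-1$ with $2l$ (by $q^p=1$) matching the corner entries of \rf{D-matrix}. One component of this vector is $\Psi_t(\eta^\ast)\neq0$, so $D(\zeta_r)$ is singular, and since $\det_p D$ is a function of $\Lambda=\la^p$ this gives $\det_p D(Z_r)=0$. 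Running over $r$ we obtain $\det_p D(Z_r)=0$ for $r=1,\dots,[\SRN]$.

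By Lemma \rf{detD}, $\det_p D(\Lambda)$ is a Laurent polynomial in $\Lambda$ of degree at most $\bar\SRN$ and even in $\Lambda$ (recall $p$ is odd); since $\bar\SRN$ is always even this makes it a Laurent polynomial of degree at most $\bar\SRN/2$ in $w\equiv\Lambda^2$, hence with at most $\bar\SRN+1$ free coefficients, so it vanishes identically as soon as it has $\bar\SRN+1$ distinct zeros in $w$. By Proposition \rf{B-simplicity} the $Z_r$, $r=1,\dots,[\SRN]$, are pairwise distinct for almost all values of $K_n,X_n,U_n,V_n$, and for almost all such values so are the $Z_r^2$. When $\SRN$ is odd we have $[\SRN]=\SRN=\bar\SRN+1$, so the zeros $Z_1^2,\dots,Z_{[\SRN]}^2$ already force $\det_p D(\Lambda)\equiv0$, which is the claim.

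When $\SRN$ is even, $[\SRN]=\SRN-1=\bar\SRN-1$ and the $Z_r^2$ are two zeros short; here I would additionally use the supplementary equation \rf{SOVBax2} together with the asymptotics \rf{asymptotics-t}, \rf{a-asymp} of $\ST(\la)$ and of ${\tt a}(\la),{\tt d}(\la)$. For $t(\la)\in\Sigma_\ST^{\theta,k}$, the matrix governing the leading $\log\la\to+\infty$ (resp.\ $\log\la\to-\infty$) behaviour of $D(\la)$ is, up to an overall nonzero scalar and an invertible diagonal rescaling of the rows, the circulant-type matrix $t_+\,I-{\tt d}_+\,P-{\tt a}_+\,P^{-1}$ with $P$ the cyclic $p$-shift and $t_+,{\tt a}_+,{\tt d}_+$ the corresponding leading coefficients; the grading data ($t_+$ built from $q^k\theta+(q^k\theta)^{-1}$, ${\tt a}_+$ from $\theta$, ${\tt d}_+$ from $\theta^{-1}$) make this matrix singular, so the coefficients of $\Lambda^{\pm\bar\SRN}$ in $\det_p D$ vanish. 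This leaves $\det_p D$, as a Laurent polynomial in $w$, with $\bar\SRN-1=[\SRN]$ free coefficients, which the $[\SRN]$ distinct zeros $Z_r^2$ then kill, giving $\det_p D(\Lambda)\equiv0$ once more.

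\emph{The main obstacle} is exactly this last step in the even case: one has to verify, through the $q^k$-phase bookkeeping and the explicit leading coefficients of $t$, ${\tt a}$ and ${\tt d}$ dictated by the $\Theta$-grading and by \rf{a-asymp}, that the extreme coefficients of $\det_p D$ really cancel when $t\in\Sigma_\ST^{\theta,k}$ --- equivalently, that \rf{SOVBax2} is the ``asymptotic'' component of the Baxter system and supplies the two missing zeros at $\Lambda=0,\infty$. The remaining ingredients --- recognising \rf{SOVBax1} as a kernel condition for $D(\zeta_r)$, the degree bound from Lemma \rf{detD}, and the genericity input of Proposition \rf{B-simplicity} --- are routine.
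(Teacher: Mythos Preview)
Your proposal is correct and follows essentially the same route as the paper: both arguments read \rf{SOVBax1} as a kernel condition $\det_p D(Z_r)=0$ for $r=1,\dots,[\SRN]$, invoke the degree bound of Lemma \ref{detD}, and in the even case supplement the count with the vanishing of the leading asymptotic coefficients. What you flag as ``the main obstacle'' --- the explicit verification that the extreme coefficients of $\det_p D$ vanish when $t\in\Sigma_\ST^{\theta,k}$ --- is precisely where the paper does the remaining work: it writes the limiting matrix explicitly as
\[
\bigl\Vert\,q^{(1\mp 1)\SRN}\theta^{\pm}\,\delta_{i,j-1}+q^{-(1\mp 1)\SRN}\theta^{\mp}\,\delta_{i,j+1}-(q^{k}\theta+q^{-k}\theta^{-1})\,\delta_{i,j}\,\bigr\Vert
\]
(times an overall nonzero scalar) and asserts its determinant is zero; this is the circulant-type singularity you anticipated, with the vector $(q^{\pm k j}\theta^{\pm j})_{j=0}^{2l}$ in the kernel. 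Your handling of the odd case and of the nondegeneracy of the kernel vector (via the choice of $\eta^\ast$ with $\Psi_t(\eta^\ast)\neq 0$) is in fact a bit more explicit than the paper's.
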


\begin{proof} Note that from $t(\lambda )\in\Sigma _{\ST}$ and the SOV characterization of the $\ST$-spectrum we have that the requirement that the system of equations \rf{SOVBax1} admits a
non-zero solution reads:
\begin{equation}
\det_{p}\text{$D$}(\eta^{p} _{a})=0\text{ \ \ }\forall a\in \{1,...,[\SRN]\} \  \ and \  \ (\eta _{1},...,\eta _{[\SRN]})\in 
\mathbb{B}_{\SRN},  \label{compatibility}
\end{equation}
In the case of $\SRN$ even, we have moreover:
\begin{equation}
\lim_{\log \Lambda \rightarrow \mp \infty }\Lambda ^{\pm \SRN}\det_{p}\text{%
$D$}(\Lambda )=0,  \label{asymp-compatibility}
\end{equation}
which simply follows by observing that:
\begin{align}
\lim_{\log \Lambda \rightarrow \mp \infty }\Lambda ^{\pm \SRN}\det_{p}\text{%
$D$}(\Lambda )& =\det_{p}\left\Vert q^{(1\mp 1)\SRN}\theta^\pm\delta
_{i,j-1}+q^{-(1\mp 1)\SRN}\theta^\mp\delta _{i,j+1}-(q^{k}\theta +q^{-k}\theta^{-1}))\delta
_{i,j}\right\Vert   \notag \\
& \times (-1)\prod_{n=1}^{\SRN}\left( i\kappa _{n}\xi _{n}^{\pm }\right) ^{p}\left.
=\right. 0.
\end{align}
for $t(\lambda )\in\Sigma _{\ST}^{\theta,k}$ and $k\in\{0,...,2l\}$.

Then the function $\det_{p}$$D$$(\Lambda )$ is zero in $\SRN+${\rm e}$_{\SRN}$ different values of $\Lambda ^{2}$ which thanks to Lemma \ref{detD} implies the statement of the Lemma.
\end{proof}

{\bf Remark 2.}  \ Let us note that the same kind of functional equation $\det D(\Lambda )=0$ also appears in \cite{BR89, Ne02, Ne03,BBP90}. There it recasts, for different integrable quantum models at the {\it roots of unit}, the functional relations which result from the truncated fusions of transfer matrix eigenvalues; this is  in particular true for the $\tau_2$-model\footnote{The SOV representations of this model were analyzed in a series of works \cite{GIPS06, GIPST07, GIPST08}.} \cite{BS,BBP90,Ba89, Ba04}.
\subsection{Construction of Baxter equation solutions from $\ST$-eigenvalues}\label{Bax-FE}

\begin{thm}\label{Derivation-Baxter-functional}
For any $t(\lambda )\in \Sigma _{\ST}$, we can construct uniquely up to
normalization a real polynomial:
\begin{equation}\label{Q_t-definition}
Q_{t}(\lambda )=\lambda^{a_{t}}\prod_{h=1}^{2l\bar\SRN-b_{t}}(\lambda^{2}-\lambda
_{h}^{2}),\,\,\,\,\,\,\,\, 0\leq a_{t}\leq 2l,\,\,0\leq b_{t}\leq 2l\bar\SRN,
\end{equation}
which is a solution of the Baxter functional equation:
\begin{equation}\label{tq-Baxter}
t(\lambda )Q_{t}(\lambda )={\tt a}(\lambda )Q_{t}(\lambda q^{-1})+{\tt d}(\lambda
)Q_{t}(\lambda q)\ \ \ \ \forall \lambda \in \mathbb{C},\hspace{0.7cm}
\end{equation}
free from $p$-strings.
Moreover, for $\SRN$ even and $\, t(\lambda)\in \Sigma _{\ST}^{\theta,k}$ with $k\in \{0,...,2l\}$, it holds:
\begin{equation}\label{Q-even}
a_{t}=k,\,\,\,\,\,\,\,\,\, b_{t}=k\,\, \mathsf{mod}\,p.\hspace{1cm}
\end{equation}
\end{thm}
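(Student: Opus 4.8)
The plan is to reconstruct $Q_t(\lambda)$ from the data encoded in the kernel of the matrix $D(\lambda)$ at the separated-variable points, and then to verify that the object so obtained is in fact a global solution of the functional equation \rf{tq-Baxter}, not merely a solution at the points $\eta_r^p$. First I would fix $t(\lambda)\in\Sigma_{\ST}$ and the associated $\ST$-eigenstate $|\,t\,\rangle$ with wave-function $\Psi_t(\eta)=\langle\,\eta\,|\,t\,\rangle$. By the SOV characterization \rf{SOVBax1} and Lemma \ref{Charact-Sigma1}, at each $\Lambda=\eta_r^p$ the matrix $D(\eta_r)$ is singular, so it has a (generically one-dimensional, by simplicity of the $\SB$-spectrum) kernel vector, and the $p$ components of this kernel vector are exactly the values $\Psi_t(q^k\eta_r)$, $k=0,\dots,2l$, up to an overall scale. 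The idea is that these values should be the evaluations $Q_t(q^k\eta_r)$ of a single polynomial. So the first technical step is to show that the ratios $\Psi_t(q\eta_r)/\Psi_t(\eta_r)$, which the two-term recursion \rf{SOVBax1} determines, are consistent with being ratios $Q_t(q\eta_r)/Q_t(\eta_r)$ of the prospective polynomial; concretely, one interpolates a polynomial of the degree prescribed in \rf{Q_t-definition} through the $[\SRN]$ points $\eta_r$ with the prescribed values (using the gauge conditions \rf{R1}--\rf{R3}, in particular freedom from $p$-strings, to guarantee that the interpolation data are non-degenerate and that no $q$-shifted coincidences spoil uniqueness).

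Next I would turn the discrete Baxter equations into the functional one. Define $P(\lambda):=t(\lambda)Q_t(\lambda)-{\tt a}(\lambda)Q_t(\lambda q^{-1})-{\tt d}(\lambda)Q_t(\lambda q)$. Using \rf{Q_t-definition} together with the degrees of $t$, ${\tt a}$, ${\tt d}$ (all even Laurent polynomials of degree $\bar\SRN$, by subsection \ref{Choice-gauge}), $P(\lambda)$ is a Laurent polynomial whose degree I would bound explicitly. By construction $P$ vanishes at each of the $[\SRN]$ points $\eta_r$ and, because ${\tt a}$ and ${\tt d}$ are themselves $p$-periodic up to the structure of a $p$-root of unity, $P$ in fact vanishes on the full $p$-strings through those points — that is at $p[\SRN]$ points. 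If the degree count shows $p[\SRN]$ exceeds the maximal number of zeros a nonzero $P$ could have, then $P\equiv 0$, which is \rf{tq-Baxter}. For $\SRN$ even one must feed in the extra equation \rf{SOVBax2}: the constraint $\ST_\SRN^+\Psi_k=q^{-k}\Psi_k$ together with the asymptotics \rf{asymptotics-t}, \rf{a-asymp} pins down the behaviour of $Q_t$ as $\log\lambda\to\mp\infty$, which is exactly what forces $a_t=k$ and $b_t=k\ \mathsf{mod}\ p$ in \rf{Q-even}; I would read off the leading and trailing coefficients of both sides of \rf{tq-Baxter} and match them. Freedom of $Q_t$ from $p$-strings then follows from \rf{R2}--\rf{R3}: a $p$-string in the zeros of $Q_t$ would, via \rf{tq-Baxter} evaluated along that string, force a compatible $p$-string in the zeros of ${\tt a}$ or ${\tt d}$, contradicting the gauge choice.

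Uniqueness up to normalization I would get from the simplicity of the transfer-matrix spectrum (the model is integrable in the strong sense recalled in the introduction, and the $\SB$-spectrum is simple by Theorem \ref{SOVthm} and Proposition \ref{B-simplicity}): the eigenstate $|\,t\,\rangle$ is unique up to scale, hence so is $\Psi_t$, hence so are the interpolation data, hence so is the interpolating polynomial of the prescribed degree. The reality of the coefficients of $Q_t(\lambda)$ — equivalently that $(Q_t(\lambda))^*=Q_t(\lambda^*)$ — I would extract from the Hermiticity relations \rf{Hermit-Monodromy}, \rf{H-cj-A-D} and the gauge condition \rf{R1} $({\tt a}(\lambda))^*={\tt d}(\lambda^*)$, which make complex conjugation of \rf{tq-Baxter} map the equation for $Q_t$ to the equation for $\overline{Q_t}$; by uniqueness the two agree.

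The main obstacle I expect is the degree bookkeeping in the even-$\SRN$ case: one has to juggle the shift in degree caused by $\eta_\SRN$ (which enters $b_\eta(\lambda)$ through ${\rm e}_\SRN$), the asymptotic normalization \rf{a-asymp} involving the $p$-root $\theta$ of the $\Theta$-average, and the modular arithmetic $b_t=k\ \mathsf{mod}\ p$, while simultaneously ensuring the count $p[\SRN]$ of forced zeros of $P$ strictly beats its degree bound so that $P\equiv 0$ rather than merely $P$ having many zeros. Getting the inequality to be strict — and not off by the very terms that the asymptotic equations \rf{SOVBax2}, \rf{asymp-compatibility} are designed to kill — is the delicate point; everything else is interpolation plus the gauge conditions \rf{R1}--\rf{R3} doing exactly the job they were set up to do.
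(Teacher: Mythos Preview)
Your proposal takes a genuinely different route from the paper, and it has two real gaps.

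\textbf{Construction.} The paper does not interpolate from the wave-function values $\Psi_t(q^k\eta_r)$. Instead it observes that, since $\det_p D(\Lambda)\equiv 0$, the cofactor matrix of $D(\lambda)$ has rank one for every $\lambda\in\mathbb{C}$, and the cofactors $\textsc{C}_{1,j}(\lambda)$ are \emph{global} even Laurent polynomials satisfying $\textsc{C}_{h+i,k+i}(\lambda)=\textsc{C}_{h,k}(\lambda q^i)$. The first row of $D(\lambda)\textsc{V}_1(\lambda)=0$ then reads
\[
t(\lambda)\,\textsc{C}_{1,1}(\lambda)={\tt a}(\lambda)\,\textsc{C}_{1,2l+1}(\lambda)+{\tt d}(\lambda)\,\textsc{C}_{1,2}(\lambda),
\]
and the rank-one proportionality forces $\overline{\textsc{C}}_{1,2}(\lambda)\propto\overline{\textsc{C}}_{1,1}(\lambda q)$, $\overline{\textsc{C}}_{1,2l+1}(\lambda)\propto\overline{\textsc{C}}_{1,1}(\lambda q^{-1})$ after common factors are removed (Lemmas in Appendix~\ref{Co-F Properties}). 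A separate determinant identity \rf{detDbar} pins the proportionality constant $\varphi$ to a $p$-root of unity, and $Q_t(\lambda):=\lambda^{a_t}\overline{\textsc{C}}_{1,1}(\lambda)$ is the solution. So the polynomiality of $Q_t$ is \emph{built in} from the start; no interpolation occurs.

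Your interpolation step, by contrast, is not well-posed. You say you interpolate a polynomial of the degree in \rf{Q_t-definition} through ``the $[\SRN]$ points $\eta_r$'', and then claim $P(\lambda)$ vanishes on the full $p$-strings, i.e.\ at $p[\SRN]$ points. But $P$ vanishing on the $p$-strings requires $Q_t$ to match the kernel of $D$ at \emph{all} $p[\SRN]$ points, not just the $[\SRN]$ base points; interpolation at $[\SRN]$ points gives you no control over the remaining $(p-1)[\SRN]$. If instead you interpolate at all $p[\SRN]$ points, the system is over-determined (the degree of $Q_t$ in $\lambda^2$ is at most $2l\bar\SRN<p[\SRN]$), and you must prove the data are consistent with a single polynomial---which is precisely the nontrivial content the cofactor argument supplies for free. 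This is the missing idea.

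\textbf{Uniqueness.} Your argument invokes simplicity of the $\ST$-spectrum, but in this paper that is Theorem~4 of Section~\ref{Simple-T}, proved \emph{after} and \emph{using} the present theorem (via the factorized eigenstates \rf{Qeigenstate-odd}--\rf{Qeigenstate-even}); so the appeal is circular. Moreover, even granting simplicity of $\ST$, uniqueness of $|t\rangle$ only gives uniqueness of \emph{your construction}, not uniqueness among all $p$-string-free polynomial solutions of \rf{tq-Baxter}, which is what the theorem asserts. The paper instead forms the $q$-Wronskian $W_t(\lambda)=Q_t(\lambda)\bar Q_t(q^{-1}\lambda)-\bar Q_t(\lambda)Q_t(q^{-1}\lambda)$ of two putative solutions, derives ${\tt a}(\lambda)W_t(\lambda)={\tt d}(\lambda)W_t(q\lambda)$, averages over the $p$-string to get $(\mathcal{A}(\Lambda)-\mathcal{D}(\Lambda))\,\mathcal{W}_t(\Lambda)=0$, and concludes $W_t\equiv 0$ from Lemma~\ref{Average A-not-D} (generically $\mathcal{A}\neq\mathcal{D}$). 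This is independent of any spectral simplicity statement.

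Your remarks on reality via \rf{R1} and on the even-$\SRN$ asymptotics forcing \rf{Q-even} are in the right spirit and match the paper's treatment of those points.
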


\begin{proof} {\it Construction:} \ Let us notice that the condition $t(\lambda )\in \Sigma _{\ST}$ implies that the $p\times p$ matrix $D(\lambda )$
has rank $2l$ for any $\lambda \in \mathbb{C}\backslash \{0\}$. Let us denote with
\begin{equation}\label{cofactor-def}
\textsc{C}_{i,j}(\lambda)=(-1)^{i+j}\det_{2l}D_{i,j}(\lambda)
\end{equation}
the $(i,j)$ {\it cofactor} of the matrix $D$$(\lambda)$; then the matrix formed out of these cofactors has rank $1$, i.e. all the vectors:
\begin{equation}
\text{\textsc{V}}_{i}(\lambda )\equiv (\text{\textsc{C}}_{i,1}(\lambda ),%
\text{\textsc{C}}_{i,2}(\lambda ),...,\text{\textsc{C}}_{i,2l+1}(\lambda
))^{\ST}\in \mathbb{C}^{p}\text{ \ \ }\forall i\in \{1,...,2l+1\}
\end{equation}
are proportional: 
\begin{equation}
\text{\textsc{V}}_{i}(\lambda )/\text{\textsc{C}}_{i,1}(\lambda )=\text{%
\textsc{V}}_{j}(\lambda )/\text{\textsc{C}}_{j,1}(\lambda )\text{\ \ \ \ }%
\forall i,j\in \{1,...,2l+1\},\text{ }\forall \lambda \in \mathbb{C}.
\label{covector-proport}
\end{equation} 
The proportionality (\ref{covector-proport}) of the eigenvectors \textsc{V}$_{i}(\lambda
)$ implies: 
\begin{equation}\label{proportionality}
\text{\textsc{C}}_{2,2}(\lambda )/\text{\textsc{C}}_{2,1}(\lambda )=\text{%
\textsc{C}}_{1,2}(\lambda )/\text{\textsc{C}}_{1,1}(\lambda )
\end{equation}
which, by using the property (\ref{cofactors-diagonal}), can be rewritten as:%
\begin{equation}
\label{Inter-step}
\text{\textsc{C}}_{1,1}(\lambda q)/\text{\textsc{C}}_{1,2l+1}(\lambda q)=%
\text{\textsc{C}}_{1,2}(\lambda )/\text{\textsc{C}}_{1,1}(\lambda ).
\end{equation}
Moreover, the first element in the vectorial condition\ 
$D(\lambda )$\textsc{V}$_{1}(\lambda )=$\b{0} reads: 
\begin{equation}\label{Bax-eq}
t(\lambda )\text{\textsc{C}}_{1,1}(\lambda )={\tt a}(\lambda )\text{\textsc{C}}%
_{1,2l+1}(\lambda )+{\tt d}(\lambda )\text{\textsc{C}}_{1,2}(\lambda ).
\end{equation}
In appendix \ref{Co-F Properties}, we have analyzed the properties of these cofactors and thanks to
Lemma \ref{Ap1} and \ref{cofactors-zeros} we can introduce now the polynomials
\textsc{c}$_{1,1}\overline{\text{\textsc{C}}}_{1,1}(\lambda )$,\textsc{c}$_{1,2}^*\overline{\text{\textsc{C}}}_{1,2l+1}(\lambda )$ and
\textsc{c}$_{1,2}\overline{\text{\textsc{C}}}_{1,2}(\lambda )$ defined by
simplifying the common factors in \textsc{C}$_{1,1}(\lambda )$, \textsc{C}$_{1,2l+1}(\lambda )$ and
\textsc{C}$_{1,2}(\lambda )$, respectively. In particular, from Lemma \ref{Ap1} and \ref{cofactors-zeros}
we have that $\overline{\text{\textsc{C}}}_{1,1}(\lambda )$ is an even polynomial of the form:
\begin{equation}
\overline{\text{\textsc{C}}}_{1,1}(\lambda )=\prod_{h=1}^{\bar{\SRN}_{1,1}}(\lambda^{2}-\lambda
_{h}^{2}),\,\,\,\,\,\,\,\, \bar{\SRN}_{1,1}\leq 2l\bar\SRN,
\end{equation}
which furthermore satisfies the properties:
\begin{equation}
\textit{Reality}:\,\,\,(\overline{\textsc{C}}_{1,1}(\lambda ))^{\ast }\equiv\overline{\textsc{C}}_{1,1}(\lambda ^{\ast }) \,\,\,\,\,\text{and }\,\,\,\,0\notin\text{\large{Z}}_{\overline{\textsc{C}}_{1,1}(\lambda )},\,\,\, \text{\large s}_{p,\la_0}\not\subset\text{\large{Z}}_{\overline{\textsc{C}}_{1,1}(\lambda )},\,\, \forall \la_0\in\mathbb{C}.
\end{equation} 
Now, being by definition $\text{\large{Z}}_{\overline{\textsc{C}}_{1,1}(\lambda )}\cap\text{\large{Z}}_{\overline{\textsc{C}}_{1,2}(\lambda )}=\text{\large{Z}}_{\overline{\textsc{C}}_{1,1}(\lambda )}\cap\text{\large{Z}}_{\overline{\textsc{C}}_{1,2l+1}(\lambda )}=\emptyset$, equation \rf{Inter-step} implies:
\begin{equation}\label{proportionality-cofactor}
\overline{\text{\textsc{C}}}_{1,2l+1}(\lambda )=q^{2\bar{\SRN}_{1,1}}\text{$\overline{\text{%
\textsc{C}}}$}_{1,1}(\lambda q^{-1}),\text{ \ \ $\overline{\text{\textsc{C}}}
$}_{1,2}(\lambda )=q^{-2\bar{\SRN}_{1,1}}\text{$\overline{\text{\textsc{C}}}$}_{1,1}(\lambda q)\,,\,\,\,\,\, \varphi\equiv\textsc{c}_{1,1}/\textsc{c}_{1,2}\text{ \ a phase},
\end{equation}
and then equation (\ref{Bax-eq}) assumes the form of a Baxter equation in the polynomial $\overline{\text{\textsc{C}}}_{1,1}(\lambda )$:
\begin{equation}\label{deform-BAX}
t(\lambda )\text{$\overline{\text{\textsc{C}}}$}_{1,1}(\lambda )=\bar{{\tt a}}%
(\lambda )\text{$\overline{\text{\textsc{C}}}$}_{1,1}(\lambda q^{-1})+\bar{{\tt d}}%
(\lambda )\text{$\overline{\text{\textsc{C}}}$}_{1,1}(\lambda q),
\end{equation}
with coefficients $\bar{{\tt a}}(\lambda )\equiv(q^{2\bar{\SRN}_{1,1}}\varphi){\tt a}(\lambda )$ and $\bar{{\tt d}}%
(\lambda )\equiv (q^{2\bar{\SRN}_{1,1}}\varphi)^{-1}{\tt d}(\lambda )$.

Let us denote with $\bar{D}(\Lambda )$ the matrix defined as in \rf{D-matrix} but with coefficients $\bar{{\tt a}}(\lambda )$ and $\bar{{\tt d}}(\lambda )$. Then the consistence condition $\det_{p} \bar{D}(\Lambda )\equiv0$ of the Baxter equation \rf{deform-BAX} and the condition $t(\lambda )\in \Sigma _{\ST}$ implies:
\begin{equation}\label{detDbar}
\det_{p} \bar{D}(\Lambda )-\det_{p}D(\Lambda )\equiv0\,\,\,\,\,\,\longleftrightarrow\,\,\,\,\,\,(\varphi ^{p}-1)\left( {\CA}(\Lambda)-\varphi ^{-p}{\CD}(\Lambda)\right)\equiv0,
\end{equation}
which is true if and only if $\varphi $ is a $p$-root of the unity. Note that \rf{detDbar} is derived by using the following expansion for $\det_{p}D(\Lambda )$ and a similar one for $\det_{p}\bar D(\Lambda )$:\begin{eqnarray}\label{detD-exp}
\det_{p}D(\Lambda )&=&{\CA}(\Lambda)+{\CD}(\Lambda)-{\tt a}(\lambda ){\tt d}(\lambda/q)\det_{2l-1}D_{(1,2l+1),(1,2l+1)}(\lambda )  \notag \\
&&-{\tt a}(\lambda q){\tt d}(\lambda)\det_{2l-1}D_{(1,2),(1,2)}(\lambda )+t(\lambda )\det_{2l}D_{1,1}(\lambda )\text{,}
\end{eqnarray}
where $D_{(h,k),(h,k)}(\lambda )$ denotes the $(2l-1)\times (2l-1)$
sub-matrix of $D(\lambda )$ obtained removing the rows and columns $h$ and $k$, plus the formulae\footnote{They follow from Lemma 3 of \cite{GN10} thanks to the \textit{tridiagonality} of these matrices.}:
\begin{align}
& \left. \det_{2l}\overline{D}_{1,1}(\lambda )=\det_{2l}%
D_{1,1}(\lambda ),\right.  \\
& \left. \det_{2l-1}\overline{D}_{(1,2),(1,2)}(\lambda
)=\det_{2l-1}D_{(1,2),(1,2)}(\lambda ),\right.  \\
& \left. \det_{2l-1}\overline{D}_{(1,2l+1),(1,2l+1)}(\lambda
)=\det_{2l-1}D_{(1,2l+1),(1,2l+1)}(\lambda ).\right.
\end{align}
Finally, we can define our polynomial solution \rf{Q_t-definition} of the Baxter equation \rf{tq-Baxter} by:
\begin{equation}
Q_{t}(\lambda )\equiv \lambda ^{a_{t}}\text{$\overline{\text{\textsc{C}}}$}_{1,1}(\lambda ),
\end{equation}
where $q^{-a_{t}}\equiv q^{2\bar{\SRN}_{1,1}}\varphi$ with $a_{t}\in\{0,..,2l\}$ and $b_{t}\equiv2l\bar\SRN-\bar{\SRN}_{1,1}$.

In the case $\SRN$ even, for $t(\lambda)\in \Sigma _{\ST}^{\theta,k}$ with $k\in \{0,...,2l\}$,  we get
from the Baxter equation and the asymptotics of the coefficients \rf{a-asymp} the following asymptotic
conditions:
\begin{equation}
\lim_{\lambda \rightarrow 0}\frac{Q_{t}(\lambda q)}{Q_{t}(\lambda )}=q^{k},\text{ \ \ }\lim_{\lambda \rightarrow \infty }\frac{Q_{t}(\lambda q)}{Q_{t}(\lambda )}=q^{-(2\SRN +k)}.  \label{cond-asymp+k}
\end{equation}
Then the characterization \rf{Q-even} is just a corollary of \rf{cond-asymp+k}.
\end{proof}
\begin{proof} {\it Uniqueness:} Let $t(\lambda)\in\Sigma _{\ST}$ and let $Q_{t}(\lambda )$ be the solution of the Baxter equation \rf{tq-Baxter} constructed in the previous part of the proof. Then denoting with $\bar Q_{t}(\lambda )\in\mathbb{C}[\lambda]$ any other solution, we can define the q-Wronskian:
\begin{equation}\label{q-W}
W_t(\la)\,=\,Q_t(\la)\bar Q_t(q^{-1}\la)-\bar Q_t(\la)Q_t(q^{-1}\la)\,.
\end{equation}
which by the Baxter equation satisfies the equation:
\begin{equation}
{\tt a}(\lambda )\,W_t(\lambda )\,=\,{\tt d}(\lambda )\,\ST^{+}W_t(\lambda )\,.
\end{equation}
Thanks to the cyclicity $(\ST^{+})^{p}=1$ the average of the above equation reads:
\begin{equation}
({\CA}(\Lambda )-{\CD}(\Lambda ))\,\CW_t(\Lambda)\,=0\,\,\,\,\,\text{ with }\,\,\,\,\,\CW_t(\Lambda)\equiv\prod_{k=0}^{2l}W_t(\lambda q^k),
\end{equation}which by Lemma \ref{Average A-not-D} implies $W(\lambda )\equiv 0$.
It is then easy to see that this implies that $\bar Q_t(\la)\equiv Q_t(\la)$ up to normalization when we have assumed $\bar Q_t(\la)$ free of $p$-strings.
\end{proof}

{\bf Remark 3.} \ The previous theorem implies that for any $t(\lambda )\in \Sigma _{\ST}$ the polynomial solution $Q_{t}(\lambda )$ of the Baxter equation can be related to the determinant of a tridiagonal matrix of finite size $p-1$ for the rational
$\beta^2=p'/p$. It is then relevant to remark that others determinant representations for Baxter equation solutions are known in
literature; for example in the quantum periodic Toda chain. There, these solutions are expressed in terms of linear combinations of
determinants of semi-infinite tridiagonal matrices \cite{GP,GM,KL99}. The above remark explains the interest in a careful analysis  for the SG model
of the limit $\beta^2\rightarrow\bar{\beta}^2$ with $\bar{\beta}^2$ irrational. Indeed, under this limit (i.e. $p',\ p\rightarrow + \infty$) the dimension of the representation in any quantum site diverges as well as the size of the tridiagonal matrix
associated to a solution $Q_{t}(\lambda )$ of the Baxter equation. Then, it is natural to investigate if characterizations like those
encountered for the quantum periodic Toda chain apply to the SG model for irrational  $\bar{\beta}^2$, too. In particular, this is interesting for a potential
reformulation\footnote{The use of NLIE to reformulate the spectrum characterization in integrable quantum models goes back to
\cite{KP91,KBP91}. NLIE reformulations similar to that presented in \cite{KT10} also appear in \cite{BLZ-I, Za, BT06, T}.} of the SG spectrum in terms of solutions of nonlinear integral equations (NLIE) as shown recently in \cite{KT10} for the
quantum Toda chain. On the other hand, it is worth remarking that the SG spectrum admits another NLIE reformulation which is of
DDV-type\footnote{ This type of NLIE are derived by a reformulation of the Bethe ansatz equations and they were introduced and
analyzed for fermionic lattice regularizations of the Sine-Gordon model in \cite{DDV92}-\cite{R01} and in
\cite{FR02I}-\cite{FR03II} for a related model.} and satisfies the feature of completeness as consequence of the Bethe ansatz
completeness proven in Subsection \ref{Compl-BA}. 
\subsection{Construction of $\ST$-eigenstates from $\ST$-eigenvalues}\label{T-eigenstates}
The results of the previous subsections allow both to introduce a complete characterization of the set $\Sigma_\ST$ and to construct one $\ST$-eigenstate $|t\rangle$ for any $t(\la)\in\Sigma_\ST$; indeed, we can prove the following:
\begin{thm}
The set $\Sigma _{\ST}$ coincides with the set of all the $t(\lambda )\in \mathbb{R}[\lambda ^{2},\lambda ^{-2}]_{\bar\SRN/2}$ solutions of the functional
equation:
\begin{equation}\label{I-Functional-eq}
\det_{p}\text{$D$}(\Lambda)=0,\text{ \ \ }\forall \Lambda\in \mathbb{C}.
\end{equation}
Moreover, for any $t(\la)\in\Sigma _{\ST}$ a corresponding $\ST$-eigenstate is characterized by:
\begin{equation}\label{Qeigenstate-odd}
\Psi_t(\eta)\equiv\langle\,\eta _{1},...,\eta _{\SRN}\,|\,t\,\rangle=\prod_{r=1}^{\SRN}Q_t(\eta_r),\,\,\,\,\,\,\,\,\,\,\,\,\,\,  if \, \, \SRN\,\, is\, \, odd,\hspace{3.92cm} 
\end{equation}
\begin{equation}\label{Qeigenstate-even}
\Psi_t(\eta)\equiv\langle\,\eta _{1},...,\eta _{\SRN}\,|\,t\,\rangle=\eta_\SRN^{-k}\prod_{r=1}^{\SRN-1}Q_t(\eta_r),\,\,\,\,\,\,\,\,\,\,\,\,\,\,  if \, \, \SRN\,\, is\, \,even\, \, and\, \, t(\la)\in\Sigma_\ST^{\theta,k}, 
\end{equation}
where $Q_{t}(\la)$ is the Baxter equation solution constructed in Theorem \ref{Derivation-Baxter-functional}.
\end{thm}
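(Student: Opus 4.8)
The plan is to establish the set equality by two inclusions and to obtain the eigenstate formulas as a corollary of the nontrivial inclusion. One direction, $\Sigma_{\ST}\subseteq\{t\in\mathbb{R}[\lambda^{2},\lambda^{-2}]_{\bar\SRN/2}:\det_{p}D(\Lambda)\equiv0\}$, is already contained in Lemma~\ref{Charact-Sigma1}, together with the membership $\Sigma_{\ST}\subset\mathbb{R}[\lambda^{2},\lambda^{-2}]_{\bar\SRN/2}$ recorded in Section~\ref{SOV}. The real work is the reverse inclusion: given a real even Laurent polynomial $t(\lambda)$ of degree $\bar\SRN$ with $\det_{p}D(\Lambda)\equiv0$, I must exhibit a nonzero $\ST$-eigenvector (also a $\Theta$-eigenvector for $\SRN$ even) of eigenvalue $t(\lambda)$ and identify its wave function in the $\SB$-eigenbasis as \rf{Qeigenstate-odd}--\rf{Qeigenstate-even}. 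Since the $\SB$-eigenstates $\langle\,\eta\,|$ form a basis of the representation (Theorem~\ref{SOVthm}) and, in that basis, the $\ST$-eigenvalue problem is \emph{equivalent} to the SOV system \rf{SOVBax1}--\rf{SOVBax2} (the SOV characterization recalled at the beginning of Section~\ref{Compatib}), it is enough to produce one nonzero solution $\Psi_{t}$ of that system.

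First I would note that the construction performed in the proof of Theorem~\ref{Derivation-Baxter-functional} uses the hypothesis $t\in\Sigma_{\ST}$ only through the statement that $D(\lambda)$ has rank $2l$ for all $\lambda\neq0$; the bound $\text{rank}\,D(\lambda)\le 2l$ is exactly $\det_{p}D\equiv0$, while $\text{rank}\,D(\lambda)=2l$ for generic $\lambda$ is the nontriviality of the tridiagonal cofactor $\overline{\textsc{C}}_{1,1}$, a structural fact proved in Appendix~\ref{Co-F Properties} (Lemmas~\ref{Ap1} and~\ref{cofactors-zeros}) that does not use $t$ being a transfer-matrix eigenvalue. Hence that same construction attaches to our $t$ a real polynomial $Q_{t}(\lambda)$ of the form \rf{Q_t-definition}, free from $p$-strings, solving the functional Baxter equation \rf{tq-Baxter}. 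Moreover, for $\SRN$ even $\det_{p}D\equiv0$ forces the leading asymptotics of $t$ into one of the $\Theta$-graded values $\big(\prod_{a}\kappa_{a}\xi_{a}^{\pm1}/i\big)\big(q^{k}\theta+(q^{k}\theta)^{-1}\big)$ --- this is the asymptotic computation of the proof of Lemma~\ref{Charact-Sigma1} read in reverse, the relevant cyclic leading matrix being singular precisely when $\det_{p}D$ vanishes at infinity --- and then $a_{t}=k$, $b_{t}=k\ \mathsf{mod}\ p$ follow from the asymptotic relations \rf{cond-asymp+k} exactly as in the proof of Theorem~\ref{Derivation-Baxter-functional}.

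Next I would put $\Psi_{t}(\eta)$ equal to \rf{Qeigenstate-odd} for $\SRN$ odd and to \rf{Qeigenstate-even} for $\SRN$ even, and verify the SOV system. Evaluating \rf{tq-Baxter} at $\lambda=\eta_{r}$ and multiplying by $\prod_{j\neq r}Q_{t}(\eta_{j})$ (by $\eta_{\SRN}^{-k}\prod_{j\neq r,\,\SRN}Q_{t}(\eta_{j})$ for $\SRN$ even) reproduces \rf{SOVBax1} for each $r\in\{1,\dots,[\SRN]\}$; and for $\SRN$ even $\ST_{\SRN}^{+}\Psi_{t}(\eta)=(q\eta_{\SRN})^{-k}\prod_{r<\SRN}Q_{t}(\eta_{r})=q^{-k}\Psi_{t}(\eta)$, which is \rf{SOVBax2}. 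The only genuinely new point is that $\Psi_{t}\not\equiv0$ on $\mathbb{B}_{\SRN}$: since $Q_{t}$ is free from $p$-strings, for each $a$ the $p$ numbers $Q_{t}(q^{m}\zeta_{a})$, $m\in\BZ_{p}$, are not all zero, so one may pick $m_{a}$ with $Q_{t}(q^{m_{a}}\zeta_{a})\neq0$; at the point $\eta^{\ast}\in\mathbb{B}_{\SRN}$ whose $a$-th coordinate is $q^{m_{a}}\zeta_{a}$ one then has $\Psi_{t}(\eta^{\ast})\neq0$ (the prefactor $\eta_{\SRN}^{-k}$ being nonzero in the even case). By the equivalence recalled in the first paragraph, the vector $|\,t\,\rangle$ determined by $\langle\,\eta\,|\,t\,\rangle=\Psi_{t}(\eta)$ is then a nonzero $\ST$-eigenstate with eigenvalue $t(\lambda)$; this proves $t\in\Sigma_{\ST}$ and the formulas \rf{Qeigenstate-odd}--\rf{Qeigenstate-even}.

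The main obstacle is the logical one flagged above: one must check that the whole $Q_{t}$-construction --- the rank-$2l$ property, the polynomiality and $p$-string-freeness of $Q_{t}$, and, for $\SRN$ even, the asymptotic bookkeeping fixing $k$ --- goes through under the single hypothesis $\det_{p}D\equiv0$, rather than circularly invoking $t\in\Sigma_{\ST}$; concretely this amounts to verifying that the cofactor lemmas of Appendix~\ref{Co-F Properties} rely only on $\det_{p}D\equiv0$ and on the built-in properties \rf{R1}--\rf{R3} of ${\tt a},{\tt d}$. The remaining items --- that the product ansatz solves \rf{SOVBax1}--\rf{SOVBax2}, that $p$-string-freeness of $Q_{t}$ yields a nonvanishing $\Psi_{t}$, and that the $a_{t}=k$ normalization coming from \rf{cond-asymp+k} is exactly what makes \rf{Qeigenstate-even} compatible with \rf{SOVBax2} --- are routine.
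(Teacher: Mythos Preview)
Your proposal is correct and follows essentially the same approach as the paper: one inclusion is Lemma~\ref{Charact-Sigma1}, and for the reverse inclusion the paper simply observes that the $Q_t$-construction of Theorem~\ref{Derivation-Baxter-functional} goes through for any $t$ satisfying $\det_p D\equiv 0$, so that $\Psi_t$ solves the SOV system \rf{SOVBax1}--\rf{SOVBax2}. Your write-up is in fact more careful than the paper's in two places the paper glosses over: you explicitly flag and address the potential circularity (that the cofactor lemmas and the rank-$2l$ argument must use only $\det_p D\equiv 0$, not $t\in\Sigma_{\ST}$), and you supply the nonvanishing of $\Psi_t$ on $\mathbb{B}_{\SRN}$ via $p$-string-freeness of $Q_t$, which the paper leaves implicit.
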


\begin{proof}
The inclusion of $\Sigma _{\ST}$ in that set of solutions of \rf{I-Functional-eq} was proven in Lemma
\ref{Charact-Sigma1}. To prove the reverse inclusion of sets, we have just to observe that the construction of
$Q_{t}(\la)$, presented in Theorem \ref{Derivation-Baxter-functional}, holds for any $t(\lambda )\in
\mathbb{R}[\lambda ^{2},\lambda ^{-2}]_{\bar\SRN/2}$ solution of \rf{I-Functional-eq}. Then, the $t(\la)$ and the $\Psi _{t}(\eta )$, \rf{Qeigenstate-odd} and \rf{Qeigenstate-even}, are
solutions of the system of Baxter like equations \rf{SOVBax1} and \rf{SOVBax2} and so they define a $\ST$-eigenvalue and a corresponding $\ST$-eigenstate.
\end{proof}

\subsection{Simplicity of $\ST$-spectrum}\label{Simple-T}
In this section we show that the spectrum of the transfer matrix $\ST(\lambda )$ is non-degenerate (or simple). Let us start proving the following:
\begin{lem}\label{AD-average-Z}
For almost all the values of the parameters $K_n,X_n,U_n,V_n$, the average values of the monodromy matrix elements $\SA(\la)$ and  $\SD(\la)$ satisfy the inequalities:
\begin{equation}
\CA(Z_{a})\neq \CD(Z_{a}),\,\,\,\,\,\,\,\,\,\,\,\, \forall a\in \{1,...,[\SRN]\},
\end{equation}
where $Z_{a}$ are the zeros of the average value of $\SB(\la)$.
\end{lem}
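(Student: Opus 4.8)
The plan is to establish $\CA(Z_a) \neq \CD(Z_a)$ by exploiting the parametric dependence of the gauge-invariant SOV dates on $K_n, X_n, U_n, V_n$ that was set up in Lemma \ref{Fun-dep} and used already in Proposition \ref{B-simplicity}. By Lemma \ref{Fun-dep} the average values $\CA_\SRN(\Lambda)/\prod_n K_n$ and $\CD_\SRN(\Lambda)/\prod_n K_n$ are Laurent polynomials of maximal degree $1$ in each of the parameters $K_n, X_n, U_n, V_n$, and by the argument in footnote \ref{zeros-func-dependence} the zeros $Z_a$ depend (through elementary symmetric functions) rationally on these parameters as well. Consequently each difference $\CA(Z_a) - \CD(Z_a)$ is, as a function of the parameters, a branch of an algebraic function which is analytic on a Zariski-dense open set; if it can be shown to be not identically zero there, it is non-zero for almost all values of the parameters. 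This reduces the Lemma to producing \emph{one} choice of parameters for which $\CA(Z_a) \neq \CD(Z_a)$ for every $a\in\{1,\dots,[\SRN]\}$ simultaneously.

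First I would invoke Lemma \ref{Average A-not-D}, which already asserts that $\CA(\Lambda)$ and $\CD(\Lambda)$ are not identically equal for almost all values of the parameters; this gives that $\CA(\Lambda)-\CD(\Lambda)$ is a non-zero even Laurent polynomial in $\Lambda$ on a dense set. The remaining point is that this non-zero polynomial does not happen to vanish at the (parameter-dependent) points $Z_a$, i.e. that the zeros of $\CA-\CD$ are generically disjoint from the zeros $\{Z_a\}$ of $\CB$. Here I would use the \emph{same} special slice of parameter space employed in the proof of Proposition \ref{B-simplicity}, namely $K_n/V_n = \pm i^p$ for $n=1,\dots,[\SRN]$: there \rf{CBsimp} gives an explicit factorized form for $\CB_\SRN(\Lambda)$, hence explicit values $Z_a^2 = q\, X_n^2 \, V_n^{\mp 2(-1)^n} \cdots$ (up to the precise bookkeeping of the indices), which can be tuned freely via the still-unconstrained inhomogeneities $X_n$. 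Computing $\CA(\Lambda)$ and $\CD(\Lambda)$ on this slice through \rf{RRel1a} and \rf{L_n} yields explicit Laurent polynomials whose difference can be checked — for $\SRN=1$ directly from \rf{L_n}, and for $\SRN>1$ inductively via Lemma \ref{average value} — not to vanish at the prescribed $Z_a$ for generic $X_n$. This establishes non-vanishing on a sub-slice, hence on a dense open subset of the full parameter space, which is exactly the assertion.

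I expect the main obstacle to be the bookkeeping rather than any conceptual difficulty: one must make sure that on the chosen slice the zeros $Z_a$ are still the generic ones (i.e. that the specialization $K_n/V_n=\pm i^p$ does not itself force a coincidence $Z_a \in \text{\large Z}_{\CA-\CD}$), and that the inhomogeneities $X_n$ retain enough freedom to push $\CA(Z_a)-\CD(Z_a)$ off zero for all $[\SRN]$ indices at once. The cleanest way to handle this is to treat $\CA(Z_a)-\CD(Z_a)$, for each fixed $a$, as a non-trivial rational function of the $X_n$ alone (with the other parameters frozen on the slice) and argue that a finite union of the zero loci of these $[\SRN]$ rational functions is a proper subvariety; its complement, being non-empty, supplies the required point. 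As a fallback, one may prove the statement first for $\SRN=1$ using only \rf{L_n} and then propagate it to all $\SRN$ by induction with Lemma \ref{average value}, exactly mirroring the structure of the proof of Lemma \ref{Average A-not-D}.
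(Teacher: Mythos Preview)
Your overall strategy --- reduce the statement to exhibiting \emph{one} configuration of parameters for which all the $\CF_a\equiv\CA(Z_a)-\CD(Z_a)$ are nonzero, using the rational/algebraic dependence coming from Lemma~\ref{Fun-dep} --- is exactly what the paper does. The gap is in the particular slice you pick.

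The slice $K_n/V_n=\pm i^p$ from the proof of Proposition~\ref{B-simplicity} is precisely the wrong one here. On that slice the averaged Lax matrices $\CL_n(\Lambda)$ become purely off-diagonal (that is how \rf{CBsimp} was obtained). For $\SRN$ odd one then multiplies an odd number of off-diagonal $2\times2$ matrices in \rf{RRel1a}, and the product is again off-diagonal: hence $\CA_\SRN(\Lambda)\equiv0\equiv\CD_\SRN(\Lambda)$ on the whole slice, so $\CA(Z_a)-\CD(Z_a)=0$ trivially for every $a$. This special point therefore gives no information, and your argument that ``the inhomogeneities $X_n$ retain enough freedom to push $\CA(Z_a)-\CD(Z_a)$ off zero'' fails because on this slice the difference is identically zero regardless of the $X_n$. (For even $\SRN$ the situation is not quite so degenerate, but the computation is no longer clean.)

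Your fallback --- induct via Lemma~\ref{average value} --- is closer to what works, but as stated it is missing the key idea. From Lemma~\ref{average value} the difference $\CA_\SRN-\CD_\SRN$ does not factor through $\CA_{\SRN-1}-\CD_{\SRN-1}$ in general; one needs to force the subchain to be \emph{untwisted} so that $\CA_\2=\CD_\2$ and $\CB_\2=\CC_\2$ there. That is exactly the slice the paper takes: set $V_n=1$ for all $n$ and $U_n=1$ for $n\ge2$, keeping only $U_1$ twisted. Then, splitting off site~$1$ as subchain~$\1$, the recursion collapses to
\[
\CF_a=(1+K_1^2)(U_1-U_1^{-1})\,\CA_{\2,\,\SRN-1}(Z_a),
\]
and the problem reduces to showing $\CA_{\2,\,\SRN-1}(Z_a)\neq0$. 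The paper finishes this by noting that $\CA_{\2,\,\SRN-1}(Z_a)=0$ would force $Z_a$ to be a \emph{double} zero of ${\rm det_q}\CM_{\2,\,\SRN-1}(\Lambda)$, which for generic $K_n,X_n$ has only simple zeros by its explicit product form. The moral: the special configuration must keep a single site genuinely twisted (to make the prefactor $U_1-U_1^{-1}$ nonzero) while making the rest of the chain untwisted (so the recursion factors); the Proposition~\ref{B-simplicity} slice does the opposite and kills both $\CA$ and $\CD$.
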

\begin{proof}
We have to prove that the functions:
\begin{equation}
\CF_a(K_n,X_n,U_n,V_n)\equiv\CA_\SRN(Z_{a})-\CD_\SRN(Z_{a}),\,\,\,\,\,\, \forall a\in \{1,...,[\SRN]\}
\end{equation}
are nonzero for almost all the values of the parameters $K_n,X_n,U_n,V_n$. The Lemma \ref{Fun-dep} and
the functional dependence\footnote{See footnote \ref{zeros-func-dependence}.} of the $Z_1^{},\dots,Z_{[\SRN]}^{}$ w.r.t. these parameters implies
that it is sufficient to show that the functions $\CF_a$ are nonzero for some special value of the parameters in order to prove our Lemma.
Note that the following identities hold:
\begin{eqnarray}\label{special-choice}
\CF_a\mid_{U_n=1\,\, \forall\,n\in\{2,...,\SRN\},\,V_n=1\,\, \forall\,n\in\{1,...,\SRN\}}=(1+K_1^2)(U_1-U_1^{-1})\CA_{\2,\,\SRN-1}(Z_{a}),
\end{eqnarray}
Here, we have used the decomposition of the chain in a first subchain $\1$, formed by the site 1, and a second subchain $\2$, formed by the remaining sites. Moreover, we have used the identities:
\begin{eqnarray}
(\CA_{\1,\,1}-\CD_{\1,\,1})&=&(1+K_1^2)(U_1-U_1^{-1}),\,\,\quad\quad\quad\CB_{\1,\,1}(\Lambda)=\CC_{\1,\,1}(\Lambda), \\ 
\CB_{\2,\,\SRN-1}(\Lambda)&=&\CC_{\2,\,\SRN-1}(\Lambda),\,\,\,\quad\quad\quad\quad\quad\quad\CD_{\2,\,\SRN-1}(\Lambda)=\CA_{\2,\,\SRN-1}(\Lambda),
\end{eqnarray}
which follows by using Lemma \ref{average value} and \rf{L_n} for the special choice of the parameters done in \rf{special-choice}. Then for $K_1\neq\pm i$  and $U_1\neq\pm 1$, we have only to show that we can always take $\CA_{\2,\,\SRN-1}(Z_{a})\neq0$ for any $a\in\{1,...,[\SRN]\}$.

Note that from the identities:
\begin{eqnarray}
{\rm det_q}\CM_{\2,\,\SRN-1}(\Lambda)&=&(\CA_{\2,\,\SRN-1}(\Lambda))^2-(\CB_{\2,\,\SRN-1}(\Lambda))^2,\,\,\,\,\\
\CB_\SRN(\Lambda)&=&\CA_{\1,\,1}\CB_{\2,\,\SRN-1}(\Lambda)+\CB_{\1,\,1}(\Lambda)\CA_{\2,\,\SRN-1}(\Lambda),
\end{eqnarray}
we have that $\CA_{\2,\,\SRN-1}(Z_{a})=0$ if and only if $Z_{a}$ is a double zero of ${\rm det_q}\CM_{\2,\,\SRN-1}
(\Lambda)$. However, this is not the case in general because the function ${\rm det_q}\CM_{\2,\,\SRN-1}
(\Lambda)$ has not double zeros for general values of the parameters $X_n$ and $K_n$; as it is clear from the formula:
\begin{equation}
{\rm det_q}\CM_{\2,\,\SRN-1}(\Lambda)\equiv\prod_{h=\pm 1}\prod_{n=2}^{N}K_{n}^{2}(\Lambda /M_{n,h}-M_{n,h}/\Lambda ), \quad M_{n,\pm }\equiv\pm i^p K_{n}^{\pm 1}X_{n}.
\end{equation}
which is obtained averaging the quantum determinant \rf{q-det-f}.
\end{proof}
\begin{thm}
For almost all the values of the parameters $K_n,X_n,U_n,V_n$ of a twisted representation, the spectrum of
$\ST(\lambda )$ is simple.
\end{thm}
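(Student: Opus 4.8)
The plan is to derive simplicity from the SOV description of the $\ST$-eigenstates, using self-adjointness only to pass from ``no repeated eigenvalue'' to ``every $\ST$-eigenspace is one-dimensional''. Since $\ST(\lambda)$ is self-adjoint for real $\lambda$ (Lemma~1), it is diagonalizable, so it suffices to show that for each $t(\lambda)\in\Sigma_\ST$ the corresponding $\ST$-eigenspace is one-dimensional. I would work inside a cyclic SOV representation, which by Theorem~\ref{SOVthm} exists for almost all $K_n,X_n,U_n,V_n$: there an $\ST$-eigenstate $|t\rangle$ is faithfully encoded by its wave-function $\Psi_t(\eta)=\langle\,\eta\,|\,t\,\rangle$ on $\mathbb{B}_\SRN$, and $\Psi_t$ is exactly a solution of the separated Baxter-like system \rf{SOVBax1} (together with \rf{SOVBax2} when $\SRN$ is even). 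Hence the goal reduces to showing that, for fixed $t\in\Sigma_\ST$, the solution space of \rf{SOVBax1}--\rf{SOVBax2} is at most one-dimensional.

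Given two solutions $\Psi,\Psi'$, I would introduce for each site $r\in\{1,\dots,[\SRN]\}$ the local $q$-Wronskian in the $r$-th variable, $W_r(\eta)\equiv\Psi(\eta)\Psi'(\ST_r^-\eta)-\Psi'(\eta)\Psi(\ST_r^-\eta)$. Since \rf{SOVBax1} is nothing but the Baxter equation \rf{tq-Baxter} in the variable $\eta_r$ with coefficients ${\tt a}(\eta_r)$, ${\tt d}(\eta_r)$, the same elementary manipulation used in the uniqueness part of Theorem~\ref{Derivation-Baxter-functional} yields the first-order relation ${\tt a}(\eta_r)W_r(\eta)={\tt d}(\eta_r)W_r(\ST_r^+\eta)$. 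Multiplying this identity over the $p$-string $\eta_r\to q^k\eta_r$, $k=0,\dots,2l$, and using the gauge choice $\gamma=\delta=0$ of Subsection~\ref{Choice-gauge}, for which $\prod_k{\tt a}(q^k\eta_r)=\CA_\SRN(Z_r)$ and $\prod_k{\tt d}(q^k\eta_r)=\CD_\SRN(Z_r)$, one obtains $(\CA_\SRN(Z_r)-\CD_\SRN(Z_r))\prod_{k=0}^{2l}W_r((\ST_r^+)^k\eta)=0$ (using cyclicity $(\ST_r^+)^p=1$ to reindex the product). Here Lemma~\ref{AD-average-Z} enters decisively: for almost all $K_n,X_n,U_n,V_n$ one has $\CA_\SRN(Z_r)\neq\CD_\SRN(Z_r)$, so the product vanishes; moreover $\CA_\SRN(Z_r)$ and $\CD_\SRN(Z_r)$ then cannot be simultaneously zero, so ${\tt a}$ or ${\tt d}$ is non-vanishing along the whole string and the first-order relation propagates the vanishing of the product to $W_r\equiv0$ on $\mathbb{B}_\SRN$, for every $r$. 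When $\SRN$ is even, \rf{SOVBax2} gives $\Psi(\ST_\SRN^+\eta)=q^{-k}\Psi(\eta)$ and the same for $\Psi'$, which rigidly fixes the $\eta_\SRN$-dependence and plays the role of $W_\SRN\equiv0$ in the last direction.

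Finally, $W_r\equiv0$ for all $r$ says that the ratio $\Psi'/\Psi$ is invariant under every shift $\ST_r^\pm$ (and under $\ST_\SRN^+$ for $\SRN$ even); since $\mathbb{B}_\SRN$ is connected under these shifts this gives $\Psi'=c\,\Psi$ for a constant $c$, so the $\ST$-eigenspace of $t$ is one-dimensional, and intersecting with the almost-all parameter sets of Theorem~\ref{SOVthm}, Proposition~\ref{B-simplicity} and Lemma~\ref{AD-average-Z} proves the theorem. I expect the last step to be the main obstacle: turning $W_r\equiv0$ into $\Psi'\propto\Psi$ cleanly requires controlling the zeros of the wave-functions, for which I would take the reference state to be the separated eigenstate of \rf{Qeigenstate-odd}--\rf{Qeigenstate-even}, whose zero locus is governed by $Q_t$ and hence, by the $p$-string freeness of $Q_t$ (Theorem~\ref{Derivation-Baxter-functional}) and conditions \rf{R1}--\rf{R3}, is sparse enough that $\Phi\equiv\Psi'-c\,\Psi_t$ vanishes at two consecutive points of each string and is then annihilated by \rf{SOVBax1}; checking that the zeros of ${\tt a}(\lambda)$, ${\tt d}(\lambda)$ and $Q_t(\lambda)$ never obstruct either this string-by-string propagation or the connectedness argument is precisely where \rf{R1}--\rf{R3} together with $\CA_\SRN(Z_r)\neq\CD_\SRN(Z_r)$ are genuinely used; everything else is the bilinear bookkeeping already met in Theorem~\ref{Derivation-Baxter-functional}.
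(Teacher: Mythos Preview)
Your proposal is correct and follows essentially the same route as the paper: introduce the site-wise $q$-Wronskian $W_r$, derive the first-order relation ${\tt a}(\eta_r)W_r(\eta)={\tt d}(\eta_r)W_r(\ST_r^+\eta)$, average it over the $p$-string to factor out $\CA_\SRN(Z_r)-\CD_\SRN(Z_r)$, invoke Lemma~\ref{AD-average-Z}, and then read off the factorized form \rf{Qeigenstate-odd}--\rf{Qeigenstate-even} (with the extra $\eta_\SRN$-argument from \rf{SOVBax2} when $\SRN$ is even). If anything, you are more explicit than the paper on two points it leaves tacit: the passage from the vanishing of the product $\prod_k W_r((\ST_r^+)^k\eta)$ to $W_r\equiv 0$ via propagation along the string, and the control of possible zeros of $\Psi_t$ when converting $W_r\equiv 0$ into $\Psi'\propto\Psi_t$.
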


\begin{proof}
We have to prove that up to normalization for any given $t(\la)\in\Sigma_\ST$ the wave-function \rf{Qeigenstate-odd} and
\rf{Qeigenstate-even} are the only solutions of the system \rf{SOVBax1} and \rf{SOVBax2}.

Let us denote with $\bar\Psi_{t}(\eta)$, $\eta\in\mathbb{B}_{\SRN}$, any other solution corresponding to the same  $\ST$-eigenvalue $t(\la)$. Then, we can define the q-Wronskian:
\begin{equation}\label{q-W}
W_{t,r}(\eta)\,=\,\Psi_{t}(\eta)\,\bar\Psi_{t}(\ST_r^-(\eta))-\bar\Psi_{t}(\eta)\,\Psi_{t}(\ST_r^-(\eta)),\,\,\,\,\forall r\in\{1,...,[\SRN]\}.
\end{equation}
which by the system of Baxter like equations \rf{SOVBax1} satisfy the equations:
\begin{equation}
{\tt a}(\eta_r )\,W_{t,r}(\eta)\,=\,{\tt d}(\eta_r )\,W_{t,r}(\ST_r^{+}(\eta)),\,\,\,\,\forall r\in\{1,...,[\SRN]\}.
\end{equation}
Thanks to the cyclicity $(\ST_r^{+})^{p}=1$, the averages of the above equations read:
\begin{equation}
({\CA}(Z_r)-{\CD}(Z_r))\,\CW_{t,r}(\eta)\,=0\,\,\,\,\,\text{ with }\,\,\,\,\,\CW_{t,r}(\eta)\equiv\prod_{k=0}^{2l}W_{t,r}(\ST_r^{k}(\eta)),
\end{equation}
which by Lemma \ref{AD-average-Z} implies:
\begin{equation}
W_{t,r}(\eta)= 0\,\,\, \leftrightarrow\,\,\,\frac{\bar\Psi_{t}(\ST_r^+(\eta))}{\bar\Psi_{t}(\eta)}=\frac{Q_{t}(\eta_r q)}{Q_{t}(\eta_r)},\,\,\,\,\,\,\,\,\forall \eta\in\mathbb{B}_{\SRN}\,\,\,\text{and}\,\,\,\forall r\in\{1,...,[\SRN]\}.
\end{equation}
In the case $\SRN$ odd, these identities just imply the uniqueness of the solution of the system
\rf{SOVBax1} and its factorized form
given in \rf{Qeigenstate-odd}. In the case $\SRN$ even, they imply the factorization:
\begin{equation}\label{Qeigens-even}
\bar\Psi_t(\eta)=\bar f_t(\eta_\SRN)\prod_{r=1}^{\SRN-1}Q_t(\eta_r).
\end{equation}
Then we get $\bar f_{t}(\eta _{\SRN})\propto \eta _{\SRN}^{-k}$ as
the unique solution of the equation \rf{SOVBax2} in $\eta _{\SRN}$ for $t(\la)\in\Sigma_\ST^{\theta,k}$.
\end{proof}
{\bf Remark 4.} \  It is worth to point out that for the twisted representations of the SG model the spectrum of the transfer matrix is simple both for $\SRN$ odd and even. This is due to the fact that by
definition for $\SRN$ even these representations are characterized by an average value of the $\Theta$-charge different from 1, i.e. $\theta^p\neq1$.
\subsection{Completeness of the Bethe ansatz}\label{Compl-BA}
An important consequence of the previous analysis is that it naturally leads to the complete characterization of the transfer
matrix spectrum (eigenvalue and eigensate) in terms of real polynomial solutions of the associated Baxter functional equation.

Let us observe that Theorem \ref{Derivation-Baxter-functional} implies that to any $t(\lambda )\in \Sigma
_{\ST}$ it is associated one and only one self-complex-conjugate solution\footnote{It means that the tuple $(\pm\la_1,\dots,\pm\la_{2l\bar\SRN-b})$ is invariant up to reordering under complex conjugation.} $(\pm\la_1,\dots,\pm\la_{2l\bar\SRN-b})$ of the following system of Bethe equations:
\begin{equation}\label{eq-Bethe}
\frac{{\tt a}(\pm\la_k)}{{\tt d}(\pm\la_k)}=-q^{2a}\prod_{h=1}^{2l\bar\SRN-b}\frac{(q^2\la_{k}^{2}-\la_{h}^{2})}{(\la_{k}^{2}/q^2-\la_{h}^{2})},\,\,\,\,\,\,\forall  k\in\{1,...,2l\bar\SRN-b\}.
\end{equation}
Conversely, let us consider a self-complex-conjugate solution of the Bethe system of equations \rf{eq-Bethe}, then it defines uniquely a real polynomial $Q(\la)$ by the equation \rf{Q_t-definition}. Now, by using $Q(\la)$ we can construct the function:
\begin{equation}\label{TfromQ}
t(\la)\,=\,({\tt a}(\la)Q(q^{-1}\la)+{\tt d}(\la)Q(q\la))/Q(\la)\,.
\end{equation}
which, thanks to the Bethe equations \rf{eq-Bethe}, is nonsingular for $\la=\pm\la_k,\,\,\forall k\in\{1,
\dots,2l\bar\SRN-b\}$ and is a real even Laurent polynomial of degree $\bar\SRN$ in $\la$.  Moreover, we
can also  uniquely construct a state $|\,t\,\rangle$ by inserting $Q(\la)$ in equation \rf{Qeigenstate-odd}
for $\SRN$ odd (and \rf{Qeigenstate-even} for $\SRN$ even). Then, $t(\la)$ and the wave-function $\Psi _{t}(\eta
)\equiv\langle\eta _{1},...,\eta _{\SRN}\,|\,t\,\rangle$ satisfies by definition the system of Baxter like
equations \rf{SOVBax1} (and
\rf{SOVBax2} for $\SRN$ even). This implies that starting from the given solution of the Bethe equation we
have uniquely
reconstructed the $\ST$-eigenvalue $t(\la)$ and the corresponding $\ST$-eigenstate $|\,t\,\rangle$.
This  establishes a one-to-one correspondence between the above solutions
to \rf{eq-Bethe} and the spectrum of the transfer matrix
({\it Completeness of the Bethe ansatz}).

\subsection{Reconstruction of $\SQ$-operator}\label{Q-op-def}
In the previous sections, we have given a complete and operative characterization of the spectrum of the
transfer matrix $\ST(\la)$ starting from its SOV characterization. It is worth pointing out
that we have derived our results without any need to introduce a $\SQ$-operator. Anyhow, it is worth noticing that our
construction has as built-in property the existence and characterization of the $\SQ$-operator.
This feature is even more relevant in the case of twisted representations of the SG model where
a direct construction by cyclic dilogarithm functions can be applied only on the sub-variety of representations generated by points on algebraic curves.  

\begin{defn}
Let $\mathsf{Q}(\lambda )$ be the operator family defined by\footnote{Here, we have decided to define $\bar{Q}_{t}(\lambda )$ by multiplying the function $Q_{t}(\lambda )$ for the $p$-string $\la^p$ when $a_{t}$ is odd. This is done to get that $\bar{Q}_{t}(\lambda )$  is always a real {\bf even} polynomial in $\la$ which is still a solution of \rf{tq-Baxter}.}: 
\begin{equation}
\mathsf{Q}(\lambda )|t\rangle \equiv\bar{Q}_{t}(\lambda )|t\rangle, \text{ \ \ where \ \ } \bar{Q}_{t}(\lambda )\equiv \la^{p(1-\rm e_{a_{t}})}Q_{t}(\lambda ),\,\,\,\,\,\forall t(\lambda
)\in \Sigma _{\ST}.
\end{equation}
Here, $Q_{t}(\lambda )$ is the real polynomial corresponding to $t(\lambda )$ by the injection defined in
Theorem \ref{Derivation-Baxter-functional} and $|t\rangle$ is the corresponding $\ST$-eigenstate.
\end{defn}

Then the following theorem holds:

\begin{thm}
The operator family $\mathsf{Q}(\lambda )$ is a Baxter $\SQ$-operator:
\begin{description}
\item[(A)] $\mathsf{Q}(\lambda )$ and $\ST(\lambda )$ satisfy the commutation
relations: 
\begin{equation}
[\mathsf{Q}(\lambda ),\ST(\mu )]=[\mathsf{Q}(\lambda ),\mathsf{Q}(\mu
)]=0\text{ \ \ }\,\,\,\,\,\,\,\,\,\,\,\,\,\,\,\,\,\,\,\forall \lambda ,\mu \in \mathbb{C}\text{,}
\end{equation}
plus the Baxter equation: 
\begin{equation}
\ST(\lambda )\mathsf{Q}(\lambda )={\tt a}(\lambda )\mathsf{Q}(\lambda
q^{-1})+{\tt d}(\lambda )\mathsf{Q}(\lambda q)\text{ \ \ }\,\,\,\,\,\forall \lambda \in 
\mathbb{C},
\end{equation}
and for $\SRN$ even: 
\begin{equation}
[\mathsf{Q}(\lambda ),\Theta]=0\text{ \ \ }\,\,\,\,\,\text{ \ \ }\,\,\,\,\,\forall \lambda \in 
\mathbb{C}.
\end{equation}
\item[(B)] $\mathsf{Q}(\lambda )$ is a polynomial of degree $2l(\bar\SRN+1)$ in $\lambda^2$: 
\begin{equation}
\mathsf{Q}(\lambda )\equiv \sum_{n=0}^{2l(\bar\SRN+1)}\mathsf{Q}_{n}\lambda ^{2n},
\end{equation}
with coefficients $\mathsf{Q}_{n}$ self-adjoint operators.
\end{description}
\end{thm}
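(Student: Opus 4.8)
The plan is to exploit the fact that, by the simplicity theorem of Subsection~\ref{Simple-T} together with the construction of $\ST$-eigenstates in Subsection~\ref{T-eigenstates}, the family $\{\,|t\rangle : t(\la)\in\Sigma_\ST\,\}$ is a basis of the representation space, and that since the commuting family $\{\ST(\la)\}_{\la\in\mathbb R}$ is self-adjoint (Lemma~1) this basis may be chosen orthonormal. The operator $\mathsf{Q}(\la)$ is by its very definition diagonal in this basis, with eigenvalue $\bar Q_t(\la)=\la^{p(1-{\rm e}_{a_t})}Q_t(\la)$ on $|t\rangle$, where $Q_t$ is the polynomial of Theorem~\ref{Derivation-Baxter-functional}. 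Thus everything reduces to statements about the scalar functions $\bar Q_t(\la)$.

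For part \textbf{(A)}, I would argue that $\ST(\mu)$, $\mathsf{Q}(\la)$ and $\mathsf{Q}(\mu)$ are all diagonal in the common eigenbasis $\{|t\rangle\}$ (with eigenvalues $t(\mu)$, $\bar Q_t(\la)$, $\bar Q_t(\mu)$ respectively), hence pairwise commute; and for $\SRN$ even $\Theta$ commutes with $\ST(\mu)$ (Proposition~6 of \cite{NT}, recalled above), so by the simplicity of the $\ST$-spectrum $\Theta$ too is diagonal in $\{|t\rangle\}$, whence $[\mathsf{Q}(\la),\Theta]=0$. The operator Baxter equation reduces, on each $|t\rangle$, to the scalar identity $t(\la)\bar Q_t(\la)={\tt a}(\la)\bar Q_t(\la q^{-1})+{\tt d}(\la)\bar Q_t(\la q)$; this follows from the Baxter equation \rf{tq-Baxter} for $Q_t$ established in Theorem~\ref{Derivation-Baxter-functional}, together with the observation that the prefactor $\la^{p(1-{\rm e}_{a_t})}$ is invariant under $\la\mapsto q\la$ because $q^p=1$, so that $\bar Q_t$ still solves \rf{tq-Baxter}.

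For part \textbf{(B)}, I would first note that $\bar Q_t(\la)$ is by construction a real even polynomial in $\la$ (the factor $\la^p$ restoring parity when $a_t$ is odd), and then extract a \emph{uniform} degree bound: combining $0\le a_t\le 2l$ and $0\le b_t\le 2l\bar\SRN$ from \rf{Q_t-definition} with the refinement \rf{Q-even} in the even case, one gets $\deg_{\la^2}\bar Q_t\le 2l(\bar\SRN+1)$ for every $t$. Consequently $\mathsf{Q}(\la)=\sum_{n=0}^{2l(\bar\SRN+1)}\mathsf{Q}_n\la^{2n}$, where $\mathsf{Q}_n$ is the operator diagonal in $\{|t\rangle\}$ with eigenvalue the coefficient of $\la^{2n}$ in $\bar Q_t$. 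These coefficients are real, since $\bar Q_t$ has real coefficients (its zero set $\{\la_h^2\}$ being self-complex-conjugate); hence, in the orthonormal basis $\{|t\rangle\}$, each $\mathsf{Q}_n$ has real eigenvalues and is therefore self-adjoint.

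The bulk of the argument is bookkeeping once the simplicity of the spectrum and the $Q_t$-construction are available; I expect the only genuinely delicate points to be (i) the verification that the $\ST$-eigenbasis can be taken orthonormal, which uses the self-adjointness of the whole commuting family $\{\ST(\la)\}_{\la\in\mathbb R}$ and, for $\SRN$ even, the compatibility of this basis with the $\Theta$-grading, and (ii) the parity and degree accounting for $\bar Q_t$ — in particular checking in all four cases ($a_t$ even or odd, $\SRN$ odd or even, using \rf{Q-even}) that the prefactor $\la^{p(1-{\rm e}_{a_t})}$ keeps $\deg_{\la^2}\bar Q_t$ within $2l(\bar\SRN+1)$. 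I would treat that degree estimate as the main, if elementary, technical step.
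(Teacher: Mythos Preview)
Your approach is essentially the same as the paper's: the paper's proof simply notes that the self-adjointness of $\ST(\la)$ makes $\mathsf{Q}(\la)$ well defined on an eigenbasis, that the commutation relations and Baxter equation are then trivial consequences of Definition~1 together with the simplicity of the $\ST$-spectrum, and that the reality $(\bar Q_t(\la))^\ast=\bar Q_t(\la^\ast)$ forces the coefficients $\mathsf{Q}_n$ to be self-adjoint. You reproduce this argument while spelling out points the paper leaves implicit --- the orthonormality of the eigenbasis needed for ``real eigenvalues $\Rightarrow$ self-adjoint'', the $\Theta$-commutation via simplicity, and the uniform degree bound on $\bar Q_t$ --- so there is no substantive difference in strategy.
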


\begin{proof}
Note that the self-adjointness of the transfer matrix $\ST(\lambda )$ implies
that $\mathsf{Q}(\lambda )$ is well defined being its action defined on
a basis. The commutation relations and the Baxter equation are trivial consequences of Definition 1, thanks to the simplicity of the $\ST$-spectrum. Finally, the reality condition of the $\SQ$-eigenvalues:
\begin{equation}
\left( \bar{Q}_{t}(\lambda )\right) ^{\ast }=\bar{Q}_{t}(\lambda ^{\ast })\text{ \ }%
\forall \lambda \in \mathbb{C}
\end{equation}
implies that the operators $\mathsf{Q}_{n}$ are self-adjoint.
\end{proof}
{\bf Remark 5.} In \cite{Ba04},  Baxter has extended the construction of the $\SQ$-operator by gauge transformations of \cite{BS} to the $\tau_2$-model for general cyclic representations not restricted to those parameterized by points on algebraic curves. The main tool there was a generalization of the discrete dilogarithm functions. In particular, he has remarked that asking the cyclicity only for the products of couples of these functions the $\SQ$-operator can be still well defined for general cyclic representations of the model. Note that there are no doubts that this construction can be adapted to our general representations of the SG model leading to the representation of the $\SQ$-operator in the basis \rf{reprdef} of the representation. However, it is also clear that this last characterization of $\SQ$ cannot add any information to our construction of the transfer matrix spectrum but it can be potentially useful for more  {\it physical aims}. In particular, we can try to use it to investigate the lattice dynamics \cite{FV,BKP93} associated to the general cyclic representations of the SG model. This can be done following the the same line drawn in \cite{NT}, i.e. showing that the discrete evolution operator can be written in terms of the $\SQ$-operator. Here, the main {\it physical aim} is to prove that these representations of the SG model define lattice regularizations of the Sine-Gordon model for general $\alpha$-sectors of the quantum theory (see Section 2.2 of \cite{NT} for the definition).
\appendix
\section{Properties of the cofactors \textsc{C}$_{i,j}(\protect\lambda )$}\label{Co-F Properties}

\setcounter{equation}{0}

Let $t(\la )$ be just a real even Laurent polynomial of degree $\bar\SRN$ in $\la$, then the cofactors of the matrix $D(\la )$ satisfy the following:
\begin{lem}
\label{Ap1} The cofactors $\text{\textsc{C}}_{i,j}(\lambda )$ are even Laurent polynomials of maximal degree\footnote{The $a_{i,j}$ and $b_{i,j}$ are
nonnegative integers and by definition $\la_{(i,j),h}\neq 0$ for any $h\in \{1,...,2l\bar\SRN-(a_{i,j}+b_{i,j})\}$.}
$2l\bar\SRN$:
\begin{eqnarray}\label{general-cofactor}
\text{\textsc{C}}_{i,j}(\lambda ) &=&\text{\textsc{c}}_{i,j}\lambda
^{-2l\bar\SRN+2a_{i,j}}\prod_{h=1}^{2l\bar\SRN-(a_{i,j}+b_{i,j})}(\lambda^2-\lambda
_{{(i,j)},h}^2),
\end{eqnarray}
which satisfy the following properties: 
\begin{equation}
\text{\textsc{C}}_{h+i,k+i}(\lambda )=\text{\textsc{C}}_{h,k}(\lambda q^{i})%
\text{ \ \ \ }\forall i,h,k\in \{1,...,p\}.
\label{cofactors-diagonal}
\end{equation}
and
\begin{equation}\label{cofactor-cc}
\left( \text{\textsc{C}}_{1,1}(\lambda )\right) ^{\ast }\equiv \text{\textsc{%
C}}_{1,1}(\lambda ^{\ast }),\text{ \ \ \ }\left( \text{\textsc{C}}%
_{1,2}(\lambda )\right) ^{\ast }\equiv \text{\textsc{C}}_{1,2l+1}(\lambda
^{\ast }).
\end{equation}
\end{lem}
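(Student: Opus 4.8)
\emph{Approach and the easy part.} These are purely properties of the matrix $D(\lambda )$ of \rf{D-matrix}, valid for any real even Laurent polynomial $t(\lambda )$ of degree $\bar\SRN$ together with the fixed SOV coefficients ${\tt a},{\tt d}$ of Subsection~\ref{Choice-gauge}, so nothing about the spectral problem is used. I would establish, in order, the normal form and degree bound \rf{general-cofactor}, then the cyclic-shift identity \rf{cofactors-diagonal}, and finally the conjugation relations \rf{cofactor-cc} (the last one invoking the second). For \rf{general-cofactor}: since ${\tt a}(\lambda )$, ${\tt d}(\lambda )$ and $t(\lambda )$ are even Laurent polynomials supported on $\lambda ^{-\bar\SRN},\dots ,\lambda ^{\bar\SRN}$, every entry of $D(\lambda )$ has this property, hence so does any product of $2l$ of them, hence so does the signed sum $\text{\textsc{C}}_{i,j}(\lambda )=(-1)^{i+j}\det_{2l}D_{i,j}(\lambda )$, which is thus an even Laurent polynomial supported on $\lambda ^{-2l\bar\SRN},\dots ,\lambda ^{2l\bar\SRN}$. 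Writing it as $\lambda ^{-2l\bar\SRN}$ times an ordinary polynomial $P$ of degree $\le 2l\bar\SRN$ in $\lambda ^{2}$ and factoring off the zero of $P$ at the origin (of order $a_{i,j}$), the leading coefficient $\text{\textsc{c}}_{i,j}$, and the remaining nonzero roots $\lambda _{(i,j),h}^{2}$ — with $b_{i,j}\ge 0$ recording the top-degree defect $2l\bar\SRN-\deg P$ — yields \rf{general-cofactor}.

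\emph{Cyclic shift \rf{cofactors-diagonal}.} The structural input, read directly off \rf{D-matrix}, is $D(\lambda q)_{i,j}=D(\lambda )_{i+1,j+1}$ for all $i,j$ with indices taken mod $p$. Consequently, deleting row $i$ and column $j$ from $D(\lambda q)$ produces exactly the submatrix of $D(\lambda )$ obtained by deleting row $i+1$ and column $j+1$, up to the order-preserving relabeling $a\mapsto a+1$ of the $2l$ surviving rows and, separately, of the $2l$ surviving columns. That relabeling is a $2l$-cycle on positions, hence an odd permutation, but it acts identically on rows and on columns, so its sign cancels in the $2l\times 2l$ determinant; combined with $(-1)^{(i+1)+(j+1)}=(-1)^{i+j}$ this gives $\text{\textsc{C}}_{i+1,j+1}(\lambda )=\text{\textsc{C}}_{i,j}(\lambda q)$, and iterating gives \rf{cofactors-diagonal}.

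\emph{Conjugation \rf{cofactor-cc}.} Using $q^{p}=1$ (so $q^{\ast }=q^{-1}$), the reality of the coefficients of $t$, and \rf{R1} (which says that ${\tt a}$ and ${\tt d}$ have complex-conjugate coefficient sequences), an entrywise check shows that the matrix $D^{\ast }(\lambda )$ obtained by conjugating all coefficients of $D(\lambda )$ equals the matrix obtained from $D(\lambda q)$ by simultaneously reversing the order of all rows and of all columns; i.e.\ $D^{\ast }(\lambda )=R\,D(\lambda q)\,R$ with $R$ the antidiagonal permutation matrix. Passing to the submatrix with row $i$ and column $j$ deleted, and using that the two reversals contribute $(\det R)^{2}=1$, this becomes $(\text{\textsc{C}}_{i,j}(\lambda ))^{\ast }=\text{\textsc{C}}_{p+1-i,\,p+1-j}(\lambda q)$; absorbing the shift $\lambda \mapsto \lambda q$ by \rf{cofactors-diagonal} returns the index pair $(p+2-i,\,p+2-j)$ mod $p$, and specializing to $(i,j)=(1,1)$ (which gives $(1,1)$) and to $(i,j)=(1,2)$ (which gives $(1,p)=(1,2l+1)$) produces precisely the two statements in \rf{cofactor-cc}.

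\emph{Main obstacle.} Everything reduces to direct computation from \rf{D-matrix}, \rf{R1}, and the parity of the entries; the only genuinely delicate point is the sign bookkeeping in the last two steps — confirming that the cyclic relabeling of the $2l$ surviving indices (respectively the simultaneous row/column reversal) is a permutation whose effect cancels between rows and columns, and that the $(-1)^{i+j}$ prefactors align correctly under the substitution $i\mapsto p+1-i$.
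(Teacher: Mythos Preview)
Your proof is correct, and for the first two claims it matches the paper's argument essentially verbatim (the paper dismisses \rf{general-cofactor} as ``trivial'' and says \rf{cofactors-diagonal} ``are proven exchanging rows and columns in the determinants'', which is precisely what you flesh out). One cosmetic slip: the relabeling $a\mapsto a+1$ is not literally order-preserving on the surviving index set when $i<p$ (it sends $p\mapsto 1$), and in the boundary case $i=p$ it \emph{is} the identity rather than a $2l$-cycle --- but then the cofactor sign $(-1)^{i+j}$ itself jumps by $(-1)^{p-1}=1$ times the column contribution, so the conclusion is unaffected. Worth tightening that sentence.

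For \rf{cofactor-cc} your route genuinely differs from the paper's. You prove the single global identity $D^{\ast}(\lambda)=R\,D(\lambda q)\,R$ on the full $p\times p$ matrix, deduce $(\text{\textsc{C}}_{i,j}(\lambda))^{\ast}=\text{\textsc{C}}_{p+1-i,\,p+1-j}(\lambda q)$ uniformly, and then specialize via \rf{cofactors-diagonal}. The paper instead works directly with the $2l\times 2l$ minor $D_{1,1}$, establishes its reality by an ad hoc row/column reversal, and for the second identity proves an auxiliary relation $(\det_{2l-1}D_{(1,2),(1,2)}(\lambda))^{\ast}=\det_{2l-1}D_{(1,2),(1,2)}(\lambda^{\ast}/q)$ and feeds it through the explicit cofactor expansions \rf{C_2,1-expan}--\rf{C_1,2-expan}. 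Your approach is cleaner and more uniform --- it gives the conjugation behaviour of \emph{every} cofactor at once and makes the appearance of the shift $\lambda\mapsto\lambda q$ structurally transparent; the paper's approach has the minor advantage of producing the expansions \rf{C_2,1-expan}--\rf{C_1,2-expan} along the way, which it would need anyway for the expansion \rf{detD-exp} of $\det_p D(\Lambda)$ later on.
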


\begin{proof}
The characterization \rf{general-cofactor} of the cofactors trivially follows from being ${\tt a}(\lambda )$, ${\tt d}(\lambda )$ and $t(\lambda )$ even Laurent polynomial of degree $\bar\SRN$ in $\lambda$. Note that by the definition \rf{cofactor-def} of the cofactors $\textsc{C}_{i,j}(\lambda)$ the equations (\ref{cofactors-diagonal}) are simple consequences of $q^{p}=1$\ and are proven exchanging rows and columns in the determinants.

Now let us prove the property \rf{cofactor-cc} for the cofactor \textsc{C}$_{1,1}(\lambda )=\det_{2l}$$D_{1,1}(\lambda )$, where:
\begin{equation}
D_{1,1}(\lambda )\equiv\left\Vert t(\lambda
q^{h})\delta _{h,k}-{\tt a}(\lambda q^{h})\delta _{h,k+1}-{\tt d}(\lambda
q^{h})\delta _{h,k-1}\right\Vert _{1\leq h\leq2l,1\leq k\leq2l},  \label{D11}
\end{equation}
then by the properties under complex conjugation of $t(\la),\,{\tt a}(\la)$ and ${\tt d}(\la)$ it holds:
\begin{equation}
(D_{1,1}(\lambda ))^{*}\equiv\left\Vert t(\lambda^{*}
q^{h})\delta _{h,k}-{\tt d}(\lambda^{*} q^{p-h})\delta _{h,k+1}-{\tt a}(\lambda^*
q^{p-h})\delta _{h,k-1}\right\Vert _{1\leq h\leq2l,1\leq k\leq2l},  \label{D11*}
\end{equation}
Let us denote with $D_{1,1}^\CC$ the $2l\times2l$ matrix of columns:
\begin{equation}
\, C_a^{D_{1,1}^\CC}\equiv\,C_{p-a}^{D_{1,1}},\,\,\,\,\,\,\,\,\,\forall a\in\{1,...,2l\},
\end{equation}
where $C_a^{X}$ stays for the column $a$ of the matrix $X,$ and similarly with $D_{1,1}^{\CC,\CR}$ the $2l\times2l$ matrix of rows:
\begin{equation}
R_a^{D_{1,1}^{\CC,\CR}}\equiv\,R_{p-a}^{D_{1,1}^{\CC}},\,\,\,\,\,\,\,\,\,\forall a\in\{1,...,2l\},
\end{equation}
where $R_a^{X}$ stays for the row $a$ of the matrix $X$. Then we get the identity:
\begin{equation}
D_{1,1}^{\CC,\CR}(\la^{*})\equiv
(D_{1,1}(\lambda ))^{*}\,\,\,\rightarrow\,\,\,
\left( \text{\textsc{C}}_{1,1}(\lambda )\right) ^{\ast }\equiv \text{\textsc{%
C}}_{1,1}(\lambda ^{\ast }),
\end{equation}
being the determinant invariant under an even number of column and row exchanges. In a completely similar way, we can prove the identity:
\begin{eqnarray}\label{D_(1,2)-prop}
(\det_{2l-1}D_{(1,2),(1,2)}(\la ))^*\equiv\det_{2l-1}D_{(1,2),(1,2)}(\la^*/q ),
\end{eqnarray}
then by using  the cofactor expansions:
\begin{eqnarray}
\text{\textsc{C}}_{1,2l+1}(\lambda ) &=&\prod_{h=1}^{2l}{\tt a}(\lambda
q^{h})+{\tt d}(\lambda/q )\det_{2l-1}D_{(1,2),(1,2)}(\lambda/q ),
\label{C_2,1-expan} \\
\text{\textsc{C}}_{1,2}(\lambda ) &=&\prod_{h=1}^{2l}{\tt d}(\lambda
q^{h})+{\tt a}(\lambda q)\det_{2l-1}D_{(1,2),(1,2)}(\lambda )\text{%
.}  \label{C_1,2-expan}
\end{eqnarray}
the second identity in \rf{cofactor-cc} is a corollary of \rf{D_(1,2)-prop}.
\end{proof}
\begin{lem}\label{cofactors-zeros}
Let $t(\lambda )\in \Sigma _{\ST}$, then the following identities hold:
\begin{equation}\label{co-degree}
a_{1,1}=a_{1,2},\,\,\,\,\,\,\,\,\,\,b_{1,1}=b_{1,2}\hspace{4.2cm}
\end{equation}
where $a_{i,j}$ and  $b_{i,j}$ are the nonnegative integers defined in \rf{general-cofactor},
\begin{equation}\label{Co-F prop2}
\text{\large{Z}}_{\text{\textsc{C}}_{1,1}(\la)}\cap\text{\large{Z}}_{\text{\textsc{C}}_{1,2}(\la)}\equiv\text{\large{Z}}_{\text{\textsc{C}}_{1,1}(\la)}\cap
\text{\large{Z}}_{\text{\textsc{C}}_{1,2l+1}(\la)},\hspace{1.5cm}
\end{equation}
and if
\begin{equation}\label{Co-F string}
\exists\,\, \text{\large s}_{p,\la_0}\subset\text{\large{Z}}_{\text{\textsc{C}}_{1,1}(\la)}\,\,\,\rightarrow\,\,\, \text{\large s}_{p,\la_0}\cap\text{\large{Z}}_{\text{\textsc{C}}_{1,2}(\la)}\neq\emptyset,\hspace{1.3cm}
\end{equation}
for $\text{\large s}_{p,\la_0}\equiv(\la_0,q\la_0,...,q^{2l}\la_0)$ any $p$-string.
\end{lem}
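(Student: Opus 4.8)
The plan is to deduce all three assertions from two master identities available as soon as $t(\lambda)\in\Sigma_\ST$: first the relation \rf{Inter-step}, which in cross-multiplied form reads $\text{\textsc{C}}_{1,1}(\lambda q)\,\text{\textsc{C}}_{1,1}(\lambda)=\text{\textsc{C}}_{1,2l+1}(\lambda q)\,\text{\textsc{C}}_{1,2}(\lambda)$, and second the Baxter relation \rf{Bax-eq}, $t(\lambda)\text{\textsc{C}}_{1,1}(\lambda)={\tt a}(\lambda)\text{\textsc{C}}_{1,2l+1}(\lambda)+{\tt d}(\lambda)\text{\textsc{C}}_{1,2}(\lambda)$; both follow from the fact that, since $\det_pD\equiv0$ (Lemma \ref{Charact-Sigma1}) and $D(\lambda)$ has rank $2l$ for $\lambda\neq0$, its cofactor matrix has rank $1$. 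To these I add the conjugation properties \rf{cofactor-cc} and the structural conditions \rf{R1}--\rf{R3} on ${\tt a}$ and ${\tt d}$ (in particular that both are free of $p$-strings, by Subsection \ref{Choice-gauge}).

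For the degree identities \rf{co-degree} I would compare leading behaviours in the first identity. By \rf{general-cofactor}, $\text{\textsc{C}}_{1,1}(\lambda)$ scales as $\lambda^{-2l\bar\SRN+2a_{1,1}}$ near $\lambda=0$ and as $\lambda^{2l\bar\SRN-2b_{1,1}}$ near $\lambda=\infty$, and the same for $\text{\textsc{C}}_{1,2}$ and $\text{\textsc{C}}_{1,2l+1}$; by \rf{cofactor-cc}, $a_{1,2l+1}=a_{1,2}$ and $b_{1,2l+1}=b_{1,2}$. Equating the lowest powers of $\lambda$ on the two sides of \rf{Inter-step} gives $4a_{1,1}=2a_{1,2l+1}+2a_{1,2}=4a_{1,2}$, i.e. $a_{1,1}=a_{1,2}$; equating the highest powers gives $b_{1,1}=b_{1,2}$. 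In particular all three cofactors then carry the same number $2l\bar\SRN-a_{1,1}-b_{1,1}$ of nonzero roots (with multiplicity).

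For the zero-set identity \rf{Co-F prop2}: the involution $\lambda\mapsto\lambda^\ast$ fixes $\text{\large{Z}}_{\text{\textsc{C}}_{1,1}}$ and, by \rf{cofactor-cc}, exchanges $\text{\large{Z}}_{\text{\textsc{C}}_{1,2}}$ with $\text{\large{Z}}_{\text{\textsc{C}}_{1,2l+1}}$, so it swaps the two intersections in \rf{Co-F prop2}; it therefore suffices to show $\text{\large{Z}}_{\text{\textsc{C}}_{1,1}}\cap\text{\large{Z}}_{\text{\textsc{C}}_{1,2}}\subseteq\text{\large{Z}}_{\text{\textsc{C}}_{1,2l+1}}$. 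Take $\lambda_0\neq0$ with $\text{\textsc{C}}_{1,1}(\lambda_0)=\text{\textsc{C}}_{1,2}(\lambda_0)=0$: if ${\tt a}(\lambda_0)\neq0$ then \rf{Bax-eq} at $\lambda_0$ gives $0={\tt a}(\lambda_0)\text{\textsc{C}}_{1,2l+1}(\lambda_0)$, so $\text{\textsc{C}}_{1,2l+1}(\lambda_0)=0$; if ${\tt a}(\lambda_0)=0$ then by \rf{R2} the values ${\tt a}(q^k\lambda_0)$, $k\in\{1,\dots,2l\}$, are nonzero and by \rf{R3} the ${\tt d}(q^h\lambda_0)$, $h\in\{0,\dots,2l-1\}$, are nonzero, and I would combine \rf{Inter-step} at $\lambda_0q^{-1}$ (whose left side vanishes since $\text{\textsc{C}}_{1,1}(\lambda_0)=0$) with the cofactor expansions \rf{C_2,1-expan}--\rf{C_1,2-expan} and \rf{Bax-eq} at $\lambda_0q^{-1}$ to still force $\text{\textsc{C}}_{1,2l+1}(\lambda_0)=0$; the conjugation symmetry then upgrades the inclusion to the full equality. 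For \rf{Co-F string}: if a $p$-string $\text{\large s}_{p,\lambda_0}$ lay in $\text{\large{Z}}_{\text{\textsc{C}}_{1,1}}$ but missed $\text{\large{Z}}_{\text{\textsc{C}}_{1,2}}$, then \rf{Inter-step} evaluated along the string (left side zero everywhere on it, $\text{\textsc{C}}_{1,2}$ nowhere zero on it) forces $\text{\large s}_{p,\lambda_0}\subset\text{\large{Z}}_{\text{\textsc{C}}_{1,2l+1}}$ as well, and then \rf{Bax-eq} along the string collapses to ${\tt d}(q^j\lambda_0)\text{\textsc{C}}_{1,2}(q^j\lambda_0)=0$, i.e. $\text{\large s}_{p,\lambda_0}\subset\text{\large{Z}}_{{\tt d}(\lambda)}$, contradicting the $p$-string-freeness of ${\tt d}$.

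The main obstacle is the sub-case ${\tt a}(\lambda_0)=0$ (equivalently, by conjugation, ${\tt d}(\lambda_0)=0$) in the proof of \rf{Co-F prop2}, where \rf{Bax-eq} carries no information at $\lambda_0$ itself; there one must use the isolation/disjointness built into \rf{R2}--\rf{R3}, which confine the zeros of ${\tt a}$ along each $q$-orbit and keep them away from those of ${\tt d}$, together with \rf{Inter-step} at the neighbouring orbit points and, if necessary, a multiplicity count, in order still to pin down $\text{\textsc{C}}_{1,2l+1}(\lambda_0)=0$. Everything else is routine leading-order bookkeeping and repeated evaluation of the two master identities along $q$-orbits.
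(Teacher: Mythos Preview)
Your overall strategy matches the paper's proof closely: both deduce \rf{co-degree} from \rf{Inter-step} together with the conjugation property \rf{cofactor-cc}, and your argument for \rf{Co-F string} is verbatim the paper's (assume the string misses $\text{\large Z}_{\text{\textsc C}_{1,2}}$, use \rf{Inter-step} to push it into $\text{\large Z}_{\text{\textsc C}_{1,2l+1}}$, then \rf{Bax-eq} forces the string into $\text{\large Z}_{{\tt d}}$, contradiction).

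For \rf{Co-F prop2} the paper does not give a self-contained argument either: it simply states that, thanks to the prescriptions \rf{R1}--\rf{R3}, one can follow word by word the proof of Lemma~5 of \cite{GN10}. Your reduction via the involution $\lambda\mapsto\lambda^\ast$ to a single inclusion is correct, and the generic case ${\tt a}(\lambda_0)\neq0$ is handled exactly by \rf{Bax-eq} as you write. The sub-case ${\tt a}(\lambda_0)=0$ that you flag is indeed the only nontrivial point, and it is precisely the place where \rf{R2}--\rf{R3} are needed; the paper does not spell it out here but delegates it to \cite{GN10}. So your identification of the obstacle is accurate, and modulo completing that sub-case (which is the content of the cited lemma) your proof is the paper's proof made explicit.
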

\begin{proof}
Under the assumption $t(\lambda )\in \Sigma _{\ST}$ we have proven the validity of the equations
\rf{Inter-step} and \rf{Bax-eq}. Then the identity \rf{co-degree} follows by using \rf{Inter-step}  and the
property \rf{cofactor-cc}. Concerning the proof of identity \rf{Co-F prop2} thanks to the prescriptions
\rf{R1}-\rf{R3}, chosen in subsection \ref{Choice-gauge}, we can follow word by word the
proof given in Lemma 5 of the paper \cite{GN10}.

Let us assume by absurd that \rf{Co-F string} is not satisfied, then 
\rf{Inter-step} implies that $\text{\large s}_{p,\la_0}\subset\text{\large{Z}}_{\text{\textsc{C}}_{1,2l+1}(\la)}$. Now this last condition implies that \rf{Bax-eq} can be satisfied only if $\text{\large s}_{p,\la_0}\subset\text{\large{Z}}_{{\tt d}(\la)}$, which does not hold by the definition of ${\tt d}(\la)$.
\end{proof}

\section{Asymptotics of Yang-Baxter generators}\label{Asymp-A-D}

\setcounter{equation}{0}

\renewcommand{\su}{{\mathsf u}}
\renewcommand{\sv}{{\mathsf v}}

From the known form of the Lax operator we derive the following asymptotics
for $\lambda \rightarrow +\infty $ and $0$ of the generators of the
Yang-Baxter algebras.
\begin{itemize}
\item[] \textbf{\SRN \ odd:} The leading operators are $\SB_{\SRN}(\lambda )$
and $\SC_{\SRN}(\lambda )$\ with asymptotics:
\begin{align}
\SB_{\SRN}(\lambda )& =\left( \prod_{a=1}^{\SRN}\frac{\kappa _{a}}{i}\right) \left(
\lambda ^{\SRN}\prod_{a=1}^{\SRN}\frac{\sv_{a}^{(-1)^{1+a}}}{\xi _{a}}-\lambda
^{-\SRN}\prod_{a=1}^{\SRN}\xi _{a}\sv_{a}^{(-1)^{a}}\right) +\text{sub-leading}, \\
\SC_{\SRN}(\lambda )& =\left( \prod_{a=1}^{\SRN}\frac{\kappa _{a}}{i}\right) \left(
\lambda ^{\SRN}\prod_{a=1}^{\SRN}\frac{\sv_{a}^{(-1)^{a}}}{\xi _{a}}-\lambda
^{-\SRN}\prod_{a=1}^{\SRN}\xi _{a}\sv_{a}^{(-1)^{1+a}}\right) +\text{sub-leading}.
\end{align}
\item[] \textbf{\SRN \ even:} The leading operators are $\SA_{\SRN}(\lambda )$
and $\SD_{\SRN}(\lambda )$\ with asymptotics:
\begin{align}
\SA_{\SRN}(\lambda )& =\left( \prod_{a=1}^{\SRN}\frac{\kappa _{a}}{i}\right) \left(
\lambda ^{\SRN}\prod_{a=1}^{\SRN}\frac{\sv_{a}^{(-1)^{a}}}{\xi _{a}}+\lambda
^{-\SRN}\prod_{a=1}^{\SRN}\xi _{a}\sv_{a}^{(-1)^{1+a}}\right) +\text{sub-leading},
\label{asymp-A} \\
\SD_{\SRN}(\lambda )& =\left( \prod_{a=1}^{\SRN}\frac{\kappa _{a}}{i}\right) \left(
\lambda ^{\SRN}\prod_{a=1}^{\SRN}\frac{\sv_{a}^{(-1)^{1+a}}}{\xi _{a}}+\lambda
^{-\SRN}\prod_{a=1}^{\SRN}\xi _{a}\sv_{a}^{(-1)^{a}}\right) +\text{sub-leading}.
\end{align}
\end{itemize}
Note that these asymptotics imply for the SOV representation of the
Yang-Baxter generators the following formulae\footnote{%
Here, we denote with $\mathsf{w}^{\text{SOV}}$ the invertible matrix which defines the change of basis in the representation from the basis obtained by the tensor product of the $\sv_n$-eigenbasis to the $\SB$-eigenbasis.}:
\begin{itemize}
\item[] \textbf{\SRN \ odd:}
\begin{equation}
\left( \mathsf{w}^{\text{SOV}}\right) ^{-1}\left(
\prod_{a=1}^{\SRN}\sv_{a}^{(-1)^{1+a}}\right) \mathsf{w}^{\text{SOV}%
}=\prod_{a=1}^{\SRN}\frac{\xi _{a}}{\eta _{a}}.
\end{equation}
\item[] \textbf{\SRN \ even:}
\begin{eqnarray}
\prod_{a=1}^{\SRN}\xi _{a}\left( \mathsf{w}^{\text{SOV}}\right) ^{-1}\Theta \mathsf{w}^{\text{SOV}} &=&\left( \eta _{A}\prod_{a=1}^{\SRN-1}\eta
_{a}\right) \text{\textsf{T}}_{\SRN}^{-}, \\
\prod_{a=1}^{\SRN}\xi _{a}\left( \mathsf{w}^{\text{SOV}}\right) ^{-1}\Theta
^{-1}\mathsf{w}^{\text{SOV}} &=&\left( \eta _{D}\prod_{a=1}^{\SRN-1}\eta
_{a}\right) \text{\textsf{T}}_{\SRN}^{+},
\end{eqnarray}
\end{itemize}
Note that taking the average value of the last two formulae we get for $\SRN$ odd:
\begin{equation}
\prod_{a=1}^{\SRN}\frac{X_{a}}{Z_{a}}=\prod_{a=1}^{\SRN}V_{a}^{(-1)^{1+a}},
\end{equation}%
while for $\SRN$ even:
\begin{equation}\label{ZAD}
Z_{\SA}=\langle \Theta \rangle \prod_{a=1}^{\SRN-1}Z_{a}^{-1}\prod_{a=1}^{\SRN}X_{a},\text{ \ \ \ \ }%
Z_{\SD}=Z_{\SA}\langle \Theta \rangle ^{-2},
\end{equation}
where
\begin{equation}
\langle \Theta \rangle=\prod_{a=1}^{\SRN}V_{a}^{(-1)^{1+a}},
\end{equation}
is the average value of the charge $\Theta$.

\section{Comparison between the $\tau_2$-model and the SG model}

\setcounter{equation}{0}

In this appendix we present the comparison between the SG model and the
so-called $\tau_2$-model introduced and analyzed in the series of papers \cite{BS, BBP90}-\cite{GIPST08}.

\subsection{Lax operators}

The Lax operator which describes the $\tau_2$-model reads:%
\begin{equation}
L_{n}^{\tau_2}\equiv \text{\textsc{A}}_{n}\left( 
\begin{array}{cc}
D_{n}\equiv (d_{n,+}\text{\textsc{X}}_{n}+d_{n,-}\text{\textsc{X}}_{n}^{-1})
& H_{n}\equiv \left( h_{n,-}\text{\textsc{X}}_{n}+h_{n,+}\text{\textsc{X}}%
_{n}^{-1}\right) \text{\textsc{Y}}_{n} \\ 
G_{n}\equiv \left( g_{n,-}\text{\textsc{X}}_{n}+g_{n,+}\text{\textsc{X}}%
_{n}^{-1}\right) \text{\textsc{Y}}_{n}^{-1} & F_{n}\equiv (f_{n,-}\text{%
\textsc{X}}_{n}+f_{n,+}\text{\textsc{X}}_{n}^{-1})%
\end{array}%
\right) \text{\textsc{A}}_{n}^{-1},  \label{L-t2-LSG}
\end{equation}%
where:%
\begin{equation}
\text{\textsc{A}}_{n}\equiv \left( 
\begin{array}{cc}
u_{n} & 0 \\ 
0 & 1%
\end{array}%
\right) 
\end{equation}%
and: 
\begin{equation}
g_{n,+}=f_{n,+}d_{n,-}/h_{n,+},\hspace{2cm}g_{n,-}=f_{n,-}d_{n,+}/h_{n,-}.
\end{equation}%
Here, we have denoted with \textsc{Y}$_{n}\equiv \su_{n}/u_{n}$ and \textsc{X%
}$_{n}\equiv \sv_{n}/v_{n}$ the generators of the local Weyl algebra \textsc{%
Y}$_{n}$\textsc{X}$_{n}=q$\textsc{X}$_{n}$\textsc{Y}$_{n}$ with the
following central values \textsc{X}$_{n}^{p}=$\textsc{Y}$_{n}^{p}=1.$

Then, the Lax operator of the $\tau_2$-model can be related to one of
the SG model by: 
\begin{equation}
L_{n}^{\tau_2}=L_{n}^{SG}\sigma _{1},  \label{Lax-t2-SG}
\end{equation}%
when the parameters in (\ref{L-t2-LSG}) are identified by:%
\begin{eqnarray}
d_{n,-} &=&d_{n,+}^{-1},\,\quad \,\quad \,\quad \,\,\,\,f_{n,-}=f_{n,+}^{-1},
\label{U_q} \\
f_{n,+} &=&-i\lambda _{n}v_{n}^{-1},\,\quad \quad \,d_{n,+}=-i\lambda
_{n}v_{n}, \\
h_{n,-} &=&q^{1/2}\kappa _{n}v_{n},\,\quad h_{n,+}=q^{-1/2}\kappa
_{n}^{-1}v_{n}^{-1},  \label{par-h} \\
g_{n,+} &=&q^{1/2}\kappa _{n}v_{n}^{-1},\,\quad g_{n,-}=q^{-1/2}\kappa
_{n}^{-1}v_{n}.  \label{par-g}
\end{eqnarray}%
It is worth pointing out that in our SG model, we are asking that the
elements of the Lax operator generate the algebra $U_{q}(sl_{2})$ and that
such a requirement implies the relations \rf{U_q}.

It is worth pointing out the different role played by the parameters $u_{n}$
in these two models. From (\ref{L-t2-LSG}), we see that the parameter $%
u_{n}$ enters in the $\tau_2$-Lax operator as a simple gauge while this
is not the case for the SG Lax operator being\footnote{%
The presence of the $\sigma _{1}$ tells us that what is a simple gauge for
the $\tau_2$-model is not a gauge for the SG model and vice versa.} $%
[\sigma _{1},$\textsc{A}$_{n}]\neq 0$. In particular, this means that while
we are free to change uniformly along the chain the average values of the
local generators $\su_{n}$ without any modification in the $\tau_2$%
-transfer matrix this is not the case for the SG transfer matrix.

\subsection{Map from the SG model to the $\protect\tau_2$-model for
even chain}

The relation (\ref{Lax-t2-SG}) directly implies that for an odd chains the
transfer matrices in the two models have different spectrum. In the case of
even chain a more precise analysis is need and it is here developed.

Let us fix $\SRN=2\SRM$ with $\SRM$ positive integer; then, it is simple to
define a map from the generators of the Yang-Baxter algebras in the SG model
to those in the $\tau_2$-model. We have just to point out that the
relation \rf{L-t2-LSG} implies: 
\begin{equation}
\SM^{\tau_2}(\lambda )=\sigma _{1}\Pi _{\tau_2}\left( \SM%
^{SG}(\lambda )\right) \sigma _{1}\,\,\,\longrightarrow \,\,\,\ST^{\tau_2}(\lambda )=\Pi _{\tau_2}\left( \ST^{SG}(\lambda )\right) .
\end{equation}%
where we have defined: 
\begin{equation}
\Pi _{\tau_2}\left( L_{2m-a}^{SG}(\lambda _{2m-a})\right) \equiv \left(
\sigma _{1}\right) ^{1-a}L_{2m-a}^{SG}((-1)^{(1-a)}\lambda _{2m-a})\left(
\sigma _{1}\right) ^{1-a}=\left\{ 
\begin{array}{l}
L_{2m-1}^{SG}(\lambda _{2m-1}),\vspace{0.3cm} \\ 
\left( L_{2m}^{SG}(-\varepsilon \lambda _{2m}^{\ast })\right) ^{\dagger },%
\end{array}%
\right.
\end{equation}%
for $m\in \{1,...,\SRM\}\,\,\text{and }\,a\in \{0,1\}$; note that in the
last equality we have used \rf{Hermit-L}. Then, we can characterize the map $%
\Pi _{\tau_2}$\ by its action on the generators of the local Weyl
algebras:%
\begin{equation}
\Pi _{\tau_2}\equiv \prod\limits_{n=1}^{\SRM}\digamma _{2n},\,\,\,\,%
\text{ where:}\,\,\,\,\,\,\digamma _{n}(\su_{n})=\su_{n}^{-1},\text{ \ \ }%
\digamma _{n}(\sv_{n})=\sv_{n}^{-1}.
\end{equation}%
Until now, the discussion presented is common for both the untwisted and
twisted representations of the SG model. However, the distinction between
these two cases appears evident when we observe that:

\begin{description}
\item[Untwisted representations:] The map $\Pi _{\tau_2}$ \textbf{is}
realized by the unitary transformation $\pi _{\tau_2}$:%
\begin{equation}
\Pi _{\tau_2}(\SM^{SG}(\la))=\pi _{\tau ^{2}}\SM^{SG}(\la)\pi _{\tau ^{2}},\,\,\,\,\text{ }\,\,\pi _{\tau ^{2}}\equiv
\prod\limits_{n=1}^{\SRM}\Omega_{2n},
\end{equation}
after the flipping $\xi_{2n-a}\,\rightarrow\,(-\varepsilon\, )^{(1-a)}\xi_{2n-a}$ of the inhomogeneities. Here, $\varepsilon$ is the sign defined in \rf{cond-T-Normality} and $\Omega_{n}$ are the local unitary transformations defined by the
following action:%
\begin{equation*}
\text{ \ }\Omega_{n}|z_{1},...,z_{N}\rangle \equiv
|z_{1},...,z_{n}^{-1},...,z_{N}\rangle
\end{equation*}%
on the elements of the basis \rf{u-basis} of the space of the representation.

\item[Twisted representations:] The map $\Pi _{\tau_2}$ \textbf{cannot}
be realized by any similarity transformation on the space of the
representation.
\end{description}

Indeed, the distinction between these two cases can be simply understood by
the identities:%
\begin{equation}
\digamma_{n}(\su_{n})=\left(\Omega_{n}\su_{n}\Omega_{n}\right) /u_{n}^{2},%
\text{ \ \ }\digamma_{n}(\sv_{n})=\left(\Omega_{n}\sv_{n}\Omega_{n}\right)
/v_{n}^{2}.
\end{equation}

which tell us that for untwisted representations the local maps $\digamma
_{n}$ leave unchanged the spectrum of the generators of the Weyl algebras
while this is not anymore true for the twisted ones.

In particular, these statements imply that for even chain and
untwisted representations the SG transfer matrix is similar (up to the inhomogeneity flipping) to the $\tau_2$-transfer matrix while this is not anymore the case for twisted
representations.

\textbf{Remark 6.} It is worth remarking that while the spectral problem of
the SG model and the $\tau_2$-model are in general different the method
here introduced to analyzed the SG spectrum can be extended to $\tau_2$%
-model. Such an extension is of sure interest as it should allow to get also for the $\tau_2$-model the strong statements of the simplicity and completeness of the transfer matrix spectrum. 

\end{document}